\numberwithin{equation}{section}
\theoremstyle{plain}
\newtheorem{theorem}{Theorem}
\newtheorem{proposition}{Proposition}
\newtheorem{lemma}{Lemma}
\newtheorem{assumption}{Assumption}
\newtheorem{corollary}{Corollary}
\newcommand{\mbf}{\mathbf}
\newcommand{\beq}{\begin{equation}}
\newcommand{\eeq}{\end{equation}}
\newcommand{\bea}{\begin{eqnarray}}
\newcommand{\eea}{\end{eqnarray}}
\newcommand{\bit}{\begin{itemize}}
\newcommand{\eit}{\end{itemize}}
\newcommand{\ben}{\begin{enumerate}}
\newcommand{\een}{\end{enumerate}}
\newcommand{\bpm}{\begin{pmatrix}}
\newcommand{\epm}{\end{pmatrix}}
\newcommand{\bbm}{\begin{bmatrix}}
\newcommand{\ebm}{\end{bmatrix}}
\newcommand{\eps}{\varepsilon}
\renewcommand{\l}{\left}
\renewcommand{\r}{\right}
\newcommand{\E}[0]{\mathsf{E}}
\newcommand{\Var}[0]{\mathsf{Var}}
\newcommand{\Cov}[0]{\mathsf{Cov}}
\newcommand{\Cor}[0]{\mathsf{Cor}}
\newcommand{\ra}{\rightarrow}
\newcommand{\nn}{\nonumber}
\newcommand{\wh}{\widehat}
\newcommand{\wt}{\widetilde}
\newcommand{\wb}{\overline}
\newcommand\matt[1]{\textcolor{red}{#1}}
\begin{document}

\title{Inference in heavy-tailed non-stationary multivariate time series}
\author{Matteo Barigozzi}
\address{Matteo Barigozzi, University of Bologna, matteo.barigozzi@unibo.it}
\author{Giuseppe Cavaliere}
\address{Giuseppe Cavaliere, University of Bologna, giuseppe.cavaliere@unibo.it}
\author{Lorenzo Trapani}
\address{Lorenzo Trapani, University of Nottingham, lorenzo.trapani@nottingham.ac.uk}
\subjclass{Primary 62M10; Secondary 62G20}
\keywords{Cointegration, heavy tails, randomised test.}

\begin{abstract}

We study inference on the common stochastic trends in a non-stationary, $N$-variate time series $y_{t}$, in the possible presence of heavy tails. We propose a novel methodology which does not require any knowledge or estimation of the tail index, or even knowledge as to whether certain moments (such as the variance) exist or not, and develop an estimator of the number of stochastic trends $m$ based on the eigenvalues of the sample second moment matrix of $y_{t}$. We study the rates of such eigenvalues, showing that the first $m$ ones diverge, as the sample size $T$ passes to infinity, at a rate faster by $O\left(T \right)$ than the remaining $N-m$ ones, irrespective of the tail index. We thus exploit this eigen-gap by constructing, for each eigenvalue, a test statistic which diverges to positive infinity or drifts to zero according to whether the relevant eigenvalue belongs to the set of the first $m$ eigenvalues or not. We then construct a randomised statistic based on this, using it as part of a sequential testing procedure, ensuring consistency of the resulting estimator of $m$. We also discuss an estimator of the common trends based on principal components and show that, up to a an invertible linear transformation, such estimator is consistent in the sense that the estimation error is of smaller order than the trend itself. Finally, we also consider the case in which we relax the standard assumption of \textit{i.i.d.} innovations, by allowing for heterogeneity of a very general form in the scale of the innovations. A Monte Carlo study shows that the proposed estimator for $m$ performs particularly well, even in samples of small size. We complete the paper by presenting four illustrative applications covering commodity prices, interest rates data, long run PPP and cryptocurrency markets.

\end{abstract}

\maketitle

\doublespacing

\section{Introduction}

Since the seminal works by \citet{englegranger}, \citet{SW1988} and %
\citet{johansen1991}, investigating the presence and number $m$ of common
stochastic trends has become an essential step in the analysis of
multivariate time series which are non-stationary over time. Inference on $m$
is of great importance on its own, as it has a \textquotedblleft
natural\textquotedblright\ interpretation in many applications: for example,
it can provide the number of non-stationary factors in Nelson-Siegel type
term structure models; it allows to assess the presence of (long-run)
integration among financial market (\citealp{kasa}), or, in the context of
asset pricing, the number of portfolios that allow hedging of long-run risks
(\citealp{alexander}). In terms of theory, available estimators are based
either on sequential testing (see e.g. \citealp{boswijk2015improved}), or on
information criteria (see \citealp{aznar} and \citealp{qu2007modified}).

With few exceptions, a standard feature of existing techniques is that they
strongly rely on the assumption that some moments of the data (usually at
least the second, or even the fourth) exist. This assumption, however, often
lacks of empirical support. In fact, data exhibiting heavy tails, which do
not have finite second (or even first) moment, are often encountered in many
areas: macroeconomics (see e.g. \citealp{ibragimov}), finance (%
\citealp{rachev2003handbook}), urban studies (\citealp{gabaix}); and also
insurance, actuarial science, telecommunication network traffic and
meteorology (see e.g. the review by \citealp{embrechts}).

Violation of the moment assumptions may result in (possibly severe)
incorrect determination of the number of common trends - see e.g. the
simulations in \citet{caner} and the empirical evidence in \citet{falk}.
Unfortunately, contributions which explicitly deal with inference on common
stochastic trends under infinite variance are rare. \citet{caner} derives
the asymptotic distribution of Johansen's trace test under infinite variance
and shows that it depends on the (unknown) tail index of the data. %
\citet{ling} study the (non-standard) rate of convergence of estimators in
non-stationary VAR models and show that the limiting distributions depend on
the tail index in a non-trivial fashion; similar results are also found in %
\citet{paulauskas} and \citet{fasen}. In the univariate case, %
\citet{jach2004subsampling} and \citet{cavaliere2018unit} show that suitable
(respectively, $M$-out-of-$N$ and wild) bootstrap approaches could be used
to test whether data are driven by a stochastic trend; knowledge of the tail
index is not needed, but extensions of these bootstrap approaches to
multiple time series are not available, and are likely to be extremely
difficult to develop and implement. Distribution-free approaches could also
help overcome this difficulty. \citet{hallin2} (see also \citealp{hallin1})
apply the rank transformation to the VECM residuals, obtaining nuisance-free
statistics, although this approach requires the correct specification of the
VECM.

\medskip

\textit{Contribution}

\medskip

Our key contribution is the estimator of the number $m$ of common trends of
an $N$-variate time series in the possible presence of heavy tails.
Crucially, our procedure does not require any \textit{a priori} knowledge as
to whether the variance is finite or not, or as to how many moments exist,
thus avoiding having to estimate any nuisance parameters or even pre-testing
for (in)finite moments. We also show that, in contrast to most of the
literature on time series with heavy tails, our methodology also applies to
time series with \textit{heterogeneous} innovations. Specifically, we allow
for changes in the scale of the innovations of a very general form, which
covers, e.g., multiple shifts and smooth scale changes. As far as we are
aware, this paper is the first one where heterogeneity in the scale is
allowed under infinite variance.

Details are spelt out in the remainder of the paper; here, we offer a
heuristic preview of how the methodology works. The starting point of our
analysis is a novel result concerning the properties of the sample second
moment matrix of the data in levels. Specifically, we show that the $m $
largest eigenvalues of the matrix diverge to positive infinity, as the
sample size $T $ passes to infinity, faster than the remaining eigenvalues
by a factor (almost) equal to $T$. Importantly, this result always holds,
irrespective of the variance of the innovations being finite or infinite.
Building on this, for each eigenvalue we construct a statistic which
diverges to infinity under the null that the eigenvalue is diverging at a
\textquotedblleft fast\textquotedblright\ rate, and drifts to zero under the
alternative that the relevant eigenvalue diverges at a \textquotedblleft
slow\textquotedblright\ rate. Although the limiting distribution of our
statistic is bound to depend on nuisance parameters such as the tail index,
the relative rate of divergence between the null and the alternative does
not depend on any nuisance parameters. Therefore, in order to construct a
test, we can rely on rates only, and randomise our statistic using a similar
approach to \citet{bandi2014}. Thence, our estimator of $m$ is based on
running the tests sequentially; in this respect, it mimics the well-known
sequential procedure advocated in \citet{johansen1991} for the determination
of the rank of a cointegrated system. Our approach has some commonalities
with the literature on factor models too, where the spectrum of the
covariance matrix of the data is employed to estimate the number of common
factors. \citet{zyr18} propose a methodology to determine the rank of
cointegration in a possibly fractionally integrated system with finite
second moments, based on the eigenvalues of (functions of) the sample
autocovariance matrices (see also \citealp{tu2020error}).

Our methodology has at least four desirable features. Firstly, as mentioned
above, our technique can be applied to data with infinite variance (and even
infinite expectation), with no need to know this \textit{a priori}.
Secondly, our procedure does not require at any stage the estimation of the
tail index of a distribution, which is a notoriously delicate issue.
Thirdly, our procedure does not require the correct specification of the lag
structure of the underlying VECM model, and it is therefore completely
robust to misspecification of the dynamics. Finally, our results are based
only on first order asymptotics, i.e. rates, which makes our procedure
extremely easy to implement in practice.

All these advantages are illustrated via Monte Carlo simulation, where we
show that even in samples of moderate size our method performs extremely
well, and through four empirical applications, which cover commodity prices,
interest rates data, purchasing power parity in the long run, and
cryptocurrency markets.

\medskip

\textit{Structure of the paper and notation}

\medskip

The remainder of the paper is organised as follows. Assumptions and
asymptotics are in Section \ref{theory}. The main results on the number of
common trends and the estimation of the common trends (and associated
loadings) are presented in Section \ref{inftrends}, and extensions are
collected in Section \ref{extensions}. We provide Monte Carlo evidence in
Section \ref{montecarlo}, and we validate our methodology through four real
data applications in Section \ref{empirics}. Section \ref{conclusions}
concludes. Further simulations, and all technical lemmas and proofs, are
relegated to the Supplement.

Throughout the paper, we make use of the following notation. For a given
matrix $A\in 
\mathbb{R}
^{n\times m}$, we denote its element in position $\left( i,j\right) $ as $%
A_{i,j}$; we also let let $\lambda ^{\left( j\right) }\left( A\right) $
denote the $j$-th largest eigenvalue of $A$. We denote with $c_{0},c_{1},...$
positive, finite constants whose value can change from line to line. We
denote the $k$-iterated logarithm of $x$ (truncated at zero) as $\ln _{k}x$
- e.g. $\ln _{2}x=\max \left\{ \ln \ln x,0\right\} $. The backshift operator
is denoted as $L$.

\section{Theory\label{theory}}

Consider an $N$-dimensional vector $y_{t}$ with $MA\left( \infty \right) $
representation%
\begin{equation}
\Delta y_{t}=C\left( L\right) \varepsilon _{t},  \label{ma}
\end{equation}%
where $C\left( L\right) =\sum_{j=0}^{\infty }C_{j}L^{j}$. 
We assume that $y_{0}=0$ and $\varepsilon _{t}=0$ for $t\leq 0$, for
simplicity and with no loss of generality; inspecting the proofs, all the
results derived here can be extended to more general assumptions concerning
e.g. the initial value $y_{0}$. Standard arguments allow to represent (\ref%
{ma}) as%
\begin{equation}
y_{t}=C\sum_{s=1}^{t}\varepsilon _{s}+C^{\ast }\left( L\right) \varepsilon
_{t}  \label{bn}
\end{equation}%
where $C=\sum_{j=0}^{\infty }C_{j}$, $C^{\ast }\left( L\right)
=\sum_{j=0}^{\infty }C_{j}^{\ast }L^{j}$ and $C_{j}^{\ast
}=\sum_{k=j+1}^{\infty }C_{k}$. Equation (\ref{bn}) is derived from the
multivariate Beveridge-Nelson decomposition of the filter $C\left( L\right)$
(see \citealp{watson1994vector}).

We assume that the $N\times N$ matrix $C$ can have reduced rank, say $m$.
This corresponds to assuming that the long-run behavior of the N-dimensional
vector $y_{t}$ is driven by m non-stationary common factors.

\begin{assumption}
\label{as-B}It holds that: (i) $rank\left( C\right) =m$, where $0\leq m\leq
N $; (ii) $\left\Vert C_{j}\right\Vert =O\left( \rho ^{j}\right) $ for some $%
0<\rho <1$.
\end{assumption}

Part \textit{(ii)} of the assumption requires that the $MA$ coefficients $%
C_{j}$ decline geometrically. This assumption is similar to Assumption 1 in %
\citet{caner}, where the $C_{j}$s are assumed to decline at a rate which
increases as the tail index of the innovations $\varepsilon _{t}$ decreases.
The case $m=0$ correspond to $y_{t}$ being (asymptotically) strictly
stationary. Assumption \ref{as-B} is also implied by Assumption 2.1 in %
\citet{ling}, where a finite-order VAR model under the classic $I\left(
1\right) $ conditions stated e.g. in \citet{reisnel} is considered.

\smallskip

On account of the possible rank reduction of $C$, a different formulation of
(\ref{bn}) may be helpful. It is always possible to write $C=PQ$, where $P$
and $Q$ are full rank matrices of dimensions $N\times m$ and $m\times N$
respectively. Defining the $m$-dimensional process $x_{t}=Q\sum_{s=1}^{t}%
\varepsilon _{s}$, and using the short-hand notation $u_{t}=C^{\ast }\left(
L\right) \varepsilon _{t}$, then (\ref{bn}) can be given the following
common trend representation 
\begin{equation}
y_{t}=Px_{t}+u_{t}\text{.}  \label{trends}
\end{equation}

We now make some assumptions on the error term $\varepsilon _{t}$.

\begin{assumption}
\label{as-moments}It holds that: (i) ${\ \varepsilon_{t}, 1\leq t \leq T }$
is an \textit{i.i.d.} sequence, with $\varepsilon_{t}=0$ for all $t<0$ and $%
y_0=0$; (ii) for all nonzero vectors $l\in 
\mathbb{R}
^{N}$, $l^{\prime }\varepsilon _{t}$ has distribution $F_{l\varepsilon }$
with strictly positive density, which is in the domain of attraction of a
strictly stable law $G$ with tail index $0<\eta \leq 2$.
\end{assumption}

As is generally the case in the analysis of time series with possibly
infinite variance, we assume that $\varepsilon _{t}$ is \textit{i.i.d}. Part 
\textit{(ii)} of the assumption implicitly states that the vector $%
\varepsilon _{t}$ has a multivariate distribution which belongs to the
domain of attraction of a strictly stable, multivariate law (see Theorem
2.1.5(a) in \citealp{taqqu}) with common tail index $\eta $, so that linear
combinations of them can be constructed. The assumption implies that, when $%
\eta <2$, $E\left\vert \varepsilon _{i,t}\right\vert ^{p}<\infty $ for all $%
0\leq p<\eta $, whereas $E\left\vert \varepsilon _{i,t}\right\vert ^{\eta
}=\infty $ (see \citealp{petrov1974}). Also, by Property 1.2.15 in %
\citet{taqqu}, it holds that, as $x\rightarrow \infty $%
\begin{equation*}
F_{l\varepsilon }\left( -x\right) =\frac{c_{l,1}+o\left( 1\right) }{x^{\eta }%
}L\left( x\right) \text{, and }1-F_{l\varepsilon }\left( x\right) =\frac{%
c_{l,2}+o\left( 1\right) }{x^{\eta }}L\left( x\right) ,
\end{equation*}%
where $L\left( x\right) $ is a slowly varying function in the sense of
Karamata (see \citealp{seneta} for a review), and $c_{l,1},c_{l,2}\geq 0$, $%
c_{l,1}+c_{l,2}>0$. The condition that $G$ is \textit{strictly }stable
entails that $c_{l,1}=c_{l,2}$ when $\eta =1$, thus ruling out asymmetry in
that case (see Property 1.2.8 in \citealp{taqqu}).

\subsection{Asymptotics\label{asymptotics}}

Define 
\begin{equation}
S_{11}=\sum_{t=1}^{T}y_{t}y_{t}^{\prime }\text{, and }S_{00}=\sum_{t=1}^{T}%
\Delta y_{t}\Delta y_{t}^{\prime }\text{.}  \label{s00}
\end{equation}

We report a set of novel results for the eigenvalues of $S_{11}$ and $S_{00}$%
, which we require for the construction of the test statistics.

\begin{proposition}
\label{s11-lambda}Let Assumptions \ref{as-B}-\ref{as-moments} hold. Then
there exists a random variable $T_{0}$ such that, for all $T\geq T_{0}$%
\begin{equation}
\lambda ^{\left( j\right) }\left( S_{11}\right) \geq c_{0}\frac{T^{1+2/\eta }%
}{\left( \ln \ln T\right) ^{2/\eta }},\text{ for }j\leq m\text{.}
\label{s11-lambda-1}
\end{equation}%
Also, for every $\epsilon >0$, it holds that%
\begin{equation}
\lambda ^{\left( j\right) }\left( S_{11}\right) =o_{a.s.}\left(
T^{2/p}\left( \ln T\right) ^{2\left( 2+\epsilon \right) /p}\right), \text{
for }j>m,  \label{s11-lambda-2}
\end{equation}%
for every $0<p<\eta $ when $\eta \leq 2$ with $E\left\Vert \varepsilon
_{t}\right\Vert ^{\eta }=\infty $, and $p=2$ when $\eta =2$ and $E\left\Vert
\varepsilon _{t}\right\Vert ^{\eta }<\infty $.
\end{proposition}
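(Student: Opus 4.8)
The plan is to decompose $S_{11}$ through the common-trend representation \eqref{trends} and treat the two bounds separately, since they involve different halves of the spectrum. Writing $y_t = P x_t + u_t$, the matrix $S_{11}$ splits into a dominant term $P\left(\sum_t x_t x_t'\right)P'$, a cross term $P\left(\sum_t x_t u_t'\right) + \left(\sum_t u_t x_t'\right)P'$, and a remainder $\sum_t u_t u_t'$. Under Assumption~\ref{as-moments} the partial sums $\sum_{s=1}^t \varepsilon_s$ are in the domain of attraction of a stable law, so a functional limit theorem (or, more precisely for almost-sure bounds, a law of the iterated logarithm for stable partial sums) gives that $\max_{t\le T}\|Q\sum_{s\le t}\varepsilon_s\|$ is, almost surely and eventually, bounded below by $c_0 T^{1/\eta}/(\ln\ln T)^{1/\eta}$ along a positive-density subsequence of times and bounded above by $T^{1/p}(\ln T)^{(2+\epsilon)/p}$ for $p<\eta$; the process $u_t = C^*(L)\varepsilon_t$ is by contrast a stationary linear filter with geometrically decaying coefficients (Assumptions~\ref{as-B}(ii) and \ref{as-moments}), so $\max_{t\le T}\|u_t\| = o_{a.s.}(T^{1/p}(\ln T)^{\cdots})$ as well, and hence $\sum_t u_t u_t' = O_{a.s.}(T^{1+2/p}\,\mathrm{polylog})$, which is of strictly smaller order than the claimed lower bound $T^{1+2/\eta}/(\ln\ln T)^{2/\eta}$ once one checks $1+2/p < 1+2/\eta$.

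For the lower bound \eqref{s11-lambda-1}: since $P$ has rank $m$ and $Q$ has rank $m$, by the min-max (Courant--Fischer) characterisation it suffices to bound from below $\lambda^{(m)}$ of the $m\times m$ matrix $\sum_{t=1}^T x_t x_t'$ where $x_t = Q\sum_{s\le t}\varepsilon_s$, and then transfer the bound through $P$ using $\lambda^{(m)}(P A P') \ge \lambda^{(m)}(A)\,\sigma_{\min}(P)^2$ — modulo absorbing the cross and remainder terms, which are lower order by the previous paragraph and a Weyl-type perturbation inequality. To bound $\lambda^{(m)}(\sum_t x_t x_t')$ from below I would note that $\sum_t x_t x_t' \ge x_{t^*} x_{t^*}'$ is not enough (rank one), so instead I would use that for a fixed unit vector $v\in\mathbb{R}^m$, $v'\left(\sum_t x_t x_t'\right)v = \sum_t (v'Q\sum_{s\le t}\varepsilon_s)^2$, and $v'Q\varepsilon_s =: \xi_s$ is an i.i.d. scalar sequence in the domain of attraction of a stable law with the same index $\eta$ (by Assumption~\ref{as-moments}(ii), since $v'Q\ne 0$), whose partial-sum process satisfies the required LIL-type lower bound on a set of times of positive density; summing the square over those times gives $c_0 T \cdot (T^{1/\eta}/(\ln\ln T)^{1/\eta})^2$. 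Taking the infimum over $v$ in a compact set (or, cleanly, applying this to each of the $m$ orthonormal eigendirections and using that the bound is uniform) yields \eqref{s11-lambda-1}.

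For the upper bound \eqref{s11-lambda-2}: the $(m+1)$-th largest eigenvalue of $S_{11}$ is, by Courant--Fischer applied to the rank-$m$ term $P(\cdots)P'$, controlled by the largest eigenvalue of the orthogonal-complement part, i.e. $\lambda^{(m+1)}(S_{11}) \le \lambda^{(1)}(\Pi^\perp S_{11}\Pi^\perp)$ where $\Pi^\perp$ projects off the column space of $P$; on that complement $P'y_t$ vanishes, leaving only contributions from $u_t$ and cross terms, so $\lambda^{(m+1)}(S_{11}) \le \lambda^{(1)}(\sum_t u_t u_t') + (\text{cross terms}) \le T\max_{t\le T}\|u_t\|^2 + \cdots$, and the moment/filter argument above bounds $\max_{t\le T}\|u_t\|$ by $T^{1/p}(\ln T)^{(2+\epsilon)/p}$ (for $p<\eta$, using a Marcinkiewicz--Zygmund/Kolmogorov-type tail bound together with Borel--Cantelli over the $T$ values of $t$, and the geometric decay of $C^*_j$ to control the tail of the filter). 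The cross term is handled by Cauchy--Schwarz against the already-bounded quantities. The case $\eta=2$ with finite $\eta$-th moment is the standard finite-variance situation and follows with $p=2$ from the usual LIL.

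The main obstacle I anticipate is the lower bound: specifically, producing an \emph{almost-sure} lower bound on $\sum_{t\le T}\xi_t$-type partial sums that holds \emph{simultaneously} on enough time points $t$ to make $\sum_t (\sum_{s\le t}\xi_s)^2$ as large as $c_0 T^{1+2/\eta}/(\ln\ln T)^{2/\eta}$ — a functional-CLT statement only gives this in probability, whereas Proposition~\ref{s11-lambda} claims it a.s. for $T\ge T_0$. This requires the correct form of the law of the iterated logarithm for partial sums in the domain of attraction of a stable law (the $(\ln\ln T)^{1/\eta}$ factor is exactly the Chover-type LIL normalisation), and care that the exceptional null sets over the $m$ eigendirections and over $T$ can be unioned. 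Getting the uniformity in the direction $v$ — so that the single random threshold $T_0$ works for the $m$-th eigenvalue rather than direction-by-direction — is the delicate technical point; I would resolve it by working with a fixed orthonormal basis adapted to $P$ and bounding the Gram matrix of $\{x_t\}$ from below in the Loewner order rather than direction-wise.
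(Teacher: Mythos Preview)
Your overall architecture is correct and is essentially what the paper does: decompose $S_{11}$ via $y_t=Px_t+u_t$, use Weyl's inequality so that the first $m$ eigenvalues are controlled from below by $\lambda^{(j)}\bigl(P\sum_t x_t x_t' P'\bigr)$ minus a remainder, and the remaining eigenvalues are controlled from above by the remainder alone (since $P\sum_t x_tx_t'P'$ has rank $m$). The lower bound on $\sum_t(b'x_t)^2$ is indeed obtained from a Chover/Jain-type law of the iterated logarithm for stable partial sums; the paper cites Theorem~1.3 in \citet*{jain}, which gives $\liminf_T (\ln\ln T)^{2/\eta}T^{-1-2/\eta}\sum_t(b'x_t)^2>0$ a.s.\ for every fixed $b\ne0$, so your worry about uniformity over directions is handled by stating the conclusion as a Loewner-order lower bound on the $m\times m$ matrix $\sum_t x_tx_t'$ (exactly as you propose at the end).

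There is, however, a genuine quantitative gap in your treatment of the remainder. You bound $\lambda^{(1)}\bigl(\sum_t u_tu_t'\bigr)\le T\max_{t\le T}\|u_t\|^2$ and then use $\max_t\|u_t\|=o_{a.s.}(T^{1/p}(\ln T)^{\cdots})$, obtaining $\sum_t u_tu_t'=O_{a.s.}(T^{1+2/p}\,\mathrm{polylog})$. This is too crude by a full factor of $T$: the proposition asserts $\lambda^{(j)}(S_{11})=o_{a.s.}(T^{2/p}\,\mathrm{polylog})$ for $j>m$, not $T^{1+2/p}$. The same crude bound also breaks your lower-bound argument, because your comparison ``$1+2/p<1+2/\eta$'' is backwards: since $p<\eta$ one has $2/p>2/\eta$, so $T^{1+2/p}$ is \emph{larger} than the main term $T^{1+2/\eta}$, and the Weyl perturbation step fails.

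The fix --- and this is what the paper does --- is to bound $\sum_{t=1}^T u_{i,t}^2$ directly in $L_{p/2}$ rather than going through the maximum. Because $p/2\le1$, subadditivity gives
\[
E\Bigl|\sum_{t=1}^T u_{i,t}^2\Bigr|^{p/2}\le \sum_{t=1}^T E|u_{i,t}|^{p}\le c_0 T,
\]
and a Borel--Cantelli argument then yields $\sum_t u_{i,t}^2=o_{a.s.}\bigl(T^{2/p}(\ln T)^{2(2+\epsilon)/p}\bigr)$. The cross term $\sum_t x_tu_t'$ is handled by an analogous direct $L_p$ (or $L_{p/2}$) moment computation, not by Cauchy--Schwarz against the already-too-large quantities; the resulting bound is again $o_{a.s.}(T^{2/p}\,\mathrm{polylog})$, which is comfortably below $T^{1+2/\eta}$.
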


Proposition \ref{s11-lambda} states that the first $m$\ eigenvalues of $%
S_{11}$ diverge at a faster rate than the other ones (faster by an order of
\textquotedblleft almost\textquotedblright\ $T$), thus entailing that the
spectrum of $S_{11}$ has a \textquotedblleft spiked\textquotedblright\
structure.

\smallskip

In order to study $S_{00}$ and its eigenvalues, we need the following
assumption, which complements Assumption \ref{as-B}\textit{(ii)}.

\begin{assumption}
\label{as-deltayt}$\varepsilon _{t}$ has density $p_{\varepsilon }\left(
u\right) $ satisfying $\int_{-\infty }^{+\infty }\left\vert p_{\varepsilon
}\left( u+y\right) -p_{\varepsilon }\left( u\right) \right\vert du\leq
c_{0}\left\Vert y\right\Vert $.
\end{assumption}

The integral Lipschitz condition in Assumption \ref{as-deltayt} is a
technical requirement needed for $\Delta y_{t}$ to be strong mixing with
geometrically declining mixing numbers, and it is a standard requirement in
this literature (see e.g. \citealp{withers} and \citealp{phamtran}).

\begin{proposition}
\label{s00-lambda}Let Assumptions \ref{as-B}-\ref{as-deltayt} hold. Then 
\begin{equation}
\lambda ^{\left( 1\right) }\left( S_{00}\right) =o_{a.s.}\left( T^{2/\eta
}\left( \prod\limits_{i=1}^{n}\ln _{i}T\right) ^{2/\eta }\left( \ln
_{n+1}T\right) ^{\left( 2+\epsilon \right) /\eta }\right) ,
\label{s00-lambda-max}
\end{equation}%
for every $\epsilon >0$ and every integer $n$. Also, there exists a random
variable $T_{0}$ such that, for all $T\geq T_{0}$ and every $\epsilon >0$.%
\begin{equation}
\lambda ^{\left( N\right) }\left( S_{00}\right) \geq c_{0}\frac{T^{2/\eta }}{%
\left( \ln T\right) ^{\left( 2/\eta -1\right) \left( 2+\epsilon \right) }}.
\label{s00-lambda-min}
\end{equation}
\end{proposition}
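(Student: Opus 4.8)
The plan is to handle the two displays separately, exploiting the fact that $S_{00}=\sum_{t=1}^{T}\Delta y_{t}\Delta y_{t}^{\prime}$ is a sum of outer products of a strong-mixing, heavy-tailed (but \emph{stationary}, $I(0)$) sequence $\Delta y_{t}=C(L)\varepsilon_{t}$, so that there is no stochastic-trend amplification and the eigenvalues of $S_{00}$ should all grow like $T^{2/\eta}$ up to iterated-log factors. For the upper bound \eqref{s00-lambda-max} on $\lambda^{(1)}(S_{00})$, I would note that $\lambda^{(1)}(S_{00})\le \operatorname{tr}(S_{00})=\sum_{i=1}^{N}\sum_{t=1}^{T}(\Delta y_{i,t})^{2}$, so it suffices to control, for each coordinate $i$, the partial sums $\sum_{t=1}^{T}(\Delta y_{i,t})^{2}$. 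Writing $\Delta y_{i,t}=\sum_{j\ge0}(C_{j}\varepsilon_{t-j})_{i}$ and using Assumption \ref{as-B}(ii) (geometric decay of $C_{j}$) together with Assumption \ref{as-moments}, each $\Delta y_{i,t}$ is in the domain of attraction of a stable law with the same index $\eta$, hence $(\Delta y_{i,t})^{2}$ is in the domain of attraction of a stable law with index $\eta/2$. The required rate then follows from a law of the iterated logarithm / Marcinkiewicz–Zygmund-type strong law for sums of stationary, strongly mixing random variables with regularly varying tails: such results (e.g.\ of Kolmogorov–Feller type refined with iterated logarithms) give $\sum_{t=1}^{T}(\Delta y_{i,t})^{2}=o_{a.s.}\bigl(T^{2/\eta}(\prod_{i=1}^{n}\ln_{i}T)^{2/\eta}(\ln_{n+1}T)^{(2+\epsilon)/\eta}\bigr)$ for every $n$ and every $\epsilon>0$; this is essentially the same truncation-plus-Borel–Cantelli machinery used for \eqref{s11-lambda-2} in Proposition \ref{s11-lambda}, now applied to the stationary increments rather than the partial-sum process.

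For the lower bound \eqref{s00-lambda-min} on the smallest eigenvalue $\lambda^{(N)}(S_{00})$, the key is a non-degeneracy argument: since $C(1)=C$ may be rank-deficient, I must \emph{not} project onto the long-run direction, but instead exploit that $\Delta y_{t}=\sum_{j\ge0}C_{j}\varepsilon_{t-j}$ and that $\varepsilon_{t}$ has a strictly positive density on $\mathbb{R}^{N}$ (Assumption \ref{as-moments}(ii)), so the spectral density of $\Delta y_{t}$ is bounded away from zero on a set of positive measure unless some $C(L)$-direction is identically null — which is excluded because the $MA$ representation \eqref{ma} is taken to be minimal (the short-run filter is generically full rank at some frequency). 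Concretely, I would fix a unit vector $v\in\mathbb{R}^{N}$ and lower-bound $v^{\prime}S_{00}v=\sum_{t=1}^{T}(v^{\prime}\Delta y_{t})^{2}$; $v^{\prime}\Delta y_{t}$ is a scalar strongly mixing process whose tails are regularly varying with index $\eta$, so by a lower-tail / small-ball argument combined with the second Borel–Cantelli lemma along a subsequence (or directly a lower LIL), there is a constant $c_{0}>0$ with $\sum_{t=1}^{T}(v^{\prime}\Delta y_{t})^{2}\ge c_{0}T^{2/\eta}(\ln T)^{-(2/\eta-1)(2+\epsilon)}$ eventually almost surely; taking the infimum over $v$ on the unit sphere (a compactness/uniformity step) then yields \eqref{s00-lambda-min}. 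The mixing property needed here is exactly what Assumption \ref{as-deltayt} delivers, via the cited results of \citet{withers} and \citet{phamtran}, since the integral Lipschitz condition on $p_{\varepsilon}$ together with geometric $C_{j}$-decay makes $\Delta y_{t}$ $\beta$-mixing with geometric rate.

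I expect the main obstacle to be the lower bound \eqref{s00-lambda-min}, and specifically making the "infimum over the unit sphere" step rigorous while keeping the correct $(\ln T)$-power: a pointwise lower LIL for each fixed $v$ is standard, but turning it into a uniform-in-$v$ statement — equivalently, a lower bound on $\lambda^{(N)}(S_{00})$ rather than on a single quadratic form — requires either a covering argument over the sphere with a union bound that does not degrade the logarithmic factor, or a direct argument on the smallest eigenvalue of the normalised matrix $T^{-2/\eta}S_{00}$ showing it cannot collapse, using that the $\varepsilon_{t}$ have a positive joint density so that $S_{00}$ is almost surely positive definite and its smallest eigenvalue inherits the $T^{2/\eta}$ growth from the regular variation of the innovations. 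Controlling the interplay between the heavy-tailed scaling $T^{2/\eta}$ and the slowly varying corrections simultaneously for \emph{all} directions is the delicate point; the upper bound, by contrast, reduces cleanly to the trace and hence to $N$ scalar strong-law statements.
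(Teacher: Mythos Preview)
Your upper-bound argument via $\lambda^{(1)}(S_{00})\le\operatorname{tr}(S_{00})$ and a scalar Chover-type LIL for the strongly mixing coordinates $\Delta y_{i,t}$ is essentially what the paper does: it first establishes (in a separate lemma) that every linear combination $l'\Delta y_t$ is strong mixing with geometric rate, and then invokes a Chover-type LIL for such sequences (Theorem~2.1 in \citealp{trapaniLIL}) to obtain \eqref{s00-lambda-max} directly.

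For the lower bound \eqref{s00-lambda-min}, however, your sketch has a genuine gap. An off-the-shelf ``lower LIL'' or ``small-ball plus second Borel--Cantelli'' argument for $\sum_{t}(v'\Delta y_t)^{2}$ is not available when the summands have infinite mean (here $\eta<2$, so $(v'\Delta y_t)^2$ has tail index $\eta/2<1$), and you give no indication of how the specific exponent $(2/\eta-1)(2+\epsilon)$ on the logarithm would arise. The paper's device is concrete and quite different: fix $z_t=l'\Delta y_t$ and a truncation level $\varphi_T=T^{1/\eta}/(\ln T)^{(2+\epsilon)/\eta}$, then use the elementary bound
\[
\sum_{t=1}^{T}z_t^{2}\;\ge\;\varphi_T^{2}\sum_{t=1}^{T}I\bigl(|z_t|>\varphi_T\bigr).
\]
The right-hand side is a sum of \emph{bounded} strong-mixing indicators with mean of order $T\varphi_T^{-\eta}$; Rio's maximal inequality for mixing sequences (together with a Luxemburg-norm computation to control the relevant covariance integral) shows this count concentrates around its mean almost surely, and substituting $\varphi_T$ yields exactly $c_0\,T^{2/\eta}/(\ln T)^{(2/\eta-1)(2+\epsilon)}$. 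This ``count exceedances above a moving threshold'' trick is the missing ingredient in your proposal: it converts a lower bound on a heavy-tailed sum into a concentration bound for a bounded one, where standard mixing tools apply cleanly.

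Your concern about uniformity over the unit sphere is well taken. The paper's proof in fact establishes the bound for each \emph{fixed} $l$ and does not make the passage to $\lambda^{(N)}(S_{00})=\min_{\|l\|=1}l'S_{00}l$ explicit; the implicit argument is that the tail constants $c_{l,1}+c_{l,2}$ and the Rio bound depend continuously on $l$ over the compact sphere, so $c_0$ and the random threshold $T_0$ can be taken uniformly, but this step is not written out there either.
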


Similarly to Proposition \ref{s11-lambda}, Proposition \ref{s00-lambda}
provides bounds for the spectrum of $S_{00}$. Part (\ref{s00-lambda-max})
has been shown in \citet{trapaniLIL}, and it is a Chover-type Law of the
Iterated Logarithm (\citealp{chover}). As shown in \citet{trapaniLIL}, the
bound in (\ref{s00-lambda-max}) is almost sharp. The lower bound implied in (%
\ref{s00-lambda-min}) is also almost sharp. \newline

The spectrum of $S_{11}$ (and, in particular, the different rates of
divergence of its eigenvalues) can -- in principle -- be employed in order
to determine $m$. However, $S_{11}$ has two main problems which make it
unsuitable for direct usage. First, by Proposition \ref{s11-lambda}, the
spectrum of $S_{11}$ depends on the nuisance parameter $\eta $. Also, the
matrix $S_{11}$ and, consequently, its spectrum depend on the unit of
measurement of the data, and thus they are not scale-free. In order to
construct scale-free and nuisance-free statistics, we propose to rescale $%
S_{11}$ by $S_{00}$. Proposition \ref{s00-lambda} ensures that this is
possible: by equation (\ref{s00-lambda-min}), the inverse of $S_{00}$ cannot
diverge too fast, and therefore the spectrum of the matrix $%
S_{00}^{-1}S_{11} $ should still have $m$ eigenvalues that diverge at a
faster rate than the others. This is shown in the next theorem.

\begin{theorem}
\label{s11s00}Let Assumptions \ref{as-B}-\ref{as-deltayt} hold. Then there
exists a random variable $T_{0}$ such that, for all $T\geq T_{0}$,%
\begin{equation}
\lambda ^{\left( j\right) }\left( S_{00}^{-1}S_{11}\right) \geq c_{0}\frac{%
T^{1-\epsilon ^{\prime }}}{\left( \ln \ln T\right) ^{2/\eta }\left( \ln
T\right) ^{2\left( 2+\epsilon \right) /p}},\text{ for }0\leq j\leq m\text{,}
\label{lambdaj-1}
\end{equation}%
for every $\epsilon ,\epsilon ^{\prime }>0$, and every $0<p<\eta $.
Moreover, for every $\epsilon ,\epsilon ^{\prime }>0$,%
\begin{equation}
\lambda ^{\left( j\right) }\left( S_{00}^{-1}S_{11}\right)
=o_{a.s.}(T^{\epsilon ^{\prime }}\left( \ln T\right) ^{2\left( 2+\epsilon
\right) /\eta +\left( 1+\epsilon \right) }),\text{ for }j>m\text{.}
\label{lambdaj-2}
\end{equation}
\end{theorem}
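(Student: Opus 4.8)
The plan is to combine the two eigenvalue bounds of Propositions~\ref{s11-lambda} and~\ref{s00-lambda} through the standard minimax (Courant--Fischer) characterization of eigenvalues of the generalised problem $S_{00}^{-1}S_{11}$. Since $S_{00}$ is symmetric and (eventually, almost surely) positive definite, write $S_{00}=R'R$ with $R$ invertible, so that $\lambda^{(j)}(S_{00}^{-1}S_{11})=\lambda^{(j)}(R'^{-1}S_{11}R^{-1})$, and the latter has the Rayleigh-quotient form $\sup_{\dim V=j}\inf_{v\in V}\,\frac{v'S_{11}v}{v'S_{00}v}$. The key bracketing is then $\lambda^{(N)}(S_{00})\,\lambda^{(j)}(S_{11}^{\phantom{1}})\,/\,\lambda^{(1)}(S_{00})$-type inequalities: for the lower bound on the first $m$ eigenvalues one uses $\lambda^{(j)}(S_{00}^{-1}S_{11})\ge \lambda^{(j)}(S_{11})/\lambda^{(1)}(S_{00})$, and for the upper bound on the remaining ones one uses $\lambda^{(j)}(S_{00}^{-1}S_{11})\le \lambda^{(j)}(S_{11})/\lambda^{(N)}(S_{00})$. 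These are Weyl-type product inequalities for eigenvalues of $AB$ with $A,B$ symmetric positive semidefinite.

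Carrying this out: first I would record the matrix inequalities above, valid for all $T$ large enough that $S_{00}$ is invertible (which holds a.s.\ eventually because $\lambda^{(N)}(S_{00})>0$ by~\eqref{s00-lambda-min}). For~\eqref{lambdaj-1}, plug the lower bound~\eqref{s11-lambda-1}, namely $\lambda^{(j)}(S_{11})\ge c_0 T^{1+2/\eta}(\ln\ln T)^{-2/\eta}$ for $j\le m$, into the numerator, and the upper bound~\eqref{s00-lambda-max} into the denominator $\lambda^{(1)}(S_{00})$. The denominator is $o_{a.s.}(T^{2/\eta}(\ln T)^{c})$ for a suitable power $c$ that can be absorbed, up to an arbitrarily small polynomial slack, into a factor $T^{\epsilon'}$; this is exactly where the exponent $1-\epsilon'$ in~\eqref{lambdaj-1} comes from — the logarithmic factors in~\eqref{s00-lambda-max} are swallowed into $T^{\epsilon'}$, leaving the explicit $(\ln\ln T)^{2/\eta}$ and $(\ln T)^{2(2+\epsilon)/p}$ terms (the latter being a placeholder factor we carry for later use with $S_{11}$'s slow eigenvalues, harmless here). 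The ratio $T^{1+2/\eta}/T^{2/\eta}=T$, degraded to $T^{1-\epsilon'}$, is the claimed rate. For~\eqref{lambdaj-2}, plug the upper bound~\eqref{s11-lambda-2}, $\lambda^{(j)}(S_{11})=o_{a.s.}(T^{2/p}(\ln T)^{2(2+\epsilon)/p})$ for $j>m$, into the numerator and the lower bound~\eqref{s00-lambda-min}, $\lambda^{(N)}(S_{00})\ge c_0 T^{2/\eta}(\ln T)^{-(2/\eta-1)(2+\epsilon)}$, into the denominator. When $\eta<2$ with infinite $\eta$-th moment one takes $p<\eta$; the ratio $T^{2/p}/T^{2/\eta}=T^{2/p-2/\eta}$ is a positive but arbitrarily small power of $T$ (since $p$ can be taken arbitrarily close to $\eta$), which is absorbed into $T^{\epsilon'}$, while the $\ln T$ powers combine into $(\ln T)^{2(2+\epsilon)/\eta+(1+\epsilon)}$ as stated. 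When $\eta=2$ with finite second moment, $p=2=\eta$ and the polynomial part cancels exactly, again leaving only logarithmic factors.

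The main obstacle — really the only subtle point — is verifying the two Weyl product inequalities in the generalised (non-symmetric $S_{00}^{-1}S_{11}$) setting and being careful that the slowly varying / iterated-logarithm factors from Proposition~\ref{s00-lambda} genuinely fit inside a single $T^{\epsilon'}$ uniformly; since~\eqref{s00-lambda-max} holds for \emph{every} integer $n$ and \emph{every} $\epsilon>0$, any fixed power of $\ln T$ (and of $\ln_i T$) is $O(T^{\epsilon'})$, so this bookkeeping goes through. One also has to note that the ``random variable $T_0$'' in each ingredient can be taken as a common (a.s.\ finite) one by intersecting the relevant almost-sure events, and that the bounds for $j\le m$ and $j>m$ are stated uniformly in $j$ within each range, which is automatic since there are only finitely many eigenvalues. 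No delicate estimate beyond what is already in Propositions~\ref{s11-lambda}--\ref{s00-lambda} is needed; the theorem is essentially an exercise in combining them via the minimax principle.
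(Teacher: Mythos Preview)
Your proposal is correct and follows essentially the same approach as the paper: the paper simply invokes the multiplicative Weyl inequality (citing Theorem~7 in \citet*{merikoski2004}) to obtain $\lambda^{(j)}(S_{00}^{-1}S_{11})\ge \lambda^{(j)}(S_{11})/\lambda^{(\max)}(S_{00})$ and $\lambda^{(j)}(S_{00}^{-1}S_{11})\le \lambda^{(j)}(S_{11})/\lambda^{(\min)}(S_{00})$, then plugs in Propositions~\ref{s11-lambda} and~\ref{s00-lambda} exactly as you do. Your derivation of those inequalities via the symmetrization $S_{00}=R'R$ and Courant--Fischer is simply a self-contained proof of the same Merikoski-type bound, and your bookkeeping on absorbing the slowly varying factors into $T^{\epsilon'}$ is accurate.
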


Theorem \ref{s11s00} states that the spectrum of $S_{00}^{-1}S_{11}$ has a
similar structure to the spectrum of $S_{11}$: the first $m$ eigenvalues are
spiked and their rate of divergence is faster than that of the remaining
eigenvalues by a factor of almost $T$. More importantly, by normalising $%
S_{11}$ by $S_{00}$, dependence on $\eta $ is relegated to the
slowly-varying (logarithmic) terms. In essence, apart from the slowly
varying sequences, equations (\ref{lambdaj-1}) and (\ref{lambdaj-2}) imply
that the rates of divergence of the eigenvalues of $S_{00}^{-1}S_{11}$ are
of order (arbitrarily close to) $O\left( T\right) $ for the spiked
eigenvalues, and (arbitrarily close to) $O\left( 1\right) $ for the other
ones. This is the key property of $\lambda ^{\left( j\right) }\left(
S_{00}^{-1}S_{11}\right) $: dividing $S_{11}$ by $S_{00}$\ washes out the
impact of the tail index $\eta $, which essentially does not play any role
in determining the divergence or not of $\lambda ^{\left( j\right) }\left(
S_{00}^{-1}S_{11}\right) $. Although this result pertains to first order
asymptotics (i.e., rates), it is possible to find an analogy between the
result in Theorem \ref{s11s00} and approaches based on eliminating nuisance
parameters using self-normalisation (see e.g. \citealp{shao}).

\section{Inference on the common trends\label{inftrends}}

In this section we collect our main results about estimation and inference
on the common trends in the possible presence of heavy tails. In Section \ref%
{test_stats}, we report a novel one-shot test about the (minimum) number of
common trends. Then, in Section \ref{estimation_R} we introduce a sequential
procedure for the determination of the number of common trends. Estimation
of the common trends and associated factor loading is presented in Section %
\ref{esttrend}.

\subsection{Testing hypotheses on the number of common trends\label%
{test_stats}}

We consider testing hypotheses on the number of common trends $m$. The tests
we propose herein will form the basis of our sequential procedure for the
determination of the number of common trends -- see Section \ref%
{estimation_R}. Specifically, we consider the null hypothesis%
\begin{equation}
\left\{ 
\begin{array}{c}
H_{0}:m\geq j \\ 
H_{A}:m<j%
\end{array}%
\right.  \label{fmwk-j}
\end{equation}%
where $j\in \{1,...,N\}$ is a (user-chosen) lower bound on the number of
common trends. For instance, testing whether there is at least one common
trend would entail setting $j=N-1$; a test of non-stationarity against the
alternative of strict stationarity corresponds to $j=1$.

Based on Theorem \ref{s11s00}, we propose to use%
\begin{equation}
\phi _{T}^{\left( j\right) }=\exp \left\{ T^{-\kappa }\lambda ^{\left(
j\right) }\left( S_{00}^{-1}S_{11}\right) \right\} -1\text{,}
\label{phi-j-t}
\end{equation}%
where $\kappa \in \left( 0,1\right) $; criteria for the choice of $\kappa $
in applications are discussed in Section \ref{montecarlo}. Importantly, the $%
T^{-\kappa }$ term in (\ref{phi-j-t}) is used in order to exploit the
discrepancy in the rates of divergence of the $\lambda ^{\left( j\right)
}\left( S_{00}^{-1}S_{11}\right) $ under $H_{0}$ and under $H_{A}$. In
particular, it ensures that $T^{-\kappa }\lambda ^{\left( j\right) }\left(
S_{00}^{-1}S_{11}\right) $ drifts to zero under $H_{A}$ (i.e., whenever $j>m$%
), whereas it still passes to infinity under $H_{0}$ (i.e., when $j\leq m$).
According to (\ref{lambdaj-2}), this only requires a very small value of $%
\kappa $, which would also allow $\lambda ^{\left( j\right) }\left(
S_{00}^{-1}S_{11}\right) $ to diverge at a rate close to $T$ under $H_{0}$.
On account of Theorem \ref{s11s00}, it holds that $P(\omega
:\lim_{T\rightarrow \infty }\phi _{T}^{\left( j\right) }=\infty )=1$ for $%
0\leq j\leq m$; hence, we can assume that under the null that $m\geq j$, it
holds that $\lim_{T\rightarrow \infty }\phi _{T}^{\left( j\right) }=\infty $%
. Conversely, we have $P(\omega :\lim_{T\rightarrow \infty }\phi
_{T}^{\left( j\right) }=0)=1$ for $j>m$, which entails that under the
alternative that $j>m$, we have $\lim_{T\rightarrow \infty }\phi
_{T}^{\left( j\right) }=0$. In essence, $\phi _{T}^{\left( j\right) }$
diverges to positive infinity, or converges (to zero), according to whether $%
\lambda ^{\left( j\right) }\left( S_{00}^{-1}S_{11}\right) $ is
\textquotedblleft large\textquotedblright\ or \textquotedblleft
small\textquotedblright .

Since the limiting law of $\phi _{T}^{\left( j\right) }$ under the null is
unknown, we propose a randomised version of it. The construction of the test
statistic is based on the following three step algorithm, which requires a
user-chosen weight function $F\left( \cdot \right) $ with support $%
U\subseteq \mathbb{R}$.

\smallskip

\noindent \textbf{Step 1} Generate an artificial sample $\{\xi _{i}^{\left(
j\right) },1\leq i\leq M\}$, with $\xi _{i}^{\left( j\right) }\sim $\textit{%
i.i.d.}$N\left( 0,1\right) $, independent of the original data.\newline
\noindent \textbf{Step 2} For each $u\in U$, define the Bernoulli sequence $%
\zeta _{i}^{\left( j\right) }\left( u\right) =I(\phi _{T}^{\left( j\right)
}\xi _{i}^{\left( j\right) }\leq u)$, and let 
\begin{equation}
\theta _{T,M}^{\left( j\right) }\left( u\right) =\frac{2}{\sqrt{M}}%
\sum_{i=1}^{M}\left( \zeta _{i}^{\left( j\right) }\left( u\right) -\frac{1}{2%
}\right) .  \label{theta-minor}
\end{equation}

\noindent \textbf{Step 3} Compute 
\begin{equation}
\Theta _{T,M}^{\left( j\right) }=\int_{U}\displaystyle\lbrack \theta
_{T,M}^{\left( j\right) }\left( u\right) ]^{2}dF\left( u\right) ,
\label{theta-maior}
\end{equation}%
where $F\left( \cdot \right) $ is the user-chosen weight function.

\smallskip

In Step 2, the binary variable $\zeta _{i}^{\left( j\right) }\left( u\right) 
$ is created for several values of $u \in U$, and in Step 3, the resulting
statistics $\theta _{T,M}^{\left( j\right) }\left( u\right) $ are averaged
across $u$, through the weight function $F\left( \cdot \right) $, thus
attenuating the dependence of the test statistic on an arbitrary value $u$.
The following assumption characterizes $F\left( \cdot \right) $.

\begin{assumption}
\label{weight} It holds that (i) $\int_{U}dF\left( u\right) =1$; (ii) $%
\int_{U}u^{2}dF\left( u\right) <\infty $.
\end{assumption}

A possible choice for $F\left( \cdot \right) $ could be a distribution
function with finite second moment; note that part \textit{(ii)} of the
assumption is trivially satisfied if $U$ is bounded. Possible examples
include a Rademacher distribution, based on choosing $U=\left\{ -c,c\right\} 
$ with $F\left( c\right) =F\left( -c\right) =1/2$, or the standard normal
distribution function.

\smallskip

Let $P^{\ast }$ denote the conditional probability with respect to the
original sample; we use the notation \textquotedblleft $\overset{D^{\ast }}{%
\rightarrow }$\textquotedblright\ and \textquotedblleft $\overset{P^{\ast }}{%
\rightarrow }$\textquotedblright\ to define, respectively, conditional
convergence in distribution and in probability according to $P^{\ast }$.

\begin{theorem}
\label{theta}Let Assumptions \ref{as-B}-\ref{weight} hold. Under $H_{0}$, as 
$\min (T,M)\rightarrow \infty $ with%
\begin{equation}
M^{1/2}\exp \left( -T^{1-\kappa -\epsilon }\right) \rightarrow 0
\label{rate-constraint}
\end{equation}%
for any arbitrarily small $\epsilon >0$, it holds that 
\begin{equation}
\Theta _{T,M}^{\left( j\right) }\overset{D^{\ast }}{\rightarrow }\chi
_{1}^{2}  \label{null-convergence}
\end{equation}%
for almost all realisations of $\left\{ \varepsilon _{t},0<t<\infty \right\} 
$. Under $H_{A}$, as $\min (T,M)\rightarrow \infty $, it holds that 
\begin{equation}
M^{-1}\Theta _{T,M}^{\left( j\right) }\overset{P^{\ast }}{\rightarrow }\frac{%
1}{4}  \label{alt-convergence}
\end{equation}%
for almost all realisations of $\left\{ \varepsilon _{t},0<t<\infty \right\} 
$.
\end{theorem}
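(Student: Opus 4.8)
The plan is to treat the two regimes — under $H_{0}$ and under $H_{A}$ — separately, exploiting the fact (established in Theorem \ref{s11s00} and the discussion around (\ref{phi-j-t})) that $\phi_{T}^{(j)}\to\infty$ almost surely when $j\le m$, while $\phi_{T}^{(j)}\to 0$ almost surely when $j>m$. Throughout, I would condition on a realisation of $\{\varepsilon_{t},0<t<\infty\}$ in the probability-one set where these limits hold, so that $\phi_{T}^{(j)}$ is effectively a deterministic (divergent, resp. vanishing) sequence, and all remaining randomness comes from the i.i.d.\ $N(0,1)$ draws $\{\xi_{i}^{(j)}\}$ under $P^{\ast}$.

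\textbf{Under $H_{0}$.} Fix $u\in U$. Conditionally on the sample, $\zeta_{i}^{(j)}(u)=I(\phi_{T}^{(j)}\xi_{i}^{(j)}\le u)$ are i.i.d.\ Bernoulli with success probability $p_{T}(u)=\Phi(u/\phi_{T}^{(j)})$, where $\Phi$ is the standard normal c.d.f. Since $\phi_{T}^{(j)}\to\infty$, we have $p_{T}(u)\to\Phi(0)=1/2$, with $p_{T}(u)-\tfrac12 = \phi_{\mathrm{pdf}}(0)\,u/\phi_{T}^{(j)} + o(1/\phi_{T}^{(j)})$; crucially the bias is $O(1/\phi_{T}^{(j)})$ uniformly in $u$ on the (second-moment-integrable) support. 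Writing $\theta_{T,M}^{(j)}(u) = M^{-1/2}\sum_{i=1}^{M} 2(\zeta_{i}^{(j)}(u)-p_{T}(u)) + 2\sqrt{M}\,(p_{T}(u)-\tfrac12)$, the first term is a normalised sum of bounded i.i.d.\ mean-zero variables with variance $4p_{T}(u)(1-p_{T}(u))\to 1$, so by the Lindeberg CLT it converges (conditionally, for a.e.\ sample) to $N(0,1)$; the deterministic bias term is $O(\sqrt{M}/\phi_{T}^{(j)})$, which the rate constraint (\ref{rate-constraint}) — recalling $\phi_{T}^{(j)}\ge \exp(c_{0}T^{1-\kappa-\epsilon})-1$ from (\ref{lambdaj-1}) and (\ref{phi-j-t}) — forces to $0$. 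Hence $\theta_{T,M}^{(j)}(u)\overset{D^{\ast}}{\to}N(0,1)$ for each $u$. To get (\ref{null-convergence}) I would upgrade this to process-level convergence of $u\mapsto\theta_{T,M}^{(j)}(u)$: the covariance kernel at $(u,v)$ is $4(p_{T}(u\wedge v) - p_{T}(u)p_{T}(v))\to 1$ for all $u,v$, i.e.\ the limit process is the \emph{degenerate} Gaussian process $\theta(u)\equiv Z$ with $Z\sim N(0,1)$ (constant in $u$, because in the limit the single Gaussian draw $\xi_{i}^{(j)}$ is compared against the same threshold $0$ for every $u$). Then $\Theta_{T,M}^{(j)}=\int_{U}[\theta_{T,M}^{(j)}(u)]^{2}dF(u)\overset{D^{\ast}}{\to}\int_{U} Z^{2}\,dF(u)=Z^{2}\sim\chi_{1}^{2}$ by the continuous mapping theorem, using Assumption \ref{weight} (i)–(ii) to justify passing the limit through the integral (tightness/uniform integrability of $\int_{U}[\theta_{T,M}^{(j)}(u)]^{2}dF(u)$ via the $\int u^{2}dF<\infty$ bound on the bias contribution and the uniform $L^{2}$ bound on the martingale part).

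\textbf{Under $H_{A}$.} Here $\phi_{T}^{(j)}\to 0$ a.s., so $p_{T}(u)=\Phi(u/\phi_{T}^{(j)})\to I(u>0)$ (or $\to\Phi(+\infty)=1$, resp.\ $0$, depending on $\mathrm{sgn}\,u$) — in any case $p_{T}(u)(1-p_{T}(u))\to 0$ for $u\neq 0$. By the conditional LLN, $M^{-1}\sum_{i}\zeta_{i}^{(j)}(u)\overset{P^{\ast}}{\to} I(u>0)$, so $M^{-1/2}\theta_{T,M}^{(j)}(u) = 2\big(M^{-1}\sum_{i}\zeta_{i}^{(j)}(u)-\tfrac12\big)\overset{P^{\ast}}{\to} 2(I(u>0)-\tfrac12)=\mathrm{sgn}(u)$, whence $M^{-1}[\theta_{T,M}^{(j)}(u)]^{2}\overset{P^{\ast}}{\to}1$ for $F$-a.e.\ $u$ (the point $u=0$ is $F$-null under a continuous $F$, and even a Rademacher $F$ puts no mass there). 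Integrating against $dF$ and invoking dominated convergence (the integrand is bounded by $1$, using $M^{-1}[\theta_{T,M}^{(j)}(u)]^{2}\le 1$ since $\zeta_{i}^{(j)}\in\{0,1\}$ gives $|\theta_{T,M}^{(j)}(u)|\le\sqrt{M}$) yields $M^{-1}\Theta_{T,M}^{(j)}\overset{P^{\ast}}{\to}\int_{U}1\,dF(u)=1$ — but wait, this gives $1$, not $1/4$. The resolution, and the point I would be careful about, is the normalisation: $\theta_{T,M}^{(j)}(u)=\tfrac{2}{\sqrt M}\sum(\zeta_{i}-\tfrac12)$, so $M^{-1}\Theta_{T,M}^{(j)} = \int_{U}\big(\tfrac{2}{M}\sum_{i}(\zeta_{i}^{(j)}(u)-\tfrac12)\big)^{2}dF(u)$, and $\tfrac{2}{M}\sum_{i}(\zeta_{i}^{(j)}(u)-\tfrac12)\overset{P^{\ast}}{\to}2(I(u>0)-\tfrac12)$ — and for a symmetric weight $F$ (Rademacher on $\{-c,c\}$, or any symmetric distribution) the limit $\{+1$ on $u>0$, $-1$ on $u<0\}$ squared is $1$; so in fact the constant is sensitive to whether $\phi_{T}^{(j)}\to 0^{+}$ through the behaviour of $\Phi(u/\phi_{T}^{(j)})$ — re-examining, when $\phi_{T}^{(j)}\downarrow 0$ we get $p_{T}(u)\to 1$ for every fixed $u$ if $\phi_{T}^{(j)}$ is eventually positive and $u$ ranges over $U\subset\mathbb R$, giving limit $2(1-\tfrac12)^{2}=1$ again; to land on $\tfrac14$ one uses instead that, conditionally, $\zeta_{i}^{(j)}(u)$ has mean exactly $\tfrac12$ when $\phi_{T}^{(j)}=0$ would force symmetry — the precise bookkeeping of this limiting constant, and in particular reconciling it with the claimed $\tfrac14$, is the one spot where I would slow down and track the normalisation in (\ref{theta-minor})–(\ref{theta-maior}) exactly against the sign conventions for $u$.

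\textbf{Main obstacle.} The genuinely delicate step is the functional (process-in-$u$) convergence under $H_{0}$ and the interchange of limit and $\int_{U}\cdot\,dF$: one must establish joint convergence of $(\theta_{T,M}^{(j)}(u_{1}),\dots,\theta_{T,M}^{(j)}(u_{k}))$ to the degenerate Gaussian vector $(Z,\dots,Z)$ for all finite $u$-tuples \emph{and} enough tightness/uniform integrability (from Assumption \ref{weight}(ii)) to push the continuous map $g\mapsto\int g^{2}dF$ through; the bias control hinges critically on the rate constraint (\ref{rate-constraint}) being matched to the explicit almost-sure lower bound $\phi_{T}^{(j)}\gtrsim \exp(c_{0}T^{1-\kappa-\epsilon})$ from Theorem \ref{s11s00}, so I would verify that $M^{1/2}/\phi_{T}^{(j)}\to 0$ follows from $M^{1/2}\exp(-T^{1-\kappa-\epsilon})\to 0$ uniformly over the support $U$ after weighting by $dF$.
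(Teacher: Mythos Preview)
Your approach under $H_{0}$ is sound but takes a more circuitous route than the paper. Where you center at $p_{T}(u)=\Phi(u/\phi_{T}^{(j)})$ and then argue that the $u$-indexed process converges to a \emph{degenerate} Gaussian (constant in $u$), the paper instead decomposes
\[
M^{-1/2}\sum_{i}\bigl(\zeta_{i}^{(j)}(u)-\tfrac12\bigr)
= M^{-1/2}\sum_{i}\bigl[I(\xi_{i}^{(j)}\le 0)-\tfrac12\bigr]
+ M^{1/2}\bigl(\Phi(u/\phi_{T}^{(j)})-\tfrac12\bigr)
+ R_{T,M}(u),
\]
where the leading term is \emph{explicitly} $u$-independent. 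After bounding the bias and the remainder $R_{T,M}(u)$ in $L^{2}(dF)$ (using $|\Phi(u/\phi_{T}^{(j)})-\tfrac12|\le |u|/(\sqrt{2\pi}\,\phi_{T}^{(j)})$ and Assumption~\ref{weight}(ii)), one has $\Theta_{T,M}^{(j)}=\bigl\{2M^{-1/2}\sum_{i}[I(\xi_{i}^{(j)}\le 0)-\tfrac12]\bigr\}^{2}+o_{P^{\ast}}(1)$, and the scalar CLT finishes it. This sidesteps entirely the process-level tightness step you flag as the main obstacle; your route would work, but the paper's is cleaner.

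Under $H_{A}$, your computation is correct and your unease about the constant is warranted: with $\theta_{T,M}^{(j)}(u)=\tfrac{2}{\sqrt{M}}\sum_{i}(\zeta_{i}^{(j)}(u)-\tfrac12)$ and $\phi_{T}^{(j)}\downarrow 0$, one has $M^{-1}[\theta_{T,M}^{(j)}(u)]^{2}\to 4(I(u>0)-\tfrac12)^{2}=1$ for $u\neq 0$, so $M^{-1}\Theta_{T,M}^{(j)}\overset{P^{\ast}}{\to}1$, not $\tfrac14$. The paper's own proof under $H_{A}$ works with $M^{-1/2}\sum_{i}(\zeta_{i}^{(j)}(u)-\tfrac12)=\tfrac12\theta_{T,M}^{(j)}(u)$, for which the squared limit is indeed $\tfrac14$; the stated constant in (\ref{alt-convergence}) is thus inconsistent with the factor of $2$ in (\ref{theta-minor}). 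This is a bookkeeping slip in the paper, not a flaw in your argument --- the divergence of $\Theta_{T,M}^{(j)}$ at rate $M$ is what matters for the test, and that you have established.
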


Theorem \ref{theta} provides the limiting behaviour of $\Theta
_{T,M}^{\left( j\right) }$, also illustrating the impact of $M$ on the size
and power trade-off. According to (\ref{alt-convergence}), the larger $M$
the higher the power. Conversely, upon inspecting the proof, it emerges that 
$\theta _{T,M}^{\left( j\right) }\left( u\right) $ contains a non-centrality
parameter of order $O\left( M^{1/2}\exp \left( -T^{1-\kappa -\epsilon
}\right) \right) $, whence the upper bound in (\ref{rate-constraint}).

The one-shot test developed in this section has at least three advantages
compared to existing methods. First, our approach can also be implement to
check (asymptotic) strict stationarity. Indeed, running the test for $j=1$
corresponds to the null hypothesis that the data are driven by at least one
common trend; rejection supports the alternative of stationarity. Second,
running the test with $j=N$ corresponds to the null hypothesis that the $N$
variables do not cointegrate, thus offering a test for the null of no
cointegration against the alternative of (at least) one cointegrating
relation. Finally, we point out a further advantage over the well-known
method of \citet{johansen1991}. Johansen's likelihood ratio test allows to
test the null of rank $R$ (i.e., of $m=N-R$ common trends), where $R$ is
user-chosen, versus the alternative of rank greater than $R$ (i.e., less
than $N-R$ common trends). However, while the limiting distribution under
the null is well-known, if the true rank is \textit{lower} than $R$, then
the limiting distribution is different (see \citealp{bernstein2019}). Hence,
Johansen's test should be use only if the practitioner knows that the rank
cannot be lower than $R$. In contrast, since the null hypothesis is
formulated as a minimum bound on the number of common trends, our test does
not have this drawback.

A final remark on the test is in order. Letting $0<\alpha <1$ denote the
nominal level of the test, and defining $c_{\alpha }$ such that $P\left(
\chi _{1}^{2}>c_{\alpha }\right) =\alpha $, an immediate consequence of the
theorem is that under $H_{A}$ it holds that $\lim_{\min \left( T,M\right)
\rightarrow \infty }P^{\ast }(\Theta _{T,M}^{\left( j\right) }>c_{\alpha
})=1 $ for almost all realisations of $\left\{ \varepsilon _{t},0<t<\infty
\right\} $ - i.e. the test is consistent under the alternative. Conversely,
under $H_{0}$ we have, for almost all realisations of $\left\{ \varepsilon
_{t},0<t<\infty \right\} $%
\begin{equation}
\lim_{\min \left( T,M\right) \rightarrow \infty }P^{\ast }(\Theta
_{T,M}^{\left( j\right) }>c_{\alpha })=\alpha .  \label{size}
\end{equation}%
As noted also in \citet{HT16}, our test is constructed using a randomisation
which does not vanish asymptotically, and therefore the asymptotics of $%
\Theta _{T,M}^{\left( j\right) }$ is driven by the added randomness. Thus,
different researchers using the same data will obtain different values of $%
\Theta _{T,M}^{\left( j\right) }$ and, consequently, different $p$-values;
indeed, if an infinite number of researchers were to carry out the test, the 
$p$-values would follow a uniform distribution on $\left[ 0,1\right] $. This
is a well-known feature of randomised tests. In order to ameliorate this, %
\citet{HT16} suggest to compute $\Theta _{T,M}^{\left( j\right) }$ $S$
times, at each time $s$ using an independent sequence $\left\{ \xi
_{i,s}^{\left( j\right) }\right\} $ for $1\leq j\leq M$ and $1\leq s\leq S$,
thence defining%
\begin{equation}
Q_{\alpha ,S}=S^{-1}\sum_{s=1}^{S}I\left[ \Theta _{T,M,s}^{\left( j\right)
}\leq c_{\alpha }\right] .  \label{q}
\end{equation}%
Based on standard arguments, under $H_{0}$ the LIL\ yields%
\begin{equation}
\lim \inf_{S\rightarrow \infty }\lim_{\min \left( T,M\right) \rightarrow
\infty }\sqrt{\frac{S}{2\ln \ln S}}\frac{Q_{\alpha ,S}-\left( 1-\alpha
\right) }{\sqrt{\alpha \left( 1-\alpha \right) }}=-1.  \label{lil}
\end{equation}%
Hence, a \textquotedblleft strong rule\textquotedblright\ to decide in
favour of $H_{0}$ is 
\begin{equation}
Q_{\alpha ,S}\geq \left( 1-\alpha \right) -\sqrt{\alpha \left( 1-\alpha
\right) }\sqrt{\frac{2\ln \ln S}{S}}.  \label{strong}
\end{equation}%
Decisions made on the grounds of (\ref{strong}) have vanishing probabilities
of Type I and Type II errors, and are the same for all researchers: having $%
S\rightarrow \infty $ washes out the added randomness. No restrictions are
required on $S$ as it passes to infinity, so it can be arbitrarily large.

\subsection{Determining $m$\label{estimation_R}}

In order to determine the number of common trends $m$, we propose to cast
the individual one-shot tests discussed above in a sequential procedure,
where different values $j=1,2,...$ for $m$ are tested sequentially (note
that the individual tests must be based on artificial random samples
independent across $j$, see below). \newline
The estimator of $m$ (say, $\widehat{m}$) is the output of the following
algorithm: 

\smallskip

\noindent \textsc{Algorithm 1.}

\noindent \textbf{Step 1} Run the test for $H_{0}:m\geq 1$ based on $\Theta
_{T,M}^{\left( 1\right) }$. If the null is rejected, set $\widehat{m}=0$\
and stop, otherwise go to the next step.

\noindent \textbf{Step 2} Starting from $j=2$, run the test for $H_{0}:m\geq
j$ based on $\Theta _{T,M}^{\left( j\right) }$, constructed using an
artificial sample $\{\xi _{i}^{\left( j\right) }\}_{i=1}^{M}$ generated
independently of $\{\xi _{i}^{\left( 1\right) }\}_{i=1}^{M},$ $...,$ $\{\xi
_{i}^{\left( j-1\right) }\}_{i=1}^{M}$. If the null is rejected, set $%
\widehat{m}=j-1$\ and stop; otherwise, if $j=N$, set $\widehat{m}=N$;
otherwise, increase $j$ and repeat Step 2.

\smallskip

Consistency of the proposed procedure is presented in the next theorem.

\begin{theorem}
\label{family}Let Assumptions \ref{as-B}-\ref{weight} hold and define the
critical value of each individual test as $c_{\alpha }=c_{\alpha }\left(
M\right) $. As $\min \left( T,M\right) \rightarrow \infty $ under (\ref%
{rate-constraint}), if $c_{\alpha }\left( M\right) \rightarrow \infty $ with 
$c_{\alpha }=o\left( M\right) $, then it holds that $P^{\ast }(\widehat{m}%
=m)=1$ for\ almost all realisations of $\left\{ \varepsilon _{t},-\infty
<t<\infty \right\} $.
\end{theorem}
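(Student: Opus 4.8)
The plan is to turn the stopping rule of Algorithm 1 into an explicit event written in terms of the rejection events of the individual one-shot tests, and then to evaluate its $P^{\ast}$-probability factor by factor using Theorem \ref{theta}. Write $A_{j}=\{\Theta_{T,M}^{(j)}>c_{\alpha}(M)\}$ for the event that the $j$-th test rejects $H_{0}:m\geq j$. Because the artificial samples $\{\xi_{i}^{(j)}\}_{i=1}^{M}$ are drawn independently across $j$ and independently of the data, conditionally on the data $\Theta_{T,M}^{(j)}$ is a deterministic function of $\{\xi_{i}^{(j)}\}_{i=1}^{M}$ alone, so the events $A_{1},A_{2},\dots$ are mutually independent under $P^{\ast}$. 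By the definition of $\widehat{m}$, the first rejection occurs at index $j$ if and only if $\widehat{m}=j-1$; hence for $0\leq m\leq N-1$ we have $\{\widehat{m}=m\}=A_{1}^{c}\cap\cdots\cap A_{m}^{c}\cap A_{m+1}$ (the intersection over $j\leq m$ being the sure event when $m=0$), while for $m=N$ we have $\{\widehat{m}=N\}=A_{1}^{c}\cap\cdots\cap A_{N}^{c}$. Consequently, by independence, $P^{\ast}(\widehat{m}=m)$ equals a product of at most $N$ factors: $\prod_{j\leq m}P^{\ast}(A_{j}^{c})$ times $P^{\ast}(A_{m+1})$ when $m<N$.

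Next I would bound each factor. For $j\leq m$ the null $H_{0}:m\geq j$ holds, so the first part of Theorem \ref{theta} gives $\Theta_{T,M}^{(j)}\overset{D^{\ast}}{\rightarrow}\chi_{1}^{2}$ for almost all realisations of $\{\varepsilon_{t}\}$; since $c_{\alpha}(M)\rightarrow\infty$, for any fixed $K>0$ and all large $(T,M)$ one has $A_{j}\subseteq\{\Theta_{T,M}^{(j)}>K\}$, whence $\limsup P^{\ast}(A_{j})\leq P(\chi_{1}^{2}>K)$, and letting $K\rightarrow\infty$ yields $P^{\ast}(A_{j})\rightarrow 0$, i.e. $P^{\ast}(A_{j}^{c})\rightarrow 1$. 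For the index $j=m+1$ (relevant only when $m<N$) the alternative $H_{A}:m<m+1$ holds, so the second part of Theorem \ref{theta} gives $M^{-1}\Theta_{T,M}^{(m+1)}\overset{P^{\ast}}{\rightarrow}\tfrac{1}{4}$ for almost all realisations; since $c_{\alpha}(M)=o(M)$, we get $P^{\ast}(A_{m+1})=P^{\ast}(M^{-1}\Theta_{T,M}^{(m+1)}>c_{\alpha}(M)/M)\rightarrow 1$, because $c_{\alpha}(M)/M\rightarrow 0$ while the left-hand side concentrates at $\tfrac{1}{4}>0$. A product of finitely many factors each tending to $1$ tends to $1$, so $P^{\ast}(\widehat{m}=m)\rightarrow 1$ along $\min(T,M)\rightarrow\infty$ under (\ref{rate-constraint}); moreover only finitely many (at most $N$) full-probability sets from Theorem \ref{theta} are used, and their intersection is again of full probability, which yields the conclusion for almost all realisations of $\{\varepsilon_{t}\}$.

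The argument is essentially routine given Theorem \ref{theta}. The steps requiring care are (i) the bookkeeping that converts the sequential stopping rule into the explicit intersections of the $A_{j}$, in particular handling the boundary cases $m=0$ (immediate rejection at $j=1$) and $m=N$ (no rejection through $j=N$); (ii) making explicit that the randomisations enter the different tests independently, so that the joint event factorises; and (iii) noting that the rate restriction (\ref{rate-constraint}) does not depend on $j$ and is therefore satisfied simultaneously for all tests. The only mild obstacle I anticipate is the careful treatment of the \emph{almost all realisations} qualifier across the finitely many relevant indices $j$, which is handled simply by observing that a finite intersection of null sets is null.
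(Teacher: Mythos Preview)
Your proof is correct and follows essentially the same route as the paper: both exploit the conditional independence of the tests across $j$ and invoke Theorem \ref{theta} to show that the non-rejection probabilities for $j\leq m$ and the rejection probability for $j=m+1$ all tend to $1$. The only cosmetic difference is that the paper parametrises the individual rejection probabilities by $\alpha$ (size) and $\pi$ (power), computes $P^{\ast}(\widehat{m}=j)$ for every $j$ and takes the complement, whereas you compute $P^{\ast}(\widehat{m}=m)$ directly as the product $\prod_{j\leq m}P^{\ast}(A_{j}^{c})\cdot P^{\ast}(A_{m+1})$; your handling of the power via (\ref{alt-convergence}) and $c_{\alpha}(M)/M\rightarrow 0$ is in fact slightly cleaner than the paper's explicit normal-approximation bound.
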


Theorem \ref{family} states that $\widehat{m}$ is consistent, as long as the
nominal level $\alpha $\ of the individual tests is chosen so as to drift to
zero: no specific rates are required. This can be better understood upon
inspecting the proof of the theorem: letting $\alpha $ denote the level of
each \ individual test, in (\ref{procedure-size}), we show that, as $\min
\left( T,M\right) \rightarrow \infty $, $P^{\ast }(\widehat{m}=m)=\left(
1-\alpha \right) ^{N-m}$, whence the requirement $c_{\alpha }\rightarrow
\infty $, which entails$\ \alpha \rightarrow 0$. This can also be read in
conjunction with Johansen's procedure (\citealp{johansen1991}) and its
bootstrap implementations (\citealp{cavaliereECTA}), whose outcome is an
estimate of $m$, say $\widetilde{m}$, such that, asymptotically, $P(%
\widetilde{m}=m)=1-\alpha $ for a given nominal value $\alpha $ for the
individual tests. On account of (\ref{procedure-size}), in our case choosing
a fixed nominal level $\alpha $ would yield, as mentioned above, $P^{\ast }(%
\widehat{m}=m)=\left( 1-\alpha \right) ^{N-m}$, which depends on the unknown 
$m$ and is, for $m>1$, worse than Johansen's procedure. On the other hand,
in our procedure the individual tests are independent (conditional on the
sample). Thus, in order to match the $1-\alpha $ probability of selecting
the true $m$, one can use a Bonferroni correction with $\alpha /N$ as
nominal level for each test, rather than $\alpha $. In this case, the same
calculations as in the proof of Theorem \ref{family} (and Bernoulli's
inequality) yield 
\begin{equation}
P^{\ast }(\widehat{m}=m)=\left( 1-\alpha /N\right) ^{N-m}\geq 1-\frac{N-m}{N}%
\alpha \geq 1-\alpha .  \label{bonferroni}
\end{equation}%
Finally, as an alternative to Bonferroni correction, one could consider the
following top-down algorithm.

\smallskip

\noindent \textsc{Algorithm 2.}

\noindent \textbf{Step 1} Run the test for $H_{0}:m=N$ based on $\Theta
_{T,M}^{\left( N\right) }$. If the null is not rejected, set $\widetilde{m}%
=N $\ and stop, otherwise go to the next step.

\noindent \textbf{Step 2} Starting from $j=N-1$, run the test for $%
H_{0}:m\geq j$ based on $\Theta _{T,M}^{\left( j\right) }$, constructed
using an artificial sample $\{\xi _{i}^{\left( j\right) }\}_{i=1}^{M}$
generated independently of $\{\xi _{i}^{\left( 1\right) }\}_{i=1}^{M},$ $%
..., $ $\{\xi _{i}^{\left( j-1\right) }\}_{i=1}^{M}$. If the null is
rejected, set $\widetilde{m}$\ and stop; otherwise, if $j=1$, set $%
\widetilde{m}$; otherwise, decrease $j$ and repeat Step 2.

\smallskip

The proof of the consistency of $\widetilde{m}$, and some Monte Carlo
evidence (showing that the performance of $\widetilde{m}$ is virtually
indistinguishable from that of $\widehat{m}$), are in the Supplement.

\subsection{Estimation of the common trends\label{esttrend}}

Recall the common trend representation provided in (\ref{trends}),%
\begin{equation*}
y_{t}=Px_{t}+u_{t}\text{.}
\end{equation*}%
After determining $m$, it is possible to estimate the non-stationary common
stochastic trends $x_{t}$ by using Principal Components (PC), in a similar
fashion to \citet{penaponcela} and \citet{zyr18}.

In particular, let $\widehat{\upsilon }_{j}$ denote the eigenvector
corresponding to the $j$-th largest eigenvalue of $S_{11}$ under the
orthonormalisation restrictions $\left\Vert \widehat{\upsilon }%
_{j}\right\Vert =1$ and $\widehat{\upsilon }_{i}^{\prime }\widehat{\upsilon }%
_{j}=0$ for all $i\neq j$. Then, defining $\widehat{P}=(\widehat{\upsilon }%
_{1},...,\widehat{\upsilon }_{m})$, the estimator of the common trends $%
x_{t} $ is%
\begin{equation*}
\widehat{x}_{t}=\widehat{P}^{\prime }y_{t}.
\end{equation*}%
The next theorem provides the consistency (up to a rotation) of the
estimators of $P$ and $x_{t}$. Interestingly, the convergence rate of $%
\widehat{P}$ is not affected by the tail index.

\begin{theorem}
\label{bai04}We assume that Assumptions \ref{as-B}-\ref{function} are
satisfied. Then it holds that%
\begin{equation}
\left\Vert \widehat{P}-PH\right\Vert =O_{P}\left( T^{-1}\right) ,
\label{loadings}
\end{equation}%
\begin{equation}
\left\Vert \widehat{x}_{t}-H^{-1}x_{t}\right\Vert =O_{P}\left( 1\right)
+O_{P}\left( T^{-1+1/\eta }\right) ,  \label{factors}
\end{equation}%
for each $1\leq t\leq T$, where $H$ is an $N\times N$ invertible matrix.
\end{theorem}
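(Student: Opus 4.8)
The plan is to run the by-now-standard principal-components perturbation argument (in the spirit of \citet{penaponcela} and \citet{zyr18}, adapted to a non-stationary panel), with the usual second-moment rates replaced by the heavy-tailed bounds of Propositions \ref{s11-lambda}--\ref{s00-lambda}. Using the common-trend representation $y_{t}=Px_{t}+u_{t}$ from (\ref{trends}), write $M_{xx}=\sum_{t=1}^{T}x_{t}x_{t}^{\prime }$, $M_{xu}=\sum_{t=1}^{T}x_{t}u_{t}^{\prime }$ and $M_{uu}=\sum_{t=1}^{T}u_{t}u_{t}^{\prime }$, so that $S_{11}=PM_{xx}P^{\prime }+R$ with $R=PM_{xu}+M_{xu}^{\prime }P^{\prime }+M_{uu}$. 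Two ingredients are needed. First, a lower bound: since $x_{t}$ is an $m$-dimensional pure random walk and $P,Q$ have full rank, applying (\ref{s11-lambda-1}) to $x_{t}$ gives $\lambda ^{(m)}(PM_{xx}P^{\prime })\geq c_{0}T^{1+2/\eta }(\ln \ln T)^{-2/\eta }$ for all $T$ past an a.s.\ finite $T_{0}$; combined with (\ref{s11-lambda-2}) this also shows that for $T$ large the $m$-th and $(m{+}1)$-th eigenvalues of $S_{11}$ separate, so $\widehat{P}$ is well defined (up to column signs). Second, a remainder bound $\Vert R\Vert =o_{a.s.}(T^{2/\eta }\varrho _{T})$ with $\varrho _{T}$ slowly varying: $\Vert M_{uu}\Vert $ is controlled exactly as $\lambda ^{(1)}(S_{00})$ in Proposition \ref{s00-lambda} (the same object: a sum of outer products of a stationary, geometrically mixing MA of i.i.d.\ heavy-tailed innovations), while for $\Vert M_{xu}\Vert $ the Cauchy--Schwarz bound $\Vert M_{xx}\Vert ^{1/2}\Vert M_{uu}\Vert ^{1/2}\asymp T^{1/2+2/\eta }$ is too crude.

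To get the sharp bound on $M_{xu}$, I would use $u_{t}=C^{\ast }(L)\varepsilon _{t}$ with the geometric decay of $C_{j}^{\ast }$ (Assumption \ref{as-B}(ii)) together with summation by parts, reducing $M_{xu}$ to sums of the martingale-type array $\sum_{t=1}^{T}x_{t-1}\varepsilon _{t}^{\prime }$. Its natural scale is $\left( \sum_{t=1}^{T}\Vert x_{t-1}\Vert ^{\eta }\right) ^{1/\eta }\asymp (T^{2})^{1/\eta }=T^{2/\eta }$, and a Chover-type law of the iterated logarithm (of the kind behind Proposition \ref{s00-lambda}) yields $\Vert M_{xu}\Vert =o_{a.s.}(T^{2/\eta }\varrho _{T})$, hence $\Vert R\Vert =o_{a.s.}(T^{2/\eta }\varrho _{T})$.

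For the loadings, let $\widehat{\Lambda }=\mathrm{diag}(\lambda ^{(1)}(S_{11}),\dots ,\lambda ^{(m)}(S_{11}))$, so $S_{11}\widehat{P}=\widehat{P}\widehat{\Lambda }$; substituting $S_{11}=PM_{xx}P^{\prime }+R$ and postmultiplying by $\widehat{\Lambda }^{-1}$ gives the exact identity $\widehat{P}=PH+R\widehat{P}\widehat{\Lambda }^{-1}$ with $H:=M_{xx}P^{\prime }\widehat{P}\widehat{\Lambda }^{-1}$. Since $\Vert \widehat{P}\Vert =1$ and, by (\ref{s11-lambda-1}), $\Vert \widehat{\Lambda }^{-1}\Vert \leq c_{0}T^{-(1+2/\eta )}(\ln \ln T)^{2/\eta }$, the remainder is $\Vert R\widehat{P}\widehat{\Lambda }^{-1}\Vert \leq \Vert R\Vert \,\Vert \widehat{\Lambda }^{-1}\Vert =o_{a.s.}(T^{-1}\widetilde{\varrho }_{T})$, which after absorbing the slowly varying factor is (\ref{loadings}). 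Invertibility of $H$ for $T$ large follows from $\widehat{P}^{\prime }\widehat{P}=I_{m}$: it forces $\Vert H\Vert =O_{P}(1)$ and then $H^{\prime }(P^{\prime }P)H=I_{m}+o_{P}(1)$, so $H^{-1}$ exists with $\Vert H^{-1}\Vert =O_{P}(1)$ because $P^{\prime }P$ is positive definite. For the trends, $\widehat{x}_{t}=\widehat{P}^{\prime }y_{t}=H^{\prime }P^{\prime }Px_{t}+H^{\prime }P^{\prime }u_{t}+(R\widehat{P}\widehat{\Lambda }^{-1})^{\prime }y_{t}$; the first term is $\bar{H}^{-1}x_{t}$ with $\bar{H}:=(H^{\prime }P^{\prime }P)^{-1}$ invertible, the second is $O_{P}(1)$ for each fixed $t$ (as $u_{t}$ is stationary and $\Vert H\Vert =O_{P}(1)$), and the third is $\leq \Vert R\widehat{P}\widehat{\Lambda }^{-1}\Vert \,\Vert y_{t}\Vert =O_{P}(T^{-1})\cdot O_{P}(T^{1/\eta })=O_{P}(T^{-1+1/\eta })$ (up to slowly varying), using a Chover-type bound $\sup _{t\leq T}\Vert x_{t}\Vert =O_{P}(T^{1/\eta }\varrho _{T})$ and that $\max _{t\leq T}\Vert u_{t}\Vert $ is of lower order. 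This gives (\ref{factors}); the rotation matrix there is $\bar{H}$, related to the $H$ of (\ref{loadings}) by an invertible transformation, so the two can be taken to be a single invertible matrix after relabelling.

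The main obstacle is the cross-product bound $\Vert M_{xu}\Vert =o_{a.s.}(T^{2/\eta }\varrho _{T})$, i.e.\ showing that summing the $I(1)$ heavy-tailed process $x_{t}$ against the stationary $u_{t}$ yields only the order $T^{2/\eta }$ of $M_{uu}$, rather than the $T^{1/2+2/\eta }$ given by Cauchy--Schwarz. This is precisely what makes the separation between the two groups of eigenvalues of $S_{11}$ essentially a factor $T$ (not $T^{1/2}$), and hence what delivers the $T^{-1}$ rate in (\ref{loadings}); it requires a genuine Chover-type LIL for the heavy-tailed martingale array $\sum_{t\leq T}x_{t-1}\varepsilon _{t}^{\prime }$, together with careful bookkeeping of the $MA(\infty )$ filter $C^{\ast }(L)$.
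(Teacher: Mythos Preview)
Your architecture matches the paper's exactly: define $H=M_{xx}P'\widehat{P}\widehat{\Lambda}^{-1}$, use the eigenvector identity $S_{11}\widehat{P}=\widehat{P}\widehat{\Lambda}$ to write $\widehat{P}-PH=R\widehat{P}\widehat{\Lambda}^{-1}$, and combine $\|R\|=O_{P}(T^{2/\eta})$ with $\|\widehat{\Lambda}^{-1}\|=O_{P}(T^{-1-2/\eta})$ (the latter via the FCLT in \citet{paulauskas}, not an a.s.\ bound) to obtain (\ref{loadings}).

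Two points of comparison. First, you flag the cross-product $\|M_{xu}\|$ as ``the main obstacle'' and propose summation by parts plus a Chover-type LIL for the heavy-tailed martingale array $\sum_{t}x_{t-1}\varepsilon_{t}'$. The paper's route is much lighter: it simply invokes the moment calculations behind Proposition \ref{s11-lambda} (specifically the lemma controlling $P\sum_{t}x_{t}u_{t}'+\sum_{t}u_{t}x_{t}'P'$). For any $p<\eta$, a direct $L_{p/2}$- or $L_{p}$-moment bound on the scalar sums $\sum_{t}\sum_{s\leq t}\varepsilon_{s}\varepsilon_{t-m}$ (splitting into diagonal and off-diagonal parts, using the $C_{r}$-inequality when $p\leq 1$ and von Bahr--Esseen when $1<p\leq 2$) gives an $O(T^{2})$ bound on the $p$-th moment, and then Borel--Cantelli delivers $o_{a.s.}(T^{2/p}(\ln T)^{c})$; taking $p$ close to $\eta$ yields the rate. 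No martingale or LIL machinery is needed, and your heuristic scaling $(\sum_{t}\|x_{t-1}\|^{\eta})^{1/\eta}$ with random predictable weights would require substantial extra work to make rigorous.

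Second, for (\ref{factors}) you introduce a separate rotation $\bar{H}=(H'P'P)^{-1}$ and then argue it can be merged with $H$. The paper keeps the same $H$ throughout by writing $\widehat{x}_{t}-H^{-1}x_{t}=\widehat{P}'(P-\widehat{P}H^{-1})x_{t}+\widehat{P}'u_{t}$, using $\widehat{P}'\widehat{P}=I_{m}$ so that $P-\widehat{P}H^{-1}=(PH-\widehat{P})H^{-1}$; the first term is then $O_{P}(T^{-1})\cdot O_{P}(t^{1/\eta})$ and the second $O_{P}(1)$. This avoids the relabelling step and matches the theorem statement, which uses a single $H$ in both conclusions.
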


Theorem \ref{bai04} states that both $\widehat{P}$ and $\widehat{x}_{t}$ are
consistent estimators of $P$ and $x_{t}$ -- up to an invertible linear
transformation, since it is only possible to provide a consistent estimate
of the eigenspace, as opposed to the individual eigenvectors. Equation (\ref%
{loadings}) states that the estimated loadings\ $\widehat{P}$ provide a
superconsistent estimator of (a linear combination of the columns of) $P$.
This result, which is the same as in the case of finite variance, is a
consequence of the fact that $x_{t}$ is an \textquotedblleft
integrated\textquotedblright\ process, and it is related to the eigen-gap
found in Proposition \ref{s11-lambda}. Equation (\ref{loadings}) could also
be read in conjunction with the literature on large factor models, where --
contrary to our case -- it is required that $N\rightarrow \infty $. In that
context, \citet{bai04} obtains the same result as in (\ref{loadings}) albeit
for the case of finite variance: thus, in the presence of integrated
processes, the PC\ estimator is always superconsistent, irrespective of $N$
passing to infinity or not.

According to (\ref{factors}), $\widehat{x}_{t}$ also is a consistent
estimator of the space spanned by $x_{t}$. The \textquotedblleft
noise\textquotedblright\ component does not drift to zero and, when $\eta <1$%
, it may even diverge; however, the \textquotedblleft
signal\textquotedblright\ $x_{t}$ is of order $O_{P}\left( t^{1/\eta
}\right) $, and therefore it dominates the estimation error (in fact, when $%
\eta <1$, the estimation error is smaller by a factor $T$). This result can
be compared to the estimator proposed by \citet{gg95}, which is studied
under finite second moment and requires a full specification of the VECM.
Furthermore, (\ref{factors}) can also be compared with the findings in the
large factor models literature (see, in particular, Lemma 2 in %
\citealp{bai04}). As far as uniform rates in $t$ are concerned, in the proof
of the theorem we also show the strongest result%
\begin{equation}
\max_{1\leq t\leq T}\left\Vert \widehat{x}_{t}-H^{-1}x_{t}\right\Vert
=O_{P}\left( T^{1/\eta }\right) .  \label{unif-x}
\end{equation}%
This result is unlikely to be sharp, since it is based on the (rather crude)
fact that the maximum of any $T$-dimensional sequence with finite $p$-th
moment is bounded by $O_{P}\left( T^{1/p}\right) $.

\section{Extensions\label{extensions}}

The framework developed in the previous section does not allow for
deterministic terms in the data, and requires $\varepsilon _{t}$ to be
identically distributed. We now discuss possible extensions of our set-up,
to accommodate for heterogeneous innovations and deterministics. The main
result of this section is that our procedure can be implemented even in
these cases, with no modifications required.

\subsection{Heterogeneous innovations}

We consider a novel framework where we allow for innovation heterogeneity of
a very general form. Specifically, we assume that 
\begin{equation}
\varepsilon _{t}=h\left(\frac{t}{T}\right)u_{t}\text{,}  \label{bvh}
\end{equation}%
where $u_{t}$ satisfies Assumption \ref{as-moments} and $h\left( \cdot
\right) $ is a deterministic function. The representation in (\ref{bvh}) has
also been employed in order to consider the presence of heteroskedasticity
in the case of data with finite variance (see e.g. \citealp{cavaliere2009}).

\begin{assumption}
\label{function}$h\left( \cdot \right) $ is nontrivial, nonnegative and of
bounded variation on $\left[ 0,1\right] $.
\end{assumption}

The only requirement on the scale function $h\left( \cdot \right) $ is that
it has bounded variation within the interval $\left[ 0,1\right] $. Functions
of bounded variations and their properties are very well studied, and we
refer to e.g. \citet{hewitt} for details. The design in (\ref{bvh}) includes
several cases which can be of potential interest: $h\left( r\right) $ can be
piecewise linear, i.e. $h\left( r\right) =\sum_{i=1}^{n+1}h_{i}I\left(
c_{i-1}\leq r<c_{i}\right) $, with $c_{0}=0$ and $c_{n+1}=1$, thus
considering the possible presence of jumps/regimes in the heterogeneity of $%
\varepsilon _{t}$; or it could be a polynomial function.

\begin{corollary}
\label{heterosk}Let Assumptions \ref{as-B}-\ref{function} hold, with
Assumption \ref{as-moments} modified to contain only symmetric stable $u_{t}$%
. Then, as $\min \left( T,M\right) \rightarrow \infty $ with (\ref%
{rate-constraint}), it holds that, for all $j$%
\begin{equation}
P^{\ast }(\Theta _{T,M}^{\left( j\right) }>c_{\alpha })\rightarrow \alpha ,
\label{size-het}
\end{equation}%
under $H_{0}$, with probability tending to $1$. Under $H_{A}$, (\ref%
{alt-convergence}) holds for each $j$, for almost all realisations of $%
\left\{ u_{t},0<t<\infty \right\} $.
\end{corollary}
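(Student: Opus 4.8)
The plan is to reduce Corollary~\ref{heterosk} to the divergence/vanishing dichotomy for $\phi_T^{(j)}$ that already drives Theorems~\ref{s11s00}--\ref{theta}, and then re-establish that dichotomy for the modulated innovations $\varepsilon_t=h(t/T)u_t$. The key observation is that the three-step randomisation argument in the proof of Theorem~\ref{theta} is carried out conditionally on the data and uses the sample \emph{only} through the facts that, under $H_0$, $\phi_T^{(j)}\to\infty$ fast enough to dominate the non-centrality term of order $M^{1/2}\exp(-T^{1-\kappa-\epsilon})$ permitted by (\ref{rate-constraint}), and that, under $H_A$, $\phi_T^{(j)}\to0$. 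Hence, once these two limits are re-derived under (\ref{bvh}), (\ref{size-het}) and (\ref{alt-convergence}) follow by repeating verbatim the computations of that proof; and if the first limit can only be established in probability (rather than almost surely), the conclusion automatically degrades to ``with probability tending to $1$'', which is exactly the form of the statement.

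First I would record two elementary consequences of Assumption~\ref{function}: being of bounded variation, $h$ is bounded, $0\le h(r)\le\bar h<\infty$; and, by nontriviality together with the fact that a bounded-variation function is continuous off an at most countable set, there exist a subinterval $[a,b]\subseteq[0,1]$ and $\underline h>0$ with $h(r)\ge\underline h$ on $[a,b]$. The Beveridge--Nelson decomposition (\ref{bn})--(\ref{trends}) is unchanged, yielding $y_t=Px_t+C^{\ast}(L)\varepsilon_t$ with $x_t=Q\sum_{s=1}^{t}h(s/T)u_s$. I would then revisit the spectral bounds of Propositions~\ref{s11-lambda}--\ref{s00-lambda}. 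The upper bounds — $\lambda^{(j)}(S_{11})=o_{a.s.}(\cdot)$ for $j>m$ and the upper bound on $\lambda^{(1)}(S_{00})$ — need no new idea: every random quantity in their proofs is, at each $t$, dominated by $\bar h$ (or $\bar h^{2}$) times the corresponding quantity built from the i.i.d.\ sequence $\{u_t\}$, so the Chover-type almost-sure bounds persist for $\{u_t\}$-almost all realisations, with inflated constants. The lower bound on $\lambda^{(N)}(S_{00})$ is recovered by restricting the sum defining $S_{00}$ to the window $t\in[(a+\delta)T,bT]$, truncating the $C^{\ast}$-filter at $O(\ln T)$ lags (allowed by the geometric decay of Assumption~\ref{as-B}(ii) at the price of logarithmic factors), and applying the iterated-logarithm lower bound of Proposition~\ref{s00-lambda} to the resulting block: on that block $h((t-j)/T)\in[\underline h,\bar h]$, and Assumption~\ref{as-deltayt} still delivers the strong-mixing structure, since multiplication by a factor bounded away from $0$ and $\infty$ preserves the integral-Lipschitz condition uniformly over the block. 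This bound, too, holds $\{u_t\}$-almost surely.

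The delicate step, and the reason the symmetry of $u_t$ is imposed, is the lower bound on $\lambda^{(j)}(S_{11})$ for $j\le m$. The i.i.d.\ proof rests on a Chover law of the iterated logarithm for the \emph{fixed} sequence $\sum_{s\le t}u_s$, which is unavailable here because the weights $h(s/T)$ turn $x_{\lfloor Tr\rfloor}$ into a genuine triangular array. Instead I would use the functional weak convergence $T^{-1/\eta}x_{\lfloor T\cdot\rfloor}\Rightarrow Q\int_0^{\cdot}h(v)\,\mathrm{d}L_\eta(v)$ on $D[0,1]$, where $L_\eta$ is an $\eta$-stable L\'evy process — symmetry of $u_t$ ensuring this limit carries no drift/centering term (essential when $\eta=1$, and avoiding a deterministic trend $\sum_{s\le t}h(s/T)\E u_s$ when $\eta\in(1,2)$) — together with the continuous-mapping theorem, to get $T^{-(1+2/\eta)}S_{11}\overset{D}{\to}\int_0^1\xi(r)\xi(r)'\,\mathrm{d}r$ with $\xi(r)=PQ\int_0^{r}h\,\mathrm{d}L_\eta$. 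Nontriviality of $h$ makes this limit of rank exactly $m$ almost surely, so its $j$-th eigenvalue is a.s.\ strictly positive for $j\le m$; hence $\lambda^{(j)}(S_{11})$ attains the order in (\ref{s11-lambda-1}) with probability tending to $1$, but not almost surely.

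Finally I would combine these bounds through the generalised-eigenvalue inequalities $\lambda^{(j)}(S_{00}^{-1}S_{11})\ge\lambda^{(j)}(S_{11})/\lambda^{(1)}(S_{00})$ for $j\le m$ and $\lambda^{(j)}(S_{00}^{-1}S_{11})\le\lambda^{(j)}(S_{11})/\lambda^{(N)}(S_{00})$ for $j>m$, which give $T^{-\kappa}\lambda^{(j)}(S_{00}^{-1}S_{11})\to\infty$ in probability under $H_0$ and $\to0$ $\{u_t\}$-a.s.\ under $H_A$; feeding these into the proof of Theorem~\ref{theta} yields (\ref{size-het}) with probability tending to $1$, and (\ref{alt-convergence}) for almost all $\{u_t\}$, respectively. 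The main obstacle is thus the weighted-partial-sum functional limit theorem together with the almost-sure full rank of its limiting matrix; everything else is a combination of domination and truncation arguments already present in the i.i.d.\ proofs, and it is precisely the triangular-array nature of $x_{\lfloor Tr\rfloor}$ that forces the size statement down from ``almost surely'' to ``with probability tending to $1$''.
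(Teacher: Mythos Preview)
Your proposal is correct and follows essentially the same route as the paper: re-establish the upper eigenvalue bounds a.s.\ by domination in $\bar h$, obtain the lower bound on $\lambda^{(j)}(S_{11})$ for $j\le m$ only in probability via a weighted-partial-sum FCLT to a symmetric-stable integral (the paper cites \citet*{maejima} for this step) plus the continuous mapping theorem, then feed the resulting dichotomy for $\phi_T^{(j)}$ back into the randomisation argument of Theorem~\ref{theta}. The only notable difference is that the paper argues positivity of the limiting functional scalarly, invoking a small-ball result of \citet*{shilower} to show $P\!\left(\int_0^1 Y_{\eta,b}^2(r)\,dr>x\right)\to1$ as $x\downarrow0$ for each nonzero direction $b$, whereas you argue it at the matrix level via the a.s.\ rank of $\int_0^1\xi(r)\xi(r)'\,dr$; the two are equivalent, and your formulation is arguably cleaner for extracting the $m$-th eigenvalue bound directly.
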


Repeating the proof of Theorem \ref{family}, the results in Corollary \ref%
{heterosk} entail that, using the Algorithm 1 in Section \ref{estimation_R}, 
$P^{\ast }(\widehat{m}=m)\rightarrow 1$ with probability tending to 1: $%
\widehat{m}$ is still a consistent estimator of $m$. Similarly, it can be
shown that the asymptotic properties of our estimator of the common trends
and associated loadings are unaffected by heterogeneity in the scale, as in (%
\ref{bvh}).

\subsection{Deterministics}

In this section we show that our procedure can be also applied to data which
contain a deterministic term. Specifically, we consider the representation%
\begin{equation}
y_{t}=\mu +C\sum_{s=1}^{t}\varepsilon _{s}+C^{\ast }\left( L\right)
\varepsilon _{t}  \label{var-const}
\end{equation}%
where $C$ and $C^{\ast }\left( L\right) $ are defined as before. Equation (%
\ref{bn}) is derived from the multivariate Beveridge-Nelson decomposition of
the filter $C\left( L\right) $. As is known, (\ref{var-const}) can also be
obtained from a VECM representation (see \citealp{ling} and %
\citealp{yapreisnel}) 
\begin{equation}
\Delta y_{t}=\mu +\alpha \beta ^{\prime }y_{t-1}+\sum_{j=1}^{p-1}\Gamma
_{j}\Delta y_{t-j}+\varepsilon _{t},  \label{vecm-const}
\end{equation}%
under the constraint $\mu =\alpha \rho $ with $\rho$ an $m \times 1$ vector.

In this case, our procedure still yields the same results as without the
deterministic term.

\begin{corollary}
\label{constant} We assume that (\ref{vecm-const}) holds. Then, Theorems \ref%
{theta}, \ref{family} and \ref{bai04} hold under the same assumptions.
\end{corollary}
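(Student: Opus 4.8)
The plan is to reduce the statement to the \emph{i.i.d.}, deterministic-free case already settled in Theorems \ref{theta}, \ref{family} and \ref{bai04}, by showing that under the restriction $\mu=\alpha\rho$ the constant contributes \emph{nothing} to $S_{00}$ (up to exponentially small transients) and only a \emph{bounded, rank-at-most-two} perturbation to $S_{11}$. First I would rewrite (\ref{vecm-const}). Under the $I(1)$ conditions behind (\ref{vecm-const}), the Granger--Johansen representation gives $y_{t}=C\sum_{s=1}^{t}(\varepsilon_{s}+\mu)+C^{\ast }(L)(\varepsilon_{t}+\mu)+\tau_{t}$, where $\tau_{t}$ collects initial-value effects and satisfies $\Vert\tau_{t}\Vert=O(\rho^{t})$ by Assumption \ref{as-B}(ii). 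The would-be linear trend $tC\mu$ vanishes: writing $C=\beta_{\perp}(\alpha_{\perp}^{\prime}\Gamma\beta_{\perp})^{-1}\alpha_{\perp}^{\prime}$ one has $C\mu=C\alpha\rho=0$ because $\alpha_{\perp}^{\prime}\alpha=0$. Since $\mu$ is constant, $C^{\ast}(L)\mu=C^{\ast}(1)\mu=:\widetilde{\mu}$, so, with $x_{t}=Q\sum_{s\leq t}\varepsilon_{s}$ and $u_{t}=C^{\ast}(L)\varepsilon_{t}$ exactly as in (\ref{trends}), we obtain $y_{t}=Px_{t}+u_{t}+\widetilde{\mu}+\tau_{t}$: the model of Section \ref{theory} plus a fixed shift and an exponentially decaying term. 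In particular $\Delta y_{t}=C(L)\varepsilon_{t}+\Delta\tau_{t}$ with $\sum_{t}\Vert\Delta\tau_{t}\Vert^{2}<\infty$, so $S_{00}$ agrees with (\ref{s00}) up to an $O_{a.s.}(1)$ matrix and Proposition \ref{s00-lambda} holds unchanged.

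Second, I would write $S_{11}=S_{11}^{0}+D_{T}+\Delta_{T}$, where $S_{11}^{0}=\sum_{t=1}^{T}(Px_{t}+u_{t})(Px_{t}+u_{t})^{\prime}$ is the matrix analysed in Proposition \ref{s11-lambda}, $\Delta_{T}=O_{a.s.}(1)$ collects the cross terms involving $\tau_{t}$, and $D_{T}=\widetilde{\mu}v_{T}^{\prime}+v_{T}\widetilde{\mu}^{\prime}+T\widetilde{\mu}\widetilde{\mu}^{\prime}$ with $v_{T}=\sum_{t=1}^{T}(Px_{t}+u_{t})=P\sum_{t\leq T}x_{t}+\sum_{t\leq T}u_{t}$. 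The matrix $D_{T}$ has rank $\leq 2$, and using $\Vert\sum_{t\leq T}x_{t}\Vert\leq T\max_{t\leq T}\Vert x_{t}\Vert$ together with the Chover-type bounds on $\max_{t\leq T}\Vert x_{t}\Vert$ and on $\Vert\sum_{t\leq T}u_{t}\Vert$ (the same kind of law of the iterated logarithm used for Proposition \ref{s00-lambda}, now for the random walk $x_{t}$ and for the stationary strong-mixing sequence $u_{t}$) one gets $\Vert D_{T}\Vert=O_{a.s.}(T^{1+1/\eta})$ up to logarithmic factors. For the $m$ spiked eigenvalues, Weyl's inequality gives $\lambda^{(j)}(S_{11})\geq\lambda^{(j)}(S_{11}^{0})-\Vert D_{T}\Vert-\Vert\Delta_{T}\Vert$; since $T^{1+2/\eta}$ dominates $T^{1+1/\eta}$ ($\eta$ being finite), the right-hand side is still $\geq c_{0}T^{1+2/\eta}(\ln\ln T)^{-2/\eta}$ for $T\geq T_{0}$, so (\ref{s11-lambda-1}) survives. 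For the non-spiked ones, let $M_{P}=I-P(P^{\prime}P)^{-1}P^{\prime}$; by Courant--Fischer, $\lambda^{(m+1)}(S_{11})\leq\Vert M_{P}S_{11}M_{P}\Vert$. Because $M_{P}P=0$, $M_{P}S_{11}^{0}M_{P}=M_{P}\bigl(\sum_{t}u_{t}u_{t}^{\prime}\bigr)M_{P}$, which is $o_{a.s.}(T^{2/p}(\ln T)^{2(2+\epsilon)/p})$ by the argument that yields (\ref{s11-lambda-2}); likewise $M_{P}D_{T}M_{P}$ reduces to terms of norm at most $2\Vert\widetilde{\mu}\Vert\,\Vert\sum_{t}u_{t}\Vert+T\Vert\widetilde{\mu}\Vert^{2}$, which is $O_{a.s.}(T)+O_{a.s.}(T^{1/\eta})$ up to logarithmic factors, hence $o_{a.s.}(T^{2/p}(\ln T)^{2(2+\epsilon)/p})$ for every $0<p<\eta$ (and for $p=2$ when $\eta=2$). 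So (\ref{s11-lambda-2}) also survives, and Proposition \ref{s11-lambda} holds for (\ref{vecm-const}).

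With Propositions \ref{s11-lambda} and \ref{s00-lambda} re-established, Theorem \ref{s11s00} follows verbatim, and then Theorems \ref{theta} and \ref{family} follow exactly as in the \emph{i.i.d.} case, since from that point on the constant never reappears: $\phi_{T}^{(j)}$, the randomisation of Steps 1--3, and Algorithm 1 only see $S_{00}^{-1}S_{11}$ through its eigenvalues. For Theorem \ref{bai04} I would argue as in its original proof (eigenspace perturbation) with the extra perturbation $D_{T}$: the eigen-gap $\lambda^{(m)}(S_{11}^{0})-\lambda^{(m+1)}(S_{11}^{0})$ is of exact order $T^{1+2/\eta}$, while the block of $D_{T}$ coupling $\mathrm{col}(P)$ with $\mathrm{col}(P)^{\perp}$ is dominated by $P\bigl(\sum_{t}x_{t}\bigr)(M_{P}\widetilde{\mu})^{\prime}$, of order $T^{1+1/\eta}$ up to logarithmic factors; since the latter is of smaller order than the former, the span of the top $m$ eigenvectors of $S_{11}$ stays within the required order of $\mathrm{col}(P)$, so (\ref{loadings}) carries over, and then (\ref{factors}) follows because the additional error in $\widehat{x}_{t}=\widehat{P}^{\prime}y_{t}$ contributed by the fixed shift, $\widehat{P}^{\prime}\widetilde{\mu}=O_{P}(1)$, is absorbed into the $O_{P}(1)$ term already present there.

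The step I expect to be the main obstacle is the treatment of $D_{T}$, and specifically of the coupling term $\widetilde{\mu}\bigl(\sum_{t}x_{t}\bigr)^{\prime}$: the interaction between the fixed shift and the stochastic trend is of the \emph{intermediate} order $T^{1+1/\eta}$ --- larger, once $\eta>1$, than the $O(T^{2/p})$ level of the non-spiked eigenvalues --- so it cannot be discarded as ``noise''. The way around it is to exploit that this term lives in the low-dimensional subspace $\mathrm{col}(P)+\mathrm{span}\{\widetilde{\mu}\}$: the projection $M_{P}$ annihilates its $\mathrm{col}(P)$ part for the eigenvalue bounds, and, for the eigenvectors, it enters only through the off-diagonal block relative to an eigen-gap of the strictly larger order $T^{1+2/\eta}$. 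Everything else is a routine re-run of the \emph{i.i.d.} arguments, using only that $\widetilde{\mu}$ is a fixed vector and that $\alpha_{\perp}^{\prime}\alpha=0$ rules out a linear trend.
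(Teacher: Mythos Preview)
Your route is much more explicit than the paper's. The paper disposes of the corollary in one paragraph: it invokes P.~R.~Hansen's Granger representation to say that, under the restriction $\mu=\alpha\rho$, the level process still admits the decomposition (\ref{bn}) up to a constant shift (absorbed into the ``initial value'' remark made before (\ref{bn})), so Propositions~\ref{s11-lambda}--\ref{s00-lambda} hold verbatim and the three theorems follow. You instead write out $y_t=Px_t+u_t+\widetilde\mu+\tau_t$, identify the rank-$\le 2$ perturbation $D_T$ of $S_{11}$, and check Weyl/Courant--Fischer by hand. For Theorems~\ref{theta} and~\ref{family} your argument is correct: the lower bound on the spiked eigenvalues survives because $\Vert D_T\Vert=O_{a.s.}(T^{1+1/\eta})=o(T^{1+2/\eta})$, and the upper bound on the non-spiked ones survives because $M_P D_T M_P$ only sees $\sum_t u_t$ and $T\widetilde\mu\widetilde\mu'$, both $o_{a.s.}(T^{2/p})$ for the relevant $p$. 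From there the randomised test and the sequential procedure only use the eigen-gap, so those two theorems do carry over exactly as you say. Your approach buys a self-contained argument that does not rely on an external reference and makes the orders explicit; the paper's buys brevity.

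There is, however, a gap in your treatment of Theorem~\ref{bai04}. The Davis--Kahan-type bound you sketch gives a rotation of the top-$m$ eigenspace of order $\Vert\text{off-diagonal block of }D_T\Vert/\text{eigen-gap}$, i.e.\ $T^{1+1/\eta}/T^{1+2/\eta}=T^{-1/\eta}$, and the same order appears if one mimics the paper's explicit computation with the extra term $\widetilde\mu\,(\sum_t x_t)'P'\widehat P V^{-1}$. For $\eta\in(1,2]$ this is strictly weaker than the $O_P(T^{-1})$ rate in (\ref{loadings}); your assertion that ``(\ref{loadings}) carries over'' is therefore not justified by the argument you give, unless $M_P\widetilde\mu=0$ (i.e.\ $\widetilde\mu\in\mathrm{col}(P)$), which is not implied by $\mu=\alpha\rho$ in general. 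To close this you would need either to show that the coupling term actually has the smaller order $T^{2/\eta}$ (it does not, since $\sum_t x_t$ is genuinely $O_P(T^{1+1/\eta})$), or to accept the weaker rate $O_P(T^{-1\wedge 1/\eta})$ and note that consistency of $\widehat P$ and the conclusion (\ref{factors}) for $\widehat x_t$ are unaffected. The paper's own one-line proof does not confront this point either, so the issue is real rather than merely a defect of your write-up.
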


\section{Monte Carlo evidence\label{montecarlo}}

In this section, we discuss the implementation/specification of our
procedure and illustrate its finite sample properties through a small scale
Monte Carlo exercise. To save space, we report only a limited number of
results; further results are in the Supplement.

\subsection{Design}

Following \citet{ling}, we simulate the $N$-variate $VAR\left( 1\right) $
model%
\begin{equation}
y_{t}=Ay_{t-1}+\varepsilon _{t}\text{,}  \label{var-1}
\end{equation}%
initialized at $y_{0}=0$ and where $A=I_{N}-PP^{\prime }$, $P$ being an $%
N\times \left( N-m\right) $ matrix with orthonormal columns (i.e., $%
P^{\prime }P=I_{N-m}$).\footnote{%
Specifically, in order to create $P$, we have used $P=D\left( D^{\prime
}D\right) ^{-1/2}$, where $\left( M\right) ^{-1/2}$ denotes the Choleski
factor of a matrix $M$, and have set $D\sim \mathbf{1}_{N\times
(N-m)}+d_{N\times (N-m)}$, where $\mathbf{1}_{N\times (N-m)}$ is an $N\times
(N-m)$ matrix of ones and $d_{N\times (N-m)}$ is an $N\times (N-m)$ matrix
such that $vec\left( d_{N\times (N-m)}\right) \sim N\left( 0,\mathbf{1}%
_{N(N-m)}\right) $. In particular, $d_{N\times (N-m)}$ is kept fixed across
Monte Carlo iterations.}.The innovations $\varepsilon _{t}$ in (\ref{var-1})
are \textit{i.i.d.} and coordinate-wise independent, from a power law
distribution with tail index $\eta \in \left\{ 0.5,1,1.5,2\right\} $. We
follow the procedure proposed by \citet{clauset} and generate $\varepsilon
_{i,t}$ as%
\begin{equation}
\varepsilon _{i,t}=\left( 1-v_{i,t}\right) ^{-1/\eta }\text{,}
\label{powerlaw}
\end{equation}%
where $v_{i,t}$ is $i.i.d.U\left[ 0,1\right] $; $\varepsilon _{i,t}$ is
subsequently centered.\footnote{%
In unreported experiments, we have also considered the case $\varepsilon
_{t}\sim i.i.d.N\left( 0,I_{N}\right) $, but results are essentially the
same as in the case $\eta=2$.}

First, we note from unreported experiments that our procedure for the
determination of the number of common trends is not particularly sensitive
to the choice of the various specifications. In our experiments, we have
used $M=100$ to speed up the computational time, but we note that results do
not change when setting e.g. $M=T$, $M=T/2$ or $M=T/4$. In (\ref{phi-j-t}),
we have used $\kappa =10^{-4}$. This is a conservative choice, whose
rationale follows from the fact that, in (\ref{phi-j-t}), dividing by $%
T^{\kappa }$ serves the purpose of making the non-spiked eigenvalues drift
to zero. The upper bound provided in (\ref{lambdaj-2}) for such non-spiked
eigenvalues is given by slowly varying functions, which suggests that even a
very small value of $\kappa $ should suffice. Indeed, altering the value of $%
\kappa $ has virtually no consequence. In order to compute the integral in (%
\ref{theta-maior}), we use the Gauss-Hermite quadrature.\footnote{%
In our case, we have used%
\begin{equation}
\Theta _{T,M}^{j}=\frac{1}{\sqrt{\pi }}\sum_{s=1}^{n_{S}}w_{s}\left( \theta
_{T,M}^{\left( j\right) }\left( \sqrt{2}z_{s}\right) \right) ^{2},
\label{upsilon-feasible}
\end{equation}%
where the $z_{s}$s, $1\leq s\leq n_{S}$, are the zeros of the Hermite
polynomial $H_{n_{S}}\left( z\right) $ and the weights $w_{s}$ are defined
as 
\begin{equation}
w_{s}=\frac{\sqrt{\pi }2^{n_{S}-1}\left( n_{S}-1\right) !}{n_{S}\left[
H_{n_{S}-1}\left( z_{s}\right) \right] ^{2}}.  \label{hermite-weights}
\end{equation}%
Thus, when computing $\theta _{T,M}^{\left( j\right) }\left( u\right) $ in
Step 2 of the algorithm, we construct $n_{S}$ of these statistics, each
using $u=\pm \sqrt{2}z_{s}$. The values of the roots $z_{s}$, and of the
corresponding weights $w_{s}$, are tabulated e.g. in \citet{salzer}. In our
case, we have used $n_{S}=2$, which corresponds to $u=\pm 1$ with equal
weight $\frac{1}{2}$; we note that in unreported experiments we tried $%
n_{S}=4$ with the corresponding weights, but there were no changes up to the 
$4$-th decimal in the empirical rejection frequencies.} Finally, as far as
the family-wise detection procedure is concerned, the level of the
individual tests is $\alpha \left( T\right) =0.05/T$, as also used in %
\citet{bt1}; this corresponds to having a critical value $c_{\alpha }$ which
grows logarithmically with $T$. All routines are based on $1,000$ iterations
and are written using GAUSS 21.

\begin{table}[t]
\caption{Estimation frequencies - $N=3$}
\label{tab:TableF1}{\tiny 
\begin{tabular}{ccccccccccccc}
\multicolumn{13}{c}{$N=3$} \\ 
&  &  & \multicolumn{4}{c}{$T=100$} &  & \multicolumn{4}{c}{$T=200$} &  \\ 
\hline
&  &  &  &  &  &  &  &  &  &  &  &  \\ 
&  & $m$ & $3$ & $2$ & $1$ & $0$ &  & $3$ & $2$ & $1$ & $0$ &  \\ 
& $\widehat{m}$ &  &  &  &  &  &  &  &  &  &  &  \\ 
& $3$ &  & $0.963$ & $0.004$ & $0.000$ & $0.000$ &  & $0.986$ & $0.001$ & $%
0.000$ & $0.000$ &  \\ 
$\eta =0.5$ & $2$ &  & $0.037$ & $0.990$ & $0.011$ & $0.000$ &  & $0.013$ & $%
0.994$ & $0.002$ & $0.000$ &  \\ 
& $1$ &  & $0.000$ & $0.006$ & $0.989$ & $0.023$ &  & $0.001$ & $0.003$ & $%
0.998$ & $0.005$ &  \\ 
& $0$ &  & $0.000$ & $0.000$ & $0.000$ & $0.977$ &  & $0.000$ & $0.002$ & $%
0.000$ & $0.995$ &  \\ 
&  &  &  &  &  &  &  &  &  &  &  &  \\ 
& $3$ &  & $0.986$ & $0.001$ & $0.000$ & $0.000$ &  & $0.995$ & $0.000$ & $%
0.000$ & $0.000$ &  \\ 
$\eta =1.0$ & $2$ &  & $0.014$ & $0.995$ & $0.001$ & $0.000$ &  & $0.004$ & $%
0.997$ & $0.000$ & $0.000$ &  \\ 
& $1$ &  & $0.000$ & $0.004$ & $0.999$ & $0.004$ &  & $0.001$ & $0.002$ & $%
1.000$ & $0.003$ &  \\ 
& $0$ &  & $0.000$ & $0.000$ & $0.000$ & $0.996$ &  & $0.000$ & $0.001$ & $%
0.000$ & $0.997$ &  \\ 
&  &  &  &  &  &  &  &  &  &  &  &  \\ 
& $3$ &  & $0.991$ & $0.000$ & $0.000$ & $0.000$ &  & $0.996$ & $0.000$ & $%
0.000$ & $0.000$ &  \\ 
$\eta =1.5$ & $2$ &  & $0.009$ & $1.000$ & $0.001$ & $0.000$ &  & $0.003$ & $%
0.999$ & $0.000$ & $0.000$ &  \\ 
& $1$ &  & $0.000$ & $0.000$ & $0.999$ & $0.001$ &  & $0.001$ & $0.001$ & $%
1.000$ & $0.002$ &  \\ 
& $0$ &  & $0.000$ & $0.000$ & $0.000$ & $0.999$ &  & $0.000$ & $0.000$ & $%
0.000$ & $0.998$ &  \\ 
&  &  &  &  &  &  &  &  &  &  &  &  \\ 
& $3$ &  & $0.994$ & $0.000$ & $0.000$ & $0.000$ &  & $0.998$ & $0.000$ & $%
0.000$ & $0.000$ &  \\ 
$\eta =2$ & $2$ &  & $0.006$ & $1.000$ & $0.001$ & $0.000$ &  & $0.001$ & $%
0.999$ & $0.000$ & $0.000$ &  \\ 
& $1$ &  & $0.000$ & $0.000$ & $0.999$ & $0.000$ &  & $0.001$ & $0.001$ & $%
1.000$ & $0.000$ &  \\ 
& $0$ &  & $0.000$ & $0.000$ & $0.000$ & $1.000$ &  & $0.000$ & $0.000$ & $%
0.000$ & $1.000$ &  \\ 
&  &  &  &  &  &  &  &  &  &  &  &  \\ \hline\hline
\end{tabular}
}
\end{table}

\begin{table*}[h]
\caption{Estimation frequencies - $N=4$}
\label{tab:TableF2}{%
{\tiny 
\begin{tabular}{ccccccccccccccc}
\multicolumn{15}{c}{$N=4$} \\ 
&  &  & \multicolumn{5}{c}{$T=100$} &  & \multicolumn{5}{c}{$T=200$} &  \\ 
\hline
&  &  &  &  &  &  &  &  &  &  &  &  &  &  \\ 
&  & $m$ & $4$ & $3$ & $2$ & $1$ & $0$ &  & $4$ & $3$ & $2$ & $1$ & $0$ & 
\\ 
& $\widehat{m}$ &  &  &  &  &  &  &  &  &  &  &  &  &  \\ 
& $4$ &  & $0.890$ & $0.004$ & $0.000$ & $0.000$ & $0.000$ &  & $0.976$ & $%
0.001$ & $0.000$ & $0.000$ & $0.000$ &  \\ 
& $3$ &  & $0.104$ & $0.964$ & $0.005$ & $0.000$ & $0.000$ &  & $0.023$ & $%
0.988$ & $0.002$ & $0.000$ & $0.000$ &  \\ 
$\eta =0.5$ & $2$ &  & $0.003$ & $0.029$ & $0.989$ & $0.015$ & $0.000$ &  & $%
0.000$ & $0.011$ & $0.996$ & $0.008$ & $0.000$ &  \\ 
& $1$ &  & $0.002$ & $0.000$ & $0.006$ & $0.984$ & $0.022$ &  & $0.000$ & $%
0.000$ & $0.001$ & $0.992$ & $0.012$ &  \\ 
& $0$ &  & $0.001$ & $0.003$ & $0.000$ & $0.001$ & $0.978$ &  & $0.001$ & $%
0.000$ & $0.001$ & $0.000$ & $0.988$ &  \\ 
&  &  &  &  &  &  &  &  &  &  &  &  &  &  \\ 
& $4$ &  & $0.948$ & $0.000$ & $0.000$ & $0.000$ & $0.000$ &  & $0.995$ & $%
0.000$ & $0.000$ & $0.000$ & $0.000$ &  \\ 
& $3$ &  & $0.046$ & $0.990$ & $0.000$ & $0.000$ & $0.000$ &  & $0.004$ & $%
0.997$ & $0.001$ & $0.000$ & $0.000$ &  \\ 
$\eta =1.0$ & $2$ &  & $0.003$ & $0.007$ & $0.995$ & $0.003$ & $0.000$ &  & $%
0.000$ & $0.003$ & $0.998$ & $0.001$ & $0.000$ &  \\ 
& $1$ &  & $0.002$ & $0.001$ & $0.004$ & $0.994$ & $0.009$ &  & $0.000$ & $%
0.000$ & $0.000$ & $0.999$ & $0.000$ &  \\ 
& $0$ &  & $0.001$ & $0.002$ & $0.001$ & $0.003$ & $0.991$ &  & $0.001$ & $%
0.000$ & $0.001$ & $0.000$ & $1.000$ &  \\ 
&  &  &  &  &  &  &  &  &  &  &  &  &  &  \\ 
& $4$ &  & $0.967$ & $0.001$ & $0.000$ & $0.000$ & $0.000$ &  & $0.998$ & $%
0.000$ & $0.000$ & $0.000$ & $0.000$ &  \\ 
& $3$ &  & $0.029$ & $0.990$ & $0.000$ & $0.000$ & $0.000$ &  & $0.001$ & $%
0.999$ & $0.001$ & $0.000$ & $0.000$ &  \\ 
$\eta =1.5$ & $2$ &  & $0.001$ & $0.005$ & $0.997$ & $0.000$ & $0.000$ &  & $%
0.000$ & $0.000$ & $0.998$ & $0.000$ & $0.000$ &  \\ 
& $1$ &  & $0.001$ & $0.002$ & $0.003$ & $0.998$ & $0.001$ &  & $0.000$ & $%
0.000$ & $0.000$ & $1.000$ & $0.001$ &  \\ 
& $0$ &  & $0.002$ & $0.002$ & $0.000$ & $0.002$ & $0.999$ &  & $0.001$ & $%
0.001$ & $0.001$ & $0.000$ & $0.999$ &  \\ 
&  &  &  &  &  &  &  &  &  &  &  &  &  &  \\ 
& $4$ &  & $0.974$ & $0.000$ & $0.000$ & $0.000$ & $0.000$ &  & $0.998$ & $%
0.000$ & $0.000$ & $0.000$ & $0.000$ &  \\ 
& $3$ &  & $0.022$ & $0.995$ & $0.000$ & $0.000$ & $0.000$ &  & $0.001$ & $%
0.999$ & $0.000$ & $0.000$ & $0.000$ &  \\ 
$\eta =2$ & $2$ &  & $0.001$ & $0.002$ & $0.999$ & $0.000$ & $0.000$ &  & $%
0.000$ & $0.000$ & $0.999$ & $0.000$ & $0.000$ &  \\ 
& $1$ &  & $0.001$ & $0.001$ & $0.000$ & $0.998$ & $0.001$ &  & $0.000$ & $%
0.000$ & $0.000$ & $1.000$ & $0.001$ &  \\ 
& $0$ &  & $0.002$ & $0.002$ & $0.001$ & $0.002$ & $0.999$ &  & $0.001$ & $%
0.001$ & $0.001$ & $0.000$ & $0.999$ &  \\ 
&  &  &  &  &  &  &  &  &  &  &  &  &  &  \\ \hline\hline
\end{tabular}
} }
\end{table*}

\begin{table*}[t]
\caption{Estimation frequencies - $N=5$}
\label{tab:TableF3}{%
{\tiny 
\begin{tabular}{ccccccccccccccccc}
\multicolumn{17}{c}{$N=5$} \\ 
&  &  & \multicolumn{6}{c}{$T=100$} &  & \multicolumn{6}{c}{$T=200$} &  \\ 
\hline
&  &  &  &  &  &  &  &  &  &  &  &  &  &  &  &  \\ 
&  & $m$ & $5$ & $4$ & $3$ & $2$ & $1$ & $0$ &  & $5$ & $4$ & $3$ & $2$ & $1$
& $0$ &  \\ 
& $\widehat{m}$ &  &  &  &  &  &  &  &  &  &  &  &  &  &  &  \\ 
& $5$ &  & $0.787$ & $0.003$ & $0.000$ & $0.000$ & $0.000$ & $0.000$ &  & $%
0.944$ & $0.002$ & $0.000$ & $0.000$ & $0.000$ & $0.000$ &  \\ 
& $4$ &  & $0.201$ & $0.919$ & $0.005$ & $0.000$ & $0.000$ & $0.000$ &  & $%
0.055$ & $0.985$ & $0.004$ & $0.000$ & $0.000$ & $0.000$ &  \\ 
$\eta =0.5$ & $3$ &  & $0.009$ & $0.077$ & $0.974$ & $0.007$ & $0.000$ & $%
0.000$ &  & $0.000$ & $0.013$ & $0.992$ & $0.004$ & $0.000$ & $0.000$ &  \\ 
& $2$ &  & $0.000$ & $0.000$ & $0.020$ & $0.993$ & $0.001$ & $0.001$ &  & $%
0.000$ & $0.000$ & $0.003$ & $0.996$ & $0.014$ & $0.000$ &  \\ 
& $1$ &  & $0.003$ & $0.001$ & $0.001$ & $0.000$ & $0.998$ & $0.041$ &  & $%
0.001$ & $0.000$ & $0.001$ & $0.000$ & $0.986$ & $0.018$ &  \\ 
& $0$ &  & $0.000$ & $0.000$ & $0.000$ & $0.000$ & $0.001$ & $0.958$ &  & $%
0.000$ & $0.000$ & $0.000$ & $0.000$ & $0.000$ & $0.982$ &  \\ 
&  &  &  &  &  &  &  &  &  &  &  &  &  &  &  &  \\ 
& $5$ &  & $0.874$ & $0.000$ & $0.000$ & $0.000$ & $0.000$ & $0.000$ &  & $%
0.984$ & $0.000$ & $0.000$ & $0.000$ & $0.000$ & $0.000$ &  \\ 
& $4$ &  & $0.125$ & $0.959$ & $0.002$ & $0.000$ & $0.000$ & $0.000$ &  & $%
0.014$ & $0.997$ & $0.000$ & $0.000$ & $0.000$ & $0.000$ &  \\ 
$\eta =1.0$ & $3$ &  & $0.000$ & $0.039$ & $0.990$ & $0.002$ & $0.000$ & $%
0.000$ &  & $0.001$ & $0.003$ & $0.999$ & $0.000$ & $0.000$ & $0.000$ &  \\ 
& $2$ &  & $0.000$ & $0.001$ & $0.007$ & $0.995$ & $0.004$ & $0.000$ &  & $%
0.001$ & $0.000$ & $0.000$ & $1.000$ & $0.001$ & $0.000$ &  \\ 
& $1$ &  & $0.001$ & $0.001$ & $0.001$ & $0.002$ & $0.995$ & $0.019$ &  & $%
0.000$ & $0.000$ & $0.000$ & $0.000$ & $0.999$ & $0.004$ &  \\ 
& $0$ &  & $0.000$ & $0.000$ & $0.000$ & $0.001$ & $0.001$ & $0.981$ &  & $%
0.000$ & $0.000$ & $0.001$ & $0.000$ & $0.000$ & $0.996$ &  \\ 
&  &  &  &  &  &  &  &  &  &  &  &  &  &  &  &  \\ 
& $5$ &  & $0.909$ & $0.000$ & $0.000$ & $0.000$ & $0.000$ & $0.000$ &  & $%
0.996$ & $0.000$ & $0.000$ & $0.000$ & $0.000$ & $0.000$ &  \\ 
& $4$ &  & $0.090$ & $0.974$ & $0.000$ & $0.000$ & $0.000$ & $0.000$ &  & $%
0.002$ & $0.998$ & $0.000$ & $0.000$ & $0.000$ & $0.000$ &  \\ 
$\eta =1.5$ & $3$ &  & $0.000$ & $0.024$ & $0.995$ & $0.002$ & $0.000$ & $%
0.000$ &  & $0.001$ & $0.001$ & $0.999$ & $0.000$ & $0.000$ & $0.000$ &  \\ 
& $2$ &  & $0.000$ & $0.000$ & $0.005$ & $0.995$ & $0.001$ & $0.000$ &  & $%
0.001$ & $0.001$ & $0.000$ & $1.000$ & $0.000$ & $0.000$ &  \\ 
& $1$ &  & $0.001$ & $0.002$ & $0.000$ & $0.002$ & $0.999$ & $0.002$ &  & $%
0.000$ & $0.000$ & $0.000$ & $0.000$ & $1.000$ & $0.002$ &  \\ 
& $0$ &  & $0.000$ & $0.000$ & $0.000$ & $0.001$ & $0.000$ & $0.998$ &  & $%
0.000$ & $0.000$ & $0.001$ & $0.000$ & $0.000$ & $0.998$ &  \\ 
&  &  &  &  &  &  &  &  &  &  &  &  &  &  &  &  \\ 
& $5$ &  & $0.914$ & $0.000$ & $0.000$ & $0.000$ & $0.000$ & $0.000$ &  & $%
0.997$ & $0.000$ & $0.000$ & $0.000$ & $0.000$ & $0.000$ &  \\ 
& $4$ &  & $0.085$ & $0.981$ & $0.000$ & $0.000$ & $0.000$ & $0.000$ &  & $%
0.002$ & $0.998$ & $0.000$ & $0.000$ & $0.000$ & $0.000$ &  \\ 
$\eta =2$ & $3$ &  & $0.000$ & $0.017$ & $0.993$ & $0.000$ & $0.000$ & $%
0.000 $ &  & $0.000$ & $0.001$ & $0.999$ & $0.000$ & $0.000$ & $0.000$ &  \\ 
& $2$ &  & $0.000$ & $0.001$ & $0.005$ & $0.999$ & $0.001$ & $0.000$ &  & $%
0.000$ & $0.001$ & $0.000$ & $1.000$ & $0.000$ & $0.000$ &  \\ 
& $1$ &  & $0.001$ & $0.001$ & $0.001$ & $0.000$ & $0.999$ & $0.002$ &  & $%
0.001$ & $0.000$ & $0.000$ & $0.000$ & $1.000$ & $0.002$ &  \\ 
& $0$ &  & $0.000$ & $0.000$ & $0.001$ & $0.001$ & $0.000$ & $0.998$ &  & $%
0.000$ & $0.000$ & $0.001$ & $0.000$ & $0.000$ & $0.998$ &  \\ 
&  &  &  &  &  &  &  &  &  &  &  &  &  &  &  &  \\ \hline\hline
\end{tabular}
} }
\end{table*}

\subsection{Results}

Results are reported in Tables \ref{tab:TableF1}-\ref{tab:TableF3}, where we
analyse the properties of our estimator of $m$ with $N\in \left\{
3,4,5\right\} $. The reported frequencies of the estimates of $m$ show that
the finite sample properties are largely satisfactory. Our procedure seems
to be scarcely affected by the value of $m$, although, especially for the
smaller sample sizes, it appears to be marginally better when $m=0$ as
opposed to the case $m=N$. This difference, however, vanishes as $T$
increases. The impact of $N$ is also very clear: as the $VAR$ dimension
increases, the performance of $\widehat{m}$ tends to deteriorate, as
expected.\ Inference improves for larger values of $T$. Indeed, whilst
results for $N=3$ are good even when $\eta =0.5$ and $T=100$, when $N=5$ the
estimator $\widehat{m}$ requires at least $T=200$ in order to have a
frequency of correctly picking the true value of $m$ higher than $90\%$.
This is, as noted above, more pronounced when $m=N$, and less so when $m=0$.
As it can also be expected, our procedure improves as $\eta $ increases;
results are anyway very good even in the (very extreme) case $\eta =0.5$,
and the impact of $\eta $ is less and less important as $T$ increases.
Finally, although Tables \ref{tab:TableF1}-\ref{tab:TableF3} focus only on
the \textit{i.i.d.} case, unreported experiments showed that results are
essentially the same when allowing for serial dependence.

In the Supplement, we report a broader set of results which, in addition to
serial dependence in the errors $\varepsilon _{i,t}$, also compare the
proposed method with classic information criteria. Results are in Tables \ref%
{tab:Table1a}-\ref{tab:Table3b}. Broadly speaking, our procedure is very
good on average at estimating $m$ - and better than the best performing
information criterion, BIC -- for all values of $N$ and $T$ (and $\eta $).
This is true across all values of $m$, including the stationary case ($m=0$)
and the no-cointegration case ($m=N$). Information criteria seem to perform
marginally better when $m=0$, but this is more than offset when considering
that they tend to overestimate $m$ in general, especially so when $m=N$ and $%
m=N-1$. When errors are serially correlated (see Tables \ref{tab:Table2a}-%
\ref{tab:Table3b}), results are affected, albeit marginally, but the
relative performance of the various methods remains as described above.
Finally, in order to evaluate our procedure for medium-large systems, in
Tables \ref{tab:Table4}-\ref{tab:Table6} in the Supplement we report results
from a further experiment where we set $N\in \left\{ 10,15,20\right\} $. We
find that our estimator delivers a superior performance with respect to
information criteria for $m=N $ and $m=N-1$, especially for small values of $%
\eta $ (when BIC\ grossly understates $m$, even for large values of $T$);
BIC is (albeit marginally) superior when $m=0$.

\section{Real data examples\label{empirics}}

We illustrate our methodology through four empirical applications to:
commodity prices (Section \ref{comm}),\ U.S. interest rate data (Section \ref%
{rate}), long run purchasing power parity (Section \ref{ppp}) and
cryptocurrency markets (Section \ref{crypto}). In these applications, $N$
ranges from $3$ to $7$.

\subsection{Comovements among commodity prices\label{comm}}

We consider a set of $N=7$ commodity prices: three oil prices (WTI, Brent
crude, and Dubai crude) and the prices of four metals (copper, gold, nickel,
and cobalt). The presence of common trends can be anticipated due to global
demand factors (e.g. growth in emerging Asian countries and especially in
China; or changes of preferences towards greener energy sources, which
increase demand for copper and decrease demand for oil), and also due to
global supply factors (e.g. related to the effect that oil prices have on
transportation costs of other commodities; or driven by technological
innovations which often require the use of cobalt -- see e.g. %
\citealp{alquist}). Moreover, the three oil prices should exhibit strong
comovements, and similarly should the prices of metals, which are often used
in combination in industry (e.g. copper and nickel). In order to study the
presence of such common trends, we use a dataset consisting of monthly data
from January 1990 to March 2021, corresponding to a sample of $T=373$
monthly observations.\footnote{%
Data have been downloaded from
https://www.imf.org/en/Research/commodity-prices} We use the logs of the
data, which are subsequently demeaned and detrended.

We have applied our methodology using the same specifications as described
in Section \ref{montecarlo}, i.e. $\kappa =10^{-4}$, $M=100$ and $n_{S}=2$
in (\ref{upsilon-feasible}). In order to assess robustness to these
specifications, we have also considered other values of $M$ (including $M=T$%
) and $n_{S}=4$.

\begin{table}[h!]
\centering
\begin{threeparttable}
{\tiny
\caption{Estimated number of common trends; whole dataset} 

\label{tab:TableComm1}
\par

\begin{tabular}{llllllllllll}
\hline\hline
&  &  &  &  &  &  &  &  &  &  &  \\ 
\multicolumn{12}{c}{\textbf{Results and sensitivity analysis}} \\ 
&  &  &  &  &  &  &  &  &  &  &  \\ 
\textit{Commodity} &  & \multicolumn{2}{l}{\textit{Tail index}} &  & \textit{%
Test }$H_{0}:E\left\vert X\right\vert ^{2}=\infty $ & \multicolumn{1}{|l}{}
& \textit{nominal level} &  & \multicolumn{1}{c}{$1\%$} & \multicolumn{1}{c}{%
$5\%$} & \multicolumn{1}{c}{$10\%$} \\ 
&  & \multicolumn{2}{l}{} &  &  & \multicolumn{1}{|l}{} &  &  &  &  &  \\ 
Copper &  & \multicolumn{2}{c}{$\underset{\left( 1.144,2.171\right) }{1.658}$%
} & \multicolumn{1}{c}{} & \multicolumn{1}{c}{$\underset{\left( \text{do not
reject }H_{0}\right) }{0.9527}$} & \multicolumn{1}{|l}{} & Johansen's trace
test &  & \multicolumn{1}{c}{$1$} & \multicolumn{1}{c}{$2$} & 
\multicolumn{1}{c}{$2$} \\ 
Gold &  & \multicolumn{2}{c}{$\underset{\left( 1.090,2.069\right) }{1.580}$}
& \multicolumn{1}{c}{} & \multicolumn{1}{c}{$\underset{\left( \text{do not
reject }H_{0}\right) }{0.9525}$} & \multicolumn{1}{|l}{} &  &  &  &  & 
\multicolumn{1}{c}{} \\ 
Brent crude &  & \multicolumn{2}{c}{$\underset{\left( 2.050,3.893\right) }{%
2.972}$} & \multicolumn{1}{c}{} & \multicolumn{1}{c}{$\underset{\left( \text{%
do not reject }H_{0}\right) }{0.9504}$} & \multicolumn{1}{|l}{} & Johansen's 
$\lambda _{\max }$ test &  & \multicolumn{1}{c}{$0$} & \multicolumn{1}{c}{$0$%
} & \multicolumn{1}{c}{$1$} \\ 
Dubai crude &  & \multicolumn{2}{c}{$\underset{\left( 1.171,3.252\right) }{%
2.483}$} & \multicolumn{1}{c}{} & \multicolumn{1}{c}{$\underset{\left( \text{%
do not reject }H_{0}\right) }{0.9499}$} & \multicolumn{1}{|l}{} &  &  & 
\multicolumn{1}{c}{} & \multicolumn{1}{c}{} &  \\ 
Nickel &  & \multicolumn{2}{c}{$\underset{\left( 1.423,2.702\right) }{2.063}$%
} & \multicolumn{1}{c}{} & \multicolumn{1}{c}{$\underset{\left( \text{do not
reject }H_{0}\right) }{0.9548}$} & \multicolumn{1}{|l}{} &  &  &  &  &  \\ 
WTI\ crude &  & \multicolumn{2}{c}{$\underset{\left( 1.747,3.316\right) }{%
2.532}$} & \multicolumn{1}{c}{} & \multicolumn{1}{c}{$\underset{\left( \text{%
do not reject }H_{0}\right) }{0.9502}$} & \multicolumn{1}{|l}{} &  &  &  & 
&  \\ 
Cobalt &  & \multicolumn{2}{c}{$\underset{\left( 1.166,2.215\right) }{1.691}$%
} & \multicolumn{1}{c}{} & \multicolumn{1}{c}{$\underset{\left( \text{do not
reject }H_{0}\right) }{0.9504}$} & \multicolumn{1}{|l}{} &  &  &  &  &  \\ 
&  &  &  &  &  & \multicolumn{1}{|l}{} &  &  &  &  &  \\ \hline
&  &  &  &  &  &  &  &  &  &  &  \\ 
$\left( \kappa =10^{-4},n_{S}=2\right) $ &  & \multicolumn{1}{c}{$\frac{0.05%
}{T}$} & \multicolumn{1}{c}{$\frac{0.05}{\ln T}$} & \multicolumn{1}{c}{$%
\frac{0.05}{N}$} &  & \multicolumn{1}{|l}{} & $\left( \kappa
=10^{-2},n_{S}=2\right) $ &  & \multicolumn{1}{c}{$\frac{0.05}{T}$} & 
\multicolumn{1}{c}{$\frac{0.05}{\ln T}$} & \multicolumn{1}{c}{$\frac{0.05}{N}
$} \\ 
&  &  &  &  &  & \multicolumn{1}{|l}{} &  &  &  &  &  \\ 
$T/2$ &  & \multicolumn{1}{c}{$5$} & \multicolumn{1}{c}{$4$} & 
\multicolumn{1}{c}{$4$} &  & \multicolumn{1}{|l}{} & $T/2$ &  & 
\multicolumn{1}{c}{$5$} & \multicolumn{1}{c}{$4$} & \multicolumn{1}{c}{$4$}
\\ 
$T$ &  & \multicolumn{1}{c}{$4$} & \multicolumn{1}{c}{$4$} & 
\multicolumn{1}{c}{$4$} &  & \multicolumn{1}{|l}{} & $T$ &  & 
\multicolumn{1}{c}{$4$} & \multicolumn{1}{c}{$4$} & \multicolumn{1}{c}{$4$}
\\ 
$2T$ &  & \multicolumn{1}{c}{$4$} & \multicolumn{1}{c}{$4$} & 
\multicolumn{1}{c}{$4$} &  & \multicolumn{1}{|l}{} & $2T$ &  & 
\multicolumn{1}{c}{$4$} & \multicolumn{1}{c}{$4$} & \multicolumn{1}{c}{$4$}
\\ 
&  &  &  &  &  & \multicolumn{1}{|l}{} &  &  &  &  &  \\ \hline
&  &  &  &  &  &  &  &  &  &  &  \\ 
$\left( \kappa =10^{-4},n_{S}=4\right) $ &  & \multicolumn{1}{c}{$\frac{0.05%
}{T}$} & \multicolumn{1}{c}{$\frac{0.05}{\ln T}$} & \multicolumn{1}{c}{$%
\frac{0.05}{N}$} &  & \multicolumn{1}{|l}{} & $\left( \kappa
=10^{-2},n_{S}=4\right) $ &  & \multicolumn{1}{c}{$\frac{0.05}{T}$} & 
\multicolumn{1}{c}{$\frac{0.05}{\ln T}$} & \multicolumn{1}{c}{$\frac{0.05}{N}
$} \\ 
&  &  &  &  &  & \multicolumn{1}{|l}{} &  &  &  &  &  \\ 
$T/2$ &  & \multicolumn{1}{c}{$5$} & \multicolumn{1}{c}{$4$} & 
\multicolumn{1}{c}{$4$} &  & \multicolumn{1}{|l}{} & $T/2$ &  & 
\multicolumn{1}{c}{$5$} & \multicolumn{1}{c}{$4$} & \multicolumn{1}{c}{$4$}
\\ 
$T$ &  & \multicolumn{1}{c}{$5$} & \multicolumn{1}{c}{$4$} & 
\multicolumn{1}{c}{$4$} &  & \multicolumn{1}{|l}{} & $T$ &  & 
\multicolumn{1}{c}{$5$} & \multicolumn{1}{c}{$4$} & \multicolumn{1}{c}{$4$}
\\ 
$2T$ &  & \multicolumn{1}{c}{$4$} & \multicolumn{1}{c}{$4$} & 
\multicolumn{1}{c}{$4$} &  & \multicolumn{1}{|l}{} & $2T$ &  & 
\multicolumn{1}{c}{$4$} & \multicolumn{1}{c}{$4$} & \multicolumn{1}{c}{$4$}
\\ 
&  &  &  &  &  & \multicolumn{1}{|l}{} &  &  &  &  &  \\ \hline\hline
\end{tabular}

\smallskip
\begin{tablenotes}

     \item In the top part of the table we report the estimated values of the tail index using the Hill's estimator - the package 'ptsuite' in R has been employed, using a number of order statistics equal to $k_{T}=40$. We also report, in light of Hill's estimator being inconclusive, the outcome of the test by \citet{trapani16}. As far as this is concerned, we have used the modification suggested in Section 3.2 in \citet{HT16}; whilst we refer to that paper for details, essentially this consists in running the test for $S$ times (we have used $S=10,000$), and compute the average number of non-rejections. This is then compared against the threshold $1-\alpha-\sqrt{\left( \alpha\left( 1-\alpha\right)\right)\frac{2\ln \ln S}{S}}$; in our case, such threshold is $0.9454$, with $H_{0}$ being rejected if the average number of non rejections falls below the threshold, and not rejected otherwise. 
     \item In the table (top, right part), we also report the number of cointegration relationships found by Johansen's procedure; this has been implemented using $p=2$ lags in the $VAR$ specification, as suggested using BIC, and constant and trend when implementing the test. 
     \item In the bottom half of the table, we report results on $\widehat{m}$ obtained using different specifications, as written in the table. In particular, in each sub-panel, the columns contain different values of the nominal level of the family-wise procedure, set equal to $\frac{0.05}{T}$, $\frac{0.05}{\ln(T)}$ and $\frac{0.05}{N}$.

\end{tablenotes}
}
\end{threeparttable}
\end{table}

\smallskip

We report the results for the $7$-dimensional series in Table \ref%
{tab:TableComm1}. Initially, we report the (Hill's) estimates of the tail
indices for the seven series; the associated confidence sets are quite
large, but the test by \citet{trapani16} supports the hypothesis that all
series have infinite variance.

Estimation of $m$ based on Johansen's sequential procedure for the
determination of the cointegration rank (using either the trace tests or the
maximum eigenvalue tests) provides ambiguous results, with the estimate of $%
m $ ranging between $5$ and 7 (which corresponds to no cointegration). These
results, in the presence of heavy tails, are not reliable: the simulations
in \citet{caner} show that Johansen's procedure does not control size under
infinite variance, thus making the estimate of $m$ inaccurate (in
particular, it tends to be over-sized, thus leading to overestimation of $m$%
). In contrast, through our test we find strong evidence of $m=4$ common
stochastic trends. As shown in the table, our results are broadly robust to
different values of $M$, $\kappa $. In (much) fewer cases, we find $m=5$,
which might suggest the presence of a slowly mean reverting component in the
data.

In order to shed more light on these findings, we split the series into two
sub-groups: one of dimension $N=3$ (comprising the three crude prices --
Brent, Dubai and WTI crude), and one of dimension $N=4$ (containing the four
metal prices). Results for the $3$-dimensional series of crude prices are in
Table \ref{tab:TableComm2}. On the one hand, Johansen's tests in this case
identifies (at $5\%$ level, and only using the trace test) two common trends
($m=2$). On the other hand, our methodology provides evidence of a single ($%
m=1$) common stochastic trend (and, in some, more rare, cases, of $m=2$).
Results concerning the $N=4$ metals are in Table \ref{tab:TableComm3}; in
this case, evidence of $m=3$ common trends emerges from all the procedures
considered.

\begin{table}[h!]
\centering
\begin{threeparttable}
{\tiny
\caption{Estimated number of common trends; oil prices} 

\label{tab:TableComm2}
\par

\begin{tabular}{llllllllllll}
\hline\hline
&  &  &  &  &  &  &  &  &  &  &  \\ 
\multicolumn{12}{c}{\textbf{Results and sensitivity analysis}} \\ 
&  &  &  &  &  & \multicolumn{1}{|l}{} &  &  &  &  &  \\ 
\textit{nominal level} &  & \multicolumn{1}{c}{$1\%$} & \multicolumn{1}{c}{$%
5\%$} & \multicolumn{1}{c}{$10\%$} &  & \multicolumn{1}{|l}{} &  & 
\multicolumn{4}{l}{Cointegration vectors} \\ 
&  &  &  &  &  & \multicolumn{1}{|l}{} &  &  & \multicolumn{1}{c}{} & 
\multicolumn{1}{c}{} &  \\ 
Johansen's trace test &  & \multicolumn{1}{c}{$2$} & \multicolumn{1}{c}{$2$}
& \multicolumn{1}{c}{$2$} &  & \multicolumn{1}{|l}{} &  & $\widehat{\beta }%
_{1,1}$ & \multicolumn{1}{c}{$1.00$} & $\widehat{\beta }_{2,1}$ & 
\multicolumn{1}{c}{$0.00$} \\ 
&  &  &  & \multicolumn{1}{c}{} &  & \multicolumn{1}{|l}{} &  & $\widehat{%
\beta }_{1,2}$ & \multicolumn{1}{c}{$0.00$} & \multicolumn{1}{c}{$\widehat{%
\beta }_{2,2}$} & \multicolumn{1}{c}{$1.00$} \\ 
Johansen's $\lambda _{\max }$ test &  & \multicolumn{1}{c}{$2$} & 
\multicolumn{1}{c}{$2$} & \multicolumn{1}{c}{$2$} &  & \multicolumn{1}{|l}{}
&  & $\widehat{\beta }_{1,3}$ & \multicolumn{1}{c}{$-1.08$} & $\widehat{%
\beta }_{2,3}$ & \multicolumn{1}{c}{$-1.05$} \\ 
&  &  &  &  &  & \multicolumn{1}{|l}{} &  &  &  &  &  \\ \hline
&  &  &  &  &  & \multicolumn{1}{|l}{} &  &  &  &  &  \\ 
$\left( \kappa =10^{-4},n_{S}=2\right) $ &  & \multicolumn{1}{c}{$\frac{0.05%
}{T}$} & \multicolumn{1}{c}{$\frac{0.05}{\ln T}$} & \multicolumn{1}{c}{$%
\frac{0.05}{N}$} &  & \multicolumn{1}{|l}{} & $\left( \kappa
=10^{-2},n_{S}=2\right) $ &  & \multicolumn{1}{c}{$\frac{0.05}{T}$} & 
\multicolumn{1}{c}{$\frac{0.05}{\ln T}$} & \multicolumn{1}{c}{$\frac{0.05}{N}
$} \\ 
&  &  &  &  &  & \multicolumn{1}{|l}{} &  &  &  &  &  \\ 
$T/2$ &  & \multicolumn{1}{c}{$2$} & \multicolumn{1}{c}{$2$} & 
\multicolumn{1}{c}{$1$} &  & \multicolumn{1}{|l}{} & $T/2$ &  & 
\multicolumn{1}{c}{$2$} & \multicolumn{1}{c}{$1$} & \multicolumn{1}{c}{$1$}
\\ 
$T$ &  & \multicolumn{1}{c}{$2$} & \multicolumn{1}{c}{$1$} & 
\multicolumn{1}{c}{$1$} &  & \multicolumn{1}{|l}{} & $T$ &  & 
\multicolumn{1}{c}{$1$} & \multicolumn{1}{c}{$1$} & \multicolumn{1}{c}{$1$}
\\ 
$2T$ &  & \multicolumn{1}{c}{$1$} & \multicolumn{1}{c}{$1$} & 
\multicolumn{1}{c}{$1$} &  & \multicolumn{1}{|l}{} & $2T$ &  & 
\multicolumn{1}{c}{$1$} & \multicolumn{1}{c}{$1$} & \multicolumn{1}{c}{$1$}
\\ 
&  &  &  &  &  & \multicolumn{1}{|l}{} &  &  &  &  &  \\ \hline
&  &  &  &  &  &  &  &  &  &  &  \\ 
$\left( \kappa =10^{-4},n_{S}=4\right) $ &  & \multicolumn{1}{c}{$\frac{0.05%
}{T}$} & \multicolumn{1}{c}{$\frac{0.05}{\ln T}$} & \multicolumn{1}{c}{$%
\frac{0.05}{N}$} &  & \multicolumn{1}{|l}{} & $\left( \kappa
=10^{-2},n_{S}=4\right) $ &  & \multicolumn{1}{c}{$\frac{0.05}{T}$} & 
\multicolumn{1}{c}{$\frac{0.05}{\ln T}$} & \multicolumn{1}{c}{$\frac{0.05}{N}
$} \\ 
&  &  &  &  &  & \multicolumn{1}{|l}{} &  &  &  &  &  \\ 
$T/2$ &  & \multicolumn{1}{c}{$2$} & \multicolumn{1}{c}{$2$} & 
\multicolumn{1}{c}{$1$} &  & \multicolumn{1}{|l}{} & $T/2$ &  & 
\multicolumn{1}{c}{$2$} & \multicolumn{1}{c}{$2$} & \multicolumn{1}{c}{$1$}
\\ 
$T$ &  & \multicolumn{1}{c}{$2$} & \multicolumn{1}{c}{$1$} & 
\multicolumn{1}{c}{$1$} &  & \multicolumn{1}{|l}{} & $T$ &  & 
\multicolumn{1}{c}{$2$} & \multicolumn{1}{c}{$1$} & \multicolumn{1}{c}{$1$}
\\ 
$2T$ &  & \multicolumn{1}{c}{$1$} & \multicolumn{1}{c}{$1$} & 
\multicolumn{1}{c}{$1$} &  & \multicolumn{1}{|l}{} & $2T$ &  & 
\multicolumn{1}{c}{$1$} & \multicolumn{1}{c}{$1$} & \multicolumn{1}{c}{$1$}
\\ 
&  &  &  &  &  & \multicolumn{1}{|l}{} &  &  &  &  &  \\ \hline\hline
\end{tabular}

\smallskip
\begin{tablenotes}

     \item See Table \ref{tab:TableComm1} for details; the three series considered are the three crude prices: WTI, Brent, and Dubai.

\end{tablenotes}
}
\end{threeparttable}
\end{table}

\smallskip

\begin{table}[h!]
\centering
\begin{threeparttable}
{\tiny
\caption{Estimated number of common trends; metal prices} 

\label{tab:TableComm3}
\par

\begin{tabular}{llllllllllll}
\hline\hline
&  &  &  &  &  &  &  &  &  &  &  \\ 
\multicolumn{12}{c}{\textbf{Results and sensitivity analysis}} \\ 
&  &  &  &  &  & \multicolumn{1}{|l}{} &  & \multicolumn{4}{l}{Cointegration
vectors} \\ 
\textit{nominal level} &  & \multicolumn{1}{c}{$1\%$} & \multicolumn{1}{c}{$%
5\%$} & \multicolumn{1}{c}{$10\%$} &  & \multicolumn{1}{|l}{} &  &  & 
\multicolumn{1}{c}{} & \multicolumn{1}{c}{} & \multicolumn{1}{c}{} \\ 
&  &  &  &  &  & \multicolumn{1}{|l}{} &  & $\widehat{\beta }_{1,1}$ & 
\multicolumn{1}{c}{$1.00$} & \multicolumn{1}{c}{} & \multicolumn{1}{c}{} \\ 
Johansen's trace test &  & \multicolumn{1}{c}{$0$} & \multicolumn{1}{c}{$1$}
& \multicolumn{1}{c}{$1$} &  & \multicolumn{1}{|l}{} &  & $\widehat{\beta }%
_{1,2}$ & \multicolumn{1}{c}{$-0.69$} & \multicolumn{1}{c}{} & 
\multicolumn{1}{c}{} \\ 
&  &  &  & \multicolumn{1}{c}{} &  & \multicolumn{1}{|l}{} &  & $\widehat{%
\beta }_{1,3}$ & \multicolumn{1}{c}{$-0.50$} & \multicolumn{1}{c}{} & 
\multicolumn{1}{c}{} \\ 
Johansen's $\lambda _{\max }$ test &  & \multicolumn{1}{c}{$0$} & 
\multicolumn{1}{c}{$1$} & \multicolumn{1}{c}{$1$} &  & \multicolumn{1}{|l}{}
&  & $\widehat{\beta }_{1,4}$ & \multicolumn{1}{c}{$-0.13$} & 
\multicolumn{1}{c}{} & \multicolumn{1}{c}{} \\ 
&  &  &  &  &  & \multicolumn{1}{|l}{} &  &  & \multicolumn{1}{c}{} & 
\multicolumn{1}{c}{} & \multicolumn{1}{c}{} \\ \hline
&  &  &  &  &  & \multicolumn{1}{|l}{} &  &  &  &  &  \\ 
$\left( \kappa =10^{-4},n_{S}=2\right) $ &  & \multicolumn{1}{c}{$\frac{0.05%
}{T}$} & \multicolumn{1}{c}{$\frac{0.05}{\ln T}$} & \multicolumn{1}{c}{$%
\frac{0.05}{N}$} &  & \multicolumn{1}{|l}{} & $\left( \kappa
=10^{-2},n_{S}=2\right) $ &  & \multicolumn{1}{c}{$\frac{0.05}{T}$} & 
\multicolumn{1}{c}{$\frac{0.05}{\ln T}$} & \multicolumn{1}{c}{$\frac{0.05}{N}
$} \\ 
&  &  &  &  &  & \multicolumn{1}{|l}{} &  &  &  &  &  \\ 
$T/2$ &  & \multicolumn{1}{c}{$1$} & \multicolumn{1}{c}{$3$} & 
\multicolumn{1}{c}{$3$} &  & \multicolumn{1}{|l}{} & $T/2$ &  & 
\multicolumn{1}{c}{$3$} & \multicolumn{1}{c}{$3$} & \multicolumn{1}{c}{$3$}
\\ 
$T$ &  & \multicolumn{1}{c}{$3$} & \multicolumn{1}{c}{$3$} & 
\multicolumn{1}{c}{$3$} &  & \multicolumn{1}{|l}{} & $T$ &  & 
\multicolumn{1}{c}{$3$} & \multicolumn{1}{c}{$3$} & \multicolumn{1}{c}{$3$}
\\ 
$2T$ &  & \multicolumn{1}{c}{$3$} & \multicolumn{1}{c}{$3$} & 
\multicolumn{1}{c}{$3$} &  & \multicolumn{1}{|l}{} & $2T$ &  & 
\multicolumn{1}{c}{$3$} & \multicolumn{1}{c}{$3$} & \multicolumn{1}{c}{$3$}
\\ 
&  &  &  &  &  & \multicolumn{1}{|l}{} &  &  &  &  &  \\ \hline
&  &  &  &  &  &  &  &  &  &  &  \\ 
$\left( \kappa =10^{-4},n_{S}=4\right) $ &  & \multicolumn{1}{c}{$\frac{0.05%
}{T}$} & \multicolumn{1}{c}{$\frac{0.05}{\ln T}$} & \multicolumn{1}{c}{$%
\frac{0.05}{N}$} &  & \multicolumn{1}{|l}{} & $\left( \kappa
=10^{-2},n_{S}=4\right) $ &  & \multicolumn{1}{c}{$\frac{0.05}{T}$} & 
\multicolumn{1}{c}{$\frac{0.05}{\ln T}$} & \multicolumn{1}{c}{$\frac{0.05}{N}
$} \\ 
&  &  &  &  &  & \multicolumn{1}{|l}{} &  &  &  &  &  \\ 
$T/2$ &  & \multicolumn{1}{c}{$3$} & \multicolumn{1}{c}{$3$} & 
\multicolumn{1}{c}{$3$} &  & \multicolumn{1}{|l}{} & $T/2$ &  & 
\multicolumn{1}{c}{$3$} & \multicolumn{1}{c}{$3$} & \multicolumn{1}{c}{$3$}
\\ 
$T$ &  & \multicolumn{1}{c}{$3$} & \multicolumn{1}{c}{$3$} & 
\multicolumn{1}{c}{$3$} &  & \multicolumn{1}{|l}{} & $T$ &  & 
\multicolumn{1}{c}{$3$} & \multicolumn{1}{c}{$3$} & \multicolumn{1}{c}{$3$}
\\ 
$2T$ &  & \multicolumn{1}{c}{$3$} & \multicolumn{1}{c}{$3$} & 
\multicolumn{1}{c}{$3$} &  & \multicolumn{1}{|l}{} & $2T$ &  & 
\multicolumn{1}{c}{$3$} & \multicolumn{1}{c}{$3$} & \multicolumn{1}{c}{$3$}
\\ 
&  &  &  &  &  & \multicolumn{1}{|l}{} &  &  &  &  &  \\ \hline\hline
\end{tabular}

\smallskip
\begin{tablenotes}

     \item See Table \ref{tab:TableComm1} for details; the four series considered are the four metals: copper, nickel, gold, and cobalt.

\end{tablenotes}
}
\end{threeparttable}
\end{table}

Overall, most of the evidence points towards $m=4$ common stochastic trends;
there is much less evidence in support of $m=5$. The $m=4$ common trends can
be estimated as explained in Section \ref{esttrend}. In order to identify
the trends, based on the results above we propose to order the series as
follows: WTI, gold, cobalt, copper, Brent crude, Dubai crude, nickel. Then,
we constrain the upper $m\times m$ block of the estimated loadings matrix $%
\widehat{P}$ to be the identity matrix. In this way the trends are
identified with the first four series, i.e. WTI, gold, cobalt, and copper.
In Figure \ref{fig:FigComm}, we plot the estimated common trends $\widehat{x}%
_{t}$ and in Table \ref{tab:load} we report the associated loadings $%
\widehat{P}$.

\begin{table}[h!]
\caption{Commodity prices - Estimated loadings $\widehat P$}
\label{tab:load}\centering
\par
{\tiny 
\begin{tabular}{l|cccc}
\hline\hline
&  &  &  &  \\ 
& $\widehat x_{1,t}$ & $\widehat x_{2,t}$ & $\widehat x_{3,t}$ & $\widehat
x_{4,t}$ \\ 
&  &  &  &  \\ \hline
&  &  &  &  \\ 
WTI & 1 & 0 & 0 & 0 \\ 
Cobalt & 0 & 1 & 0 & 0 \\ 
Gold & 0 & 0 & 1 & 0 \\ 
Copper & 0 & 0 & 0 & 1 \\ 
Brent Crude & 1.0409 & -0.0252 & 0.1232 & -0.0292 \\ 
Dubai crude & 1.0429 & -0.0164 & 0.1515 & -0.0679 \\ 
Nickel & -0.2133 & -0.3436 & -1.8855 & 2.5744 \\ 
&  &  &  &  \\ \hline\hline
\end{tabular}
}
\end{table}

By construction, the first and second trends ($\widehat x_{1,t}$ and $%
\widehat x_{2,t}$) are associated with oil prices and cobalt repsectively.
The third one ($\widehat x_{3,t}$) is associated with gold (by
construction), and nickel, with a negative loading; finally, the fourth
trend ($\widehat x_{4,t}$) is associate with copper by construction, and
with nickel (with a positive loading). The trends driving metals are also
common to oil prices, albeit with smaller loadings.

\subsection{The term structure of US interest rates\label{rate}}

In this example, we illustrate our methodology through a small scale
application based on the same dataset as in \citet{ling}, consisting of $N=3 
$ monthly, seasonally unadjusted series: the 3-month Treasury Bill rate, the
1-year Treasury Bill rate (both from sales on the secondary market) and the
Effective Federal Fund rate, spanning the period between February 1974 and
February 1999 (thus corresponding to $T=301$ monthly observations).\footnote{%
The data have been downloaded from the Federal Reserve Economic Data
website, https://fred.stlouisfed.org}

As in \citet{ling}, we consider the logs of the original data, say $%
Y_{t}=\left( y_{1,t},y_{2,t},y_{3,t}\right) ^{\prime }$ where $y_{1,t}$ is
the log of the 3-month Treasury Bill rate, $y_{2,t}$ is the log of the
1-year Treasury Bill rate, and $y_{3,t}$ represents the log of the Fed Fund
rate. \citet{ling} carry out their analysis using a VAR(2)\ in error
correction format:%
\begin{equation}
\Delta Y_{t}=\mu +\alpha \beta ^{\prime }Y_{t-1}+\Phi \Delta
Y_{t-1}+\varepsilon _{t}\text{,}  \label{ecm-ling}
\end{equation}%
with $\mu =\alpha \rho$, such that the data do not have a linear trend but
may have a non-zero mean in the stationary directions. In the light of
Corollary \ref{constant}, in this case our set-up can be applied with no
modifications. Moreover, we do not need to make specific assumptions on the
lag structure for $\Delta Y_{t}$.

As demonstrated above, the main feature of our contribution is that we do
not require any prior knowledge of the index $\eta $. This advantage can be
further understood upon observing the estimates of $\eta $ obtained in \citet%
*{ling} -- see in particular, their Figure 6, where $\eta $ ranges between $%
1 $ and $1.5$ (thereby implying that the data have infinite variance and,
possibly, infinite mean). On account of this information, \citet{ling}
report the critical values for the Likelihood Ratio tests in \citet{caner}
for $\eta =1 $ and $\eta =1.5$, using both values to build a decision rule.%

\begin{table}[h]
\centering
\begin{threeparttable}
{\tiny
\caption{Estimated number of common trends} 

\label{tab:TableLing}
\par

\begin{tabular}{llllllllll}
\hline\hline
&  &  &  &  &  &  &  &  &  \\ 
\multicolumn{10}{c}{\textbf{Panel A: results}} \\ [0.25cm]
&  & \multicolumn{1}{c}{$\widehat{m}$} &  & \multicolumn{6}{l}{\textit{Notes}
} \\ 
&  & \multicolumn{1}{c}{} &  & \multicolumn{6}{l}{} \\ 
BCT &  & \multicolumn{1}{c}{$2$} &  & \multicolumn{6}{l}{} \\ 
\citet{ling} &  & \multicolumn{1}{c}{$1$} &  & \multicolumn{6}{l}{($\alpha
=5\%$; $\eta =1$)} \\ 
\citet{ling} &  & \multicolumn{1}{c}{$1$} &  & \multicolumn{6}{l}{($\alpha
=5\%$; $\eta =1.5$)} \\ 
\citet{ling} &  & \multicolumn{1}{c}{$3$} &  & \multicolumn{6}{l}{($\alpha
=1\%$; $\eta =1$)} \\ 
\citet{ling} &  & \multicolumn{1}{c}{$1$} &  & \multicolumn{6}{l}{($\alpha
=1\%$; $\eta =1.5$)} \\ 
Johansen's trace test &  & \multicolumn{1}{c}{$1$} &  & \multicolumn{6}{l}{($%
\alpha =1\%$)} \\ 
Johansen's $\lambda _{\max }$ test &  & \multicolumn{1}{c}{$1$} &  & 
\multicolumn{6}{l}{($\alpha =1\%$)}
\\ 
&  &  &  & \multicolumn{6}{l}{} \\ \hline\hline
&  &  &  &  &  &  &  &  &  \\ 
\multicolumn{10}{c}{\textbf{Panel B: sensitivity analysis}} \\ [0.25cm] 
&  &  &  &  & \multicolumn{1}{|l}{} &  &  &  &  \\ 
$\left( \kappa =10^{-4};n_{S}=2\right) $ &  & \multicolumn{1}{c}{$\frac{0.05%
}{T}$} & \multicolumn{1}{c}{$\frac{0.05}{\ln T}$} & \multicolumn{1}{c}{$%
\frac{0.05}{N}$} & \multicolumn{1}{|l}{$\left( \kappa
=10^{-2};n_{S}=2\right) $} &  & \multicolumn{1}{c}{$\frac{0.05}{T}$} & 
\multicolumn{1}{c}{$\frac{0.05}{\ln T}$} & \multicolumn{1}{c}{$\frac{0.05}{N}
$} \\ 
$M$ &  & \multicolumn{1}{c}{} & \multicolumn{1}{c}{} & \multicolumn{1}{c}{}
& \multicolumn{1}{|l}{$M$} &  & \multicolumn{1}{c}{} & \multicolumn{1}{c}{}
& \multicolumn{1}{c}{} \\ 
$100$ &  & \multicolumn{1}{c}{$2$} & \multicolumn{1}{c}{$2$} & 
\multicolumn{1}{c}{$2$} & \multicolumn{1}{|l}{$100$} &  & \multicolumn{1}{c}{%
$2$} & \multicolumn{1}{c}{$2$} & \multicolumn{1}{c}{$2$} \\ 
$T/2$ &  & \multicolumn{1}{c}{$2$} & \multicolumn{1}{c}{$2$} & 
\multicolumn{1}{c}{$2$} & \multicolumn{1}{|l}{$T/2$} &  & \multicolumn{1}{c}{%
$2$} & \multicolumn{1}{c}{$2$} & \multicolumn{1}{c}{$2$} \\ 
$T$ &  & \multicolumn{1}{c}{$2$} & \multicolumn{1}{c}{$2$} & 
\multicolumn{1}{c}{$2$} & \multicolumn{1}{|l}{$T$} &  & \multicolumn{1}{c}{$2
$} & \multicolumn{1}{c}{$2$} & \multicolumn{1}{c}{$2$} \\ 
$2T$ &  & \multicolumn{1}{c}{$2$} & \multicolumn{1}{c}{$2$} & 
\multicolumn{1}{c}{$2$} & \multicolumn{1}{|l}{$2T$} &  & \multicolumn{1}{c}{$%
2$} & \multicolumn{1}{c}{$2$} & \multicolumn{1}{c}{$2$} \\ 
&  & \multicolumn{1}{c}{} & \multicolumn{1}{c}{} & \multicolumn{1}{c}{} & 
\multicolumn{1}{|l}{} &  & \multicolumn{1}{c}{} & \multicolumn{1}{c}{} & 
\multicolumn{1}{c}{} \\ \hline
&  & \multicolumn{1}{c}{} & \multicolumn{1}{c}{} & \multicolumn{1}{c}{} & 
\multicolumn{1}{|l}{} &  & \multicolumn{1}{c}{} & \multicolumn{1}{c}{} & 
\multicolumn{1}{c}{} \\ 
$\left( \kappa =10^{-4};n_{S}=4\right) $ &  & \multicolumn{1}{c}{$\frac{0.05%
}{T}$} & \multicolumn{1}{c}{$\frac{0.05}{\ln T}$} & \multicolumn{1}{c}{$%
\frac{0.05}{N}$} & \multicolumn{1}{|l}{$\left( \kappa
=10^{-2};n_{S}=4\right) $} &  & \multicolumn{1}{c}{$\frac{0.05}{T}$} & 
\multicolumn{1}{c}{$\frac{0.05}{\ln T}$} & \multicolumn{1}{c}{$\frac{0.05}{N}
$} \\ 
$M$ &  & \multicolumn{1}{c}{} & \multicolumn{1}{c}{} & \multicolumn{1}{c}{}
& \multicolumn{1}{|l}{$M$} &  & \multicolumn{1}{c}{} & \multicolumn{1}{c}{}
& \multicolumn{1}{c}{} \\ 
$100$ &  & \multicolumn{1}{c}{$2$} & \multicolumn{1}{c}{$2$} & 
\multicolumn{1}{c}{$2$} & \multicolumn{1}{|l}{$100$} &  & \multicolumn{1}{c}{%
$2$} & \multicolumn{1}{c}{$2$} & \multicolumn{1}{c}{$2$} \\ 
$T/2$ &  & \multicolumn{1}{c}{$2$} & \multicolumn{1}{c}{$2$} & 
\multicolumn{1}{c}{$2$} & \multicolumn{1}{|l}{$T/2$} &  & \multicolumn{1}{c}{%
$2$} & \multicolumn{1}{c}{$2$} & \multicolumn{1}{c}{$2$} \\ 
$T$ &  & \multicolumn{1}{c}{$2$} & \multicolumn{1}{c}{$2$} & 
\multicolumn{1}{c}{$2$} & \multicolumn{1}{|l}{$T$} &  & \multicolumn{1}{c}{$2
$} & \multicolumn{1}{c}{$2$} & \multicolumn{1}{c}{$2$} \\ 
$2T$ &  & \multicolumn{1}{c}{$2$} & \multicolumn{1}{c}{$2$} & 
\multicolumn{1}{c}{$2$} & \multicolumn{1}{|l}{$2T$} &  & \multicolumn{1}{c}{$%
2$} & \multicolumn{1}{c}{$2$} & \multicolumn{1}{c}{$2$} \\ 
&  & \multicolumn{1}{c}{} & \multicolumn{1}{c}{} & \multicolumn{1}{c}{} & 
\multicolumn{1}{|l}{} &  & \multicolumn{1}{c}{} & \multicolumn{1}{c}{} & 
\multicolumn{1}{c}{} \\ \hline\hline
\end{tabular}

\smallskip
\begin{tablenotes}

     \item The top part of the table summarizes the findings using various procedures. The BCT procedure has been implemented with $M=100$, nominal level for individual tests equal to $\frac{0.05}{T}$, $\kappa=10^{-4}$ and $n_{S}=4$; see also Panel B of the table for results obtained using different specifications. 
     \item In Panel A of the table, the results based on \citet{ling} are taken from Table 4 in their paper. The results corresponding to Johansen's procedure (\citealp{johansen1991}) have been derived setting $p=4$ in the specification of the $VAR$ based on BIC.  
     \item We report the outcome of our test under various specifications in Panel B  of the table, similarly to \ref{tab:TableComm1}.

\end{tablenotes}
}
\end{threeparttable}
\end{table}

In particular, \citet{ling} conclude that there is some evidence in favour
of $m=1$ (corresponding to two cointegration relationships). Upon inspecting
their results, however, it seems that they are driven by the choice of the $%
5\%$ nominal level: evidence in favour of $m=1$ is much weaker when
considering a $1\%$ nominal level (in particular, in the latter case, should
one use the critical values computed for $\eta =1$, there would be no
evidence of cointegration at all, whereas using the critical values computed
for the case $\eta =1.5$, one would conclude that $m=1$). Interestingly,
assuming finite variance, Johansen's sequential procedure rejects $m=2$ and $%
m=3$ in favour of $m=1$, even at the $1\%$ nominal level. 

In contrast to these mixed results, we find $\widehat{m}=2$ common
stochastic trends, corresponding to one cointegration relationship. This
result is robust to different values of $M$ and $\kappa $. 

\subsection{Testing for the weak form of the PPP\label{ppp}}

In this application, we check the validity of the weak form of the PPP,
using the same dataset as in \citet{falk}. In particular, for a set of 12
countries, we individually verify the presence of cointegration in the
vectors $\left( p_{c,t},p_{US,t},FX_{c,t}\right) ^{\prime }$, where $p_{c,t}$
is the (log of the) CPI\ of country $c$ at time $t$, $p_{US,t}$ is the (log
of the) US CPI index, and $FX_{c,t}$ is the log of the exchange rate of
country $c$ currency vis-a-vis the dollar. The series are monthly, spanning
from January 1973 to December 1999, corresponding to $T=324$ for each
country.

A minimum requirement for the weak form of the PPP to hold is that there is
at least one cointegrating relation between the three series -- i.e., at
most $m=2$ common stochastic trends. \citet{falk} find evidence of heavy
tails in this dataset (see their Table III), with series typically having
finite mean but infinite variance. Thence, the authors test for
cointegration using critical values based on allowing for heavy tails
(notice that, even in this case, the authors require prior knowledge of the
tail index $\eta $). We compare results from \citet{falk} and Johansen's
procedure with our proposed tests. As in the previous sections we use $M=T$, 
$\kappa =10^{-2}$; the level for the individual tests is set to $0.05/\ln T$%
. Tests are based on the deviations from the initial values. Results are
reported in Table \ref{tab:TableFalk}.

\bigskip

\begin{table}[h!]
\centering
\begin{threeparttable}
{\tiny
\caption{Inference on the weak form of the PPP} 
\label{tab:TableFalk}
\par
\begin{tabular}{llccccccccc}
\hline\hline
&  & \multicolumn{2}{c}{\textit{FW - Gaussian}} & \multicolumn{2}{c}{\textit{%
FW - Gaussian}} & \multicolumn{2}{c}{\textit{FW - heavy tailed}} & 
\multicolumn{2}{c}{\textit{FW - heavy tailed}} & \textit{BCT} \\ 
&  & \multicolumn{2}{c}{trace test} & \multicolumn{2}{c}{$\lambda _{\max }$
test} & \multicolumn{2}{c}{trace test} & \multicolumn{2}{c}{$\lambda _{\max }
$ test} &  \\ 
\textbf{Country} &  & \textit{5\% level} & \textit{10\% level} & \textit{5\%
level} & \textit{10\% level} & \textit{5\% level} & \textit{10\% level} & 
\textit{5\% level} & \textit{10\% level} &  \\ 
&  &  &  &  &  &  &  &  &  &  \\ \hline
&  &  &  &  &  &  &  &  &  &  \\ 
Belgium &  & Y & Y & Y & Y & Y & Y & Y & Y & N \\ 
Canada &  & N & N & N & N & N & N & N & N & Y \\ 
Denmark &  & Y & Y & Y & Y & N & Y & N & N & N \\ 
France &  & Y & Y & Y & Y & N & Y & N & Y & Y \\ 
Germany &  & N & N & N & N & N & N & N & N & N \\ 
Italy &  & N & Y & N & Y & N & N & N & N & Y \\ 
Japan &  & Y & Y & Y & Y & Y & Y & Y & Y & Y \\ 
Netherlands &  & Y & Y & Y & Y & Y & Y & Y & Y & N \\ 
Norway &  & Y & Y & N & N & N & Y & N & N & N \\ 
Spain &  & Y & Y & Y & Y & N & Y & N & N & N \\ 
Sweden &  & N & Y & N & N & N & N & N & N & N \\ 
UK &  & Y & Y & Y & Y & Y & Y & N & Y & Y \\ 
&  &  &  &  &  &  &  &  &  &  \\ \hline\hline
\end{tabular}

\smallskip
\begin{tablenotes}

     \item We indicate with Y cases where tests support the necessary condition for PPP to hold, i.e. $m \leq 2$. Cases with no evidence supporting PPP, corresponding to $m = 3$, are denoted as N. Results corresponding to the paper by \citet{falk} are taken from their Tables VII (for the Gaussian case) and VIII (for the case with heavy tails). The test statistics employed are the maximum eigenvalue ($\lambda_{max}$ column) and the trace test, as developed by \citet{johansen1991} for the Gaussian case. We refer to Table V in \citet{falk} for the critical values for both tests computed under the assumption of heavy-tailed data,

\end{tablenotes}
}
\end{threeparttable}
\end{table}

First, we notice that Johansen's method at the $10\%$ ($5\%$) nominal level
under the assumptions of finite variance indicates that the necessary
condition for the weak form of the PPP holds for 10 (8) out of 12 countries.
Second, if the critical values are adjusted using the estimated tail index,
as in \citet{falk}, then the support for the weak from of the PPP decreases.
Specifically, they found that the number of countries with $\widehat{m}\leq
2 $ ranges between $3$ (when using a maximum eigenvalue test at the $5\% $
nominal level) and $8$ (when using a trace test at the $10\%$ nominal
level). Our test provides support of PPP in $5$ countries out of 12, which
is closely in line with \citet{falk}. Our empirical evidence reinforces the
view that the assumption of finite variance can lead to spurious support to
the PPP.

\subsection{Statistical arbitrage in cryptocurrency markets\label{crypto}}

In the context of asset pricing, since the seminal contribution by %
\citet{diebold1994cointegration}, common trends and cointegration have
played a prominent role. Indeed, the presence of cointegration entails the
possibility of constructing portfolios which are mean reverting, so that a
profit can be made when such portfolios depart from their mean -- the
so-called \textquotedblleft statistical arbitrage\textquotedblright .
Although the literature usually focuses on portfolios of two assets
(\textquotedblleft pairs trading\textquotedblright ), it is possible to
generalise the notion of statistical arbitrage to the multi-asset case (see %
\citealp{alexander}). In such a case, having $m$ common stochastic trends
allows to construct $N-m$ mean reverting portfolios, each having weights
given by a user-chosen linear combination of the cointegrating vectors.
Cryptocurrency data have been paid increasing attention (\citealp{makarov});
however, despite the evidence that returns on cryptocurrencies exhibit heavy
tails (see \citealp{pele}), this is not usually accounted for in
applications.

We consider a portfolio of $N=6$ cryptocurrencies, chosen among those with
the largest market capitalisation within our sample period: Cardano (ADA),
Bitcoin Cash (BCH), BitCoin (BTC), EOS, Ethereum (ETH), and XRP.\footnote{%
Data have been downloaded from https://www.coingecko.com/} We use daily data
from October 18, 2017 until September 18, 2020, which is equivalent to a
sample of $T=1,066$ observations. As in the previous applications, we
consider deviations from the initial value. Table \ref{tab:TableCrypto}
shows that all cryptocurrencies have heavy tails, with the estimated tail
indices around $1$; despite some apparent heterogeneity in the tail indices,
confidence intervals for the Hill's estimator show that the common tail
index assumption which characterises stable distributions is satisfied.

\begin{table}[h!]
\centering
\begin{threeparttable}
{\tiny\caption{Estimated number of common trends} 

\label{tab:TableCrypto}
\par

\begin{tabular}{lllllllllll}
\hline\hline
&  &  &  &  &  &  &  &  &  &  \\ 
\multicolumn{11}{c}{\textbf{Results and sensitivity analysis}} \\[0.25cm]
&  &  &  &  &  & \multicolumn{1}{|l}{} &  &  &  &  \\ 
\textit{Currency} &  & \multicolumn{2}{c}{\textit{Tail index}} &  & \textit{%
Test }$H_{0}:E\left\vert X\right\vert ^{2}=\infty $ & \multicolumn{1}{|l}{%
\textit{nominal level}} &  & $1\%$ & $5\%$ & $10\%$ \\ 
&  &  &  &  &  & \multicolumn{1}{|l}{} &  &  &  &  \\ 
ADA &  & \multicolumn{2}{c}{$\underset{\left( 0.60,1.22\right) }{0.912}$} & 
& \multicolumn{1}{c}{$\underset{\left( \text{do not reject }H_{0}\right) }{%
0.9499}$} & \multicolumn{1}{|l}{Johansen's trace test} &  & $4$ & $4$ & $3$
\\ 
BCH &  & \multicolumn{2}{c}{$\underset{\left( 0.55,1.11\right) }{0.833}$} & 
& \multicolumn{1}{c}{$\underset{\left( \text{do not reject }H_{0}\right) }{%
0.9502}$} & \multicolumn{1}{|l}{} &  &  &  &  \\ 
BTC &  & \multicolumn{2}{c}{$\underset{\left( 0.85,1.71\right) }{1.284}$} & 
& \multicolumn{1}{c}{$\underset{\left( \text{do not reject }H_{0}\right) }{%
0.9503}$} & \multicolumn{1}{|l}{Johansen's $\lambda _{\max }$ test} &  & $4$
& $3$ & $3$ \\ 
EOS &  & \multicolumn{2}{c}{$\underset{\left( 0.74,1.50\right) }{1.126}$} & 
& \multicolumn{1}{c}{$\underset{\left( \text{do not reject }H_{0}\right) }{%
0.9499}$} & \multicolumn{1}{|l}{} &  &  &  &  \\ 
ETH &  & \multicolumn{2}{c}{$\underset{\left( 0.77,1.55\right) }{1.165}$} & 
& \multicolumn{1}{c}{$\underset{\left( \text{do not reject }H_{0}\right) }{%
0.9559}$} & \multicolumn{1}{|l}{} &  &  &  &  \\ 
XRP &  & \multicolumn{2}{c}{$\underset{\left( 0.97,1.95\right) }{1.464}$} & 
& \multicolumn{1}{c}{$\underset{\left( \text{do not reject }H_{0}\right) }{%
0.9504}$} & \multicolumn{1}{|l}{} &  &  &  &  \\ 
&  &  &  &  &  & \multicolumn{1}{|l}{} &  &  &  &  \\ \hline
&  &  &  &  &  &  &  &  &  &  \\ 
$\left( \kappa =10^{-4};n_{S}=2\right) $ &  & \multicolumn{1}{c}{$\frac{0.05%
}{T}$} & \multicolumn{1}{c}{$\frac{0.05}{\ln T}$} & $\frac{0.05}{N}$ & 
\multicolumn{1}{c}{} & \multicolumn{1}{|l}{$\left( \kappa
=10^{-2};n_{S}=2\right) $} &  & \multicolumn{1}{c}{$\frac{0.05}{T}$} & 
\multicolumn{1}{c}{$\frac{0.05}{\ln T}$} & \multicolumn{1}{c}{$\frac{0.05}{N}
$} \\ 
$M$ &  & \multicolumn{1}{c}{} & \multicolumn{1}{c}{} &  & \multicolumn{1}{c}{
} & \multicolumn{1}{|l}{$M$} &  & \multicolumn{1}{c}{} & \multicolumn{1}{c}{}
& \multicolumn{1}{c}{} \\ 
$T/4$ &  & \multicolumn{1}{c}{$5$} & \multicolumn{1}{c}{$5$} & 
\multicolumn{1}{c}{$5$} & \multicolumn{1}{c}{} & \multicolumn{1}{|l}{$T/4$}
&  & \multicolumn{1}{c}{$5$} & \multicolumn{1}{c}{$5$} & \multicolumn{1}{c}{$%
5$} \\ 
$T/2$ &  & \multicolumn{1}{c}{$5$} & \multicolumn{1}{c}{$5$} & 
\multicolumn{1}{c}{$5$} & \multicolumn{1}{c}{} & \multicolumn{1}{|l}{$T/2$}
&  & \multicolumn{1}{c}{$5$} & \multicolumn{1}{c}{$5$} & \multicolumn{1}{c}{$%
5$} \\ 
$T$ &  & \multicolumn{1}{c}{$5$} & \multicolumn{1}{c}{$5$} & 
\multicolumn{1}{c}{$5$} & \multicolumn{1}{c}{} & \multicolumn{1}{|l}{$T$} & 
& \multicolumn{1}{c}{$5$} & \multicolumn{1}{c}{$5$} & \multicolumn{1}{c}{$5$}
\\ 
$2T$ &  & \multicolumn{1}{c}{$5$} & \multicolumn{1}{c}{$5$} & 
\multicolumn{1}{c}{$5$} & \multicolumn{1}{c}{} & \multicolumn{1}{|l}{$2T$} & 
& \multicolumn{1}{c}{$5$} & \multicolumn{1}{c}{$5$} & \multicolumn{1}{c}{$5$}
\\ 
&  & \multicolumn{1}{c}{} & \multicolumn{1}{c}{} &  & \multicolumn{1}{c}{} & 
\multicolumn{1}{|l}{} &  & \multicolumn{1}{c}{} & \multicolumn{1}{c}{} & 
\multicolumn{1}{c}{} \\ \hline
&  & \multicolumn{1}{c}{} & \multicolumn{1}{c}{} &  & \multicolumn{1}{c}{} & 
&  & \multicolumn{1}{c}{} & \multicolumn{1}{c}{} & \multicolumn{1}{c}{} \\ 
$\left( \kappa =10^{-4};n_{S}=4\right) $ &  & \multicolumn{1}{c}{$\frac{0.05%
}{T}$} & \multicolumn{1}{c}{$\frac{0.05}{\ln T}$} & $\frac{0.05}{N}$ & 
\multicolumn{1}{c}{} & \multicolumn{1}{|l}{$\left( \kappa
=10^{-2};n_{S}=4\right) $} &  & \multicolumn{1}{c}{$\frac{0.05}{T}$} & 
\multicolumn{1}{c}{$\frac{0.05}{\ln T}$} & \multicolumn{1}{c}{$\frac{0.05}{N}
$} \\ 
$M$ &  & \multicolumn{1}{c}{} & \multicolumn{1}{c}{} &  & \multicolumn{1}{c}{
} & \multicolumn{1}{|l}{$M$} &  & \multicolumn{1}{c}{} & \multicolumn{1}{c}{}
& \multicolumn{1}{c}{} \\ 
$T/4$ &  & \multicolumn{1}{c}{$5$} & \multicolumn{1}{c}{$5$} & 
\multicolumn{1}{c}{$5$} & \multicolumn{1}{c}{} & \multicolumn{1}{|l}{$T/4$}
&  & \multicolumn{1}{c}{$5$} & \multicolumn{1}{c}{$5$} & \multicolumn{1}{c}{$%
5$} \\ 
$T/2$ &  & \multicolumn{1}{c}{$5$} & \multicolumn{1}{c}{$5$} & 
\multicolumn{1}{c}{$5$} & \multicolumn{1}{c}{} & \multicolumn{1}{|l}{$T/2$}
&  & \multicolumn{1}{c}{$5$} & \multicolumn{1}{c}{$5$} & \multicolumn{1}{c}{$%
5$} \\ 
$T$ &  & \multicolumn{1}{c}{$5$} & \multicolumn{1}{c}{$5$} & 
\multicolumn{1}{c}{$5$} & \multicolumn{1}{c}{} & \multicolumn{1}{|l}{$T$} & 
& \multicolumn{1}{c}{$5$} & \multicolumn{1}{c}{$5$} & \multicolumn{1}{c}{$5$}
\\ 
$2T$ &  & \multicolumn{1}{c}{$5$} & \multicolumn{1}{c}{$5$} & 
\multicolumn{1}{c}{$5$} & \multicolumn{1}{c}{} & \multicolumn{1}{|l}{$2T$} & 
& \multicolumn{1}{c}{$5$} & \multicolumn{1}{c}{$5$} & \multicolumn{1}{c}{$5$}
\\ 
&  & \multicolumn{1}{c}{} & \multicolumn{1}{c}{} &  & \multicolumn{1}{c}{} & 
\multicolumn{1}{|l}{} &  & \multicolumn{1}{c}{} & \multicolumn{1}{c}{} & 
\multicolumn{1}{c}{} \\ \hline\hline
\end{tabular}

\smallskip
\begin{tablenotes}

     \item In the top part of the table we report the estimated values of the tail index using the Hill's estimator - the package 'ptsuite' in R has been employed, using a number of order statistics equal to $k_{T}=32$, which corresponds to $O(T^{1/2})$. We also report the number of cointegration relationships found by Johansen's procedure; this has been implemented using $p=3$ lags in the $VAR$ specification, as suggested using BIC. 
     \item We report the outcome of our test under various specifications in the bottom half of the table, similarly to Table \ref{tab:TableLing}.

\end{tablenotes}
}
\end{threeparttable}
\end{table}

The results in Table \ref{tab:TableCrypto}, based on our test using
different configurations of the tuning parameters $M$ and $\kappa $,
indicate $m=5$ common trends for all cases considered, i.e. the presence of
one cointegrating relation between the $6$ cryptocurrencies considered. Note
that, in contrast to our findings, Johansen's approach based on a VAR($p$)
with $p=10$ (as suggested by BIC) broadly indicates $m=4$ ($m=3$ at the $%
10\% $ nominal level or for larger values of $p$). However, the results by %
\citet{caner} indicate a tendency to understate $m$ in the presence of heavy
tails; therefore, the additional cointegrating relations found by Johansen's
method is likely to be spuriously driven by the heavy-tailedness of the
cryptocurrency returns. 

\section{Conclusions\label{conclusions}}

In this paper, we propose a methodology for inference on the common trends
in multivariate time series with heavy tailed, heterogeneous innovations. We
develop: \textit{(i)} tests for hypotheses on the number of common trends; 
\textit{(ii)} a sequential procedure to consistently detect the number of
common trends; \textit{(iii)} an estimator of the common trends and of the
associated loadings. A key feature of our approach is that estimation of the
tail index of the innovations is not needed, and no prior knowledge as to
whether the data have finite variance or not is required. Indeed, the
procedure can be applied even in the case of finite second moments. \newline
Our method is based on the eigenvalues of the sample second moment matrix of
the data in levels, the largest $m$ ($m$ being the unknown number of common
trends) of which are shown to diverge at a higher rate, as $T$ increases,
than the remaining ones. Based on such rates, we propose a randomised
statistic for testing hypotheses on $m$; its limiting distribution is
chi-squared under the null, and diverges under the alternative. Combining
these individual tests into a sequence of tests, we prove consistency of the
estimator of $m$ by simply letting the nominal level to shrink to zero at a
proper rate. We also show that, once $m$ is determined, estimation of the
common trends and their loadings can be done using PCA. Our simulations show
that our method has good properties, even in samples of small and moderate
size.

\begin{adjustwidth}{-6.0pt}{-6.0pt}

{\footnotesize 
\bibliographystyle{chicago}
\bibliography{BCT_biblio} }

\begin{thebibliography}{}

\bibitem[\protect\citeauthoryear{Ahn and Reinsel}{Ahn and
  Reinsel}{1990}]{reisnel}
Ahn, S.~K. and G.~C. Reinsel (1990).
\newblock Estimation for partially nonstationary multivariate autoregressive
  models.
\newblock {\em Journal of the American statistical association\/}~{\em
  85\/}(411), 813--823.

\bibitem[\protect\citeauthoryear{Alexander, Giblin, and Weddington}{Alexander
  et~al.}{2002}]{alexander}
Alexander, C., I.~Giblin, and W.~Weddington (2002).
\newblock Cointegration and asset allocation: A new active hedge fund strategy.
\newblock {\em Research in International Business and Finance\/}~{\em 16\/}(5),
  65--90.

\bibitem[\protect\citeauthoryear{Alquist, Bhattarai, and Coibion}{Alquist
  et~al.}{2020}]{alquist}
Alquist, R., S.~Bhattarai, and O.~Coibion (2020).
\newblock Commodity-price comovement and global economic activity.
\newblock {\em Journal of Monetary Economics\/}~{\em 112}, 41--56.

\bibitem[\protect\citeauthoryear{Aznar and Salvador}{Aznar and
  Salvador}{2002}]{aznar}
Aznar, A. and M.~Salvador (2002).
\newblock Selecting the rank of the cointegration space and the form of the
  intercept using an information criterion.
\newblock {\em Econometric Theory\/}, 926--947.

\bibitem[\protect\citeauthoryear{Bai}{Bai}{2004}]{bai04}
Bai, J. (2004).
\newblock Estimating cross-section common stochastic trends in nonstationary
  panel data.
\newblock {\em Journal of Econometrics\/}~{\em 122}, 137--183.

\bibitem[\protect\citeauthoryear{Bandi and Corradi}{Bandi and
  Corradi}{2014}]{bandi2014}
Bandi, F. and V.~Corradi (2014).
\newblock Nonparametric nonstationarity tests.
\newblock {\em Econometric Theory\/}~{\em 30}, 127--149.

\bibitem[\protect\citeauthoryear{Barigozzi and Trapani}{Barigozzi and
  Trapani}{2021}]{bt1}
Barigozzi, M. and L.~Trapani (2021).
\newblock Testing for common trends in nonstationary large datasets.
\newblock {\em Journal of Business and Economic Statistics\/}.
\newblock forthcoming.

\bibitem[\protect\citeauthoryear{Bernstein and Nielsen}{Bernstein and
  Nielsen}{2019}]{bernstein2019}
Bernstein, D.~H. and B.~Nielsen (2019).
\newblock Asymptotic theory for cointegration analysis when the cointegration
  rank is deficient.
\newblock {\em Econometrics\/}~{\em 7\/}(1), 6.

\bibitem[\protect\citeauthoryear{Boswijk, Jansson, and Nielsen}{Boswijk
  et~al.}{2015}]{boswijk2015improved}
Boswijk, H.~P., M.~Jansson, and M.~{\O}. Nielsen (2015).
\newblock Improved likelihood ratio tests for cointegration rank in the {VAR}
  model.
\newblock {\em Journal of Econometrics\/}~{\em 184\/}(1), 97--110.

\bibitem[\protect\citeauthoryear{Cai}{Cai}{2006}]{cai2006}
Cai, G.~H. (2006).
\newblock Chover-type laws of the iterated logarithm for weighted sums of
  $\rho$-mixing sequences.
\newblock {\em International Journal of Stochastic Analysis\/}~{\em 2006}.

\bibitem[\protect\citeauthoryear{Caner}{Caner}{1998}]{caner}
Caner, M. (1998).
\newblock Tests for cointegration with infinite variance errors.
\newblock {\em Journal of Econometrics\/}~{\em 86\/}(1), 155--175.

\bibitem[\protect\citeauthoryear{Cavaliere, De~Angelis, Rahbek, and
  Taylor}{Cavaliere et~al.}{2018}]{cavaliere2018determining}
Cavaliere, G., L.~De~Angelis, A.~Rahbek, and A.~R. Taylor (2018).
\newblock Determining the cointegration rank in heteroskedastic {VAR} models of
  unknown order.
\newblock {\em Econometric theory\/}~{\em 34\/}(2), 349--382.

\bibitem[\protect\citeauthoryear{Cavaliere, Georgiev, and Taylor}{Cavaliere
  et~al.}{2018}]{cavaliere2018unit}
Cavaliere, G., I.~Georgiev, and A.~R. Taylor (2018).
\newblock Unit root inference for non-stationary linear processes driven by
  infinite variance innovations.
\newblock {\em Econometric Theory\/}~{\em 34\/}(2), 302--348.

\bibitem[\protect\citeauthoryear{Cavaliere, Rahbek, and Taylor}{Cavaliere
  et~al.}{2012}]{cavaliereECTA}
Cavaliere, G., A.~Rahbek, and A.~R. Taylor (2012).
\newblock Bootstrap determination of the co-integration rank in vector
  autoregressive models.
\newblock {\em Econometrica\/}~{\em 80\/}(4), 1721--1740.

\bibitem[\protect\citeauthoryear{Cavaliere and Taylor}{Cavaliere and
  Taylor}{2009}]{cavaliere2009}
Cavaliere, G. and A.~R. Taylor (2009).
\newblock Heteroskedastic time series with a unit root.
\newblock {\em Econometric Theory\/}, 1228--1276.

\bibitem[\protect\citeauthoryear{Chover}{Chover}{1966}]{chover}
Chover, J. (1966).
\newblock A law of the iterated logarithm for stable summands.
\newblock {\em Proceedings of the American Mathematical Society\/}~{\em
  17\/}(2), 441--443.

\bibitem[\protect\citeauthoryear{Clauset, Shalizi, and Newman}{Clauset
  et~al.}{2009}]{clauset}
Clauset, A., C.~R. Shalizi, and M.~E. Newman (2009).
\newblock Power-law distributions in empirical data.
\newblock {\em SIAM review\/}~{\em 51\/}(4), 661--703.

\bibitem[\protect\citeauthoryear{Davidson}{Davidson}{1994}]{davidson}
Davidson, J. (1994).
\newblock {\em Stochastic limit theory: An introduction for econometricians}.
\newblock Oxford: Oxford University Press.

\bibitem[\protect\citeauthoryear{Diebold, Gardeazabal, and Yilmaz}{Diebold
  et~al.}{1994}]{diebold1994cointegration}
Diebold, F.~X., J.~Gardeazabal, and K.~Yilmaz (1994).
\newblock On cointegration and exchange rate dynamics.
\newblock {\em The Journal of Finance\/}~{\em 49\/}(2), 727--735.

\bibitem[\protect\citeauthoryear{Embrechts, Kl{\"u}ppelberg, and
  Mikosch}{Embrechts et~al.}{2013}]{embrechts}
Embrechts, P., C.~Kl{\"u}ppelberg, and T.~Mikosch (2013).
\newblock {\em Modelling extremal events: for insurance and finance},
  Volume~33.
\newblock Springer Science \& Business Media.

\bibitem[\protect\citeauthoryear{Engle and Granger}{Engle and
  Granger}{1987}]{englegranger}
Engle, R. and C.~W. Granger (1987).
\newblock Co-integration and error correction: Representation, estimation, and
  testing.
\newblock {\em Econometrica\/}~{\em 55}, 251--276.

\bibitem[\protect\citeauthoryear{Falk and Wang}{Falk and Wang}{2003}]{falk}
Falk, B. and C.-H. Wang (2003).
\newblock Testing long-run {PPP} with infinite-variance returns.
\newblock {\em Journal of Applied Econometrics\/}~{\em 18\/}(4), 471--484.

\bibitem[\protect\citeauthoryear{Fasen}{Fasen}{2013}]{fasen}
Fasen, V. (2013).
\newblock Time series regression on integrated continuous-time processes with
  heavy and light tails.
\newblock {\em Econometric Theory\/}, 28--67.

\bibitem[\protect\citeauthoryear{Gabaix}{Gabaix}{1999}]{gabaix}
Gabaix, X. (1999).
\newblock Zipf's law for cities: an explanation.
\newblock {\em The Quarterly Journal of Economics\/}~{\em 114\/}(3), 739--767.

\bibitem[\protect\citeauthoryear{Gonzalo and Granger}{Gonzalo and
  Granger}{1995}]{gg95}
Gonzalo, J. and C.~Granger (1995).
\newblock Estimation of common long-memory components in cointegrated systems.
\newblock {\em Journal of Business \& Economic Statistics\/}~{\em 13\/}(1),
  27--35.

\bibitem[\protect\citeauthoryear{Hallin, Van~den Akker, and Werker}{Hallin
  et~al.}{2011}]{hallin1}
Hallin, M., R.~Van~den Akker, and B.~J. Werker (2011).
\newblock A class of simple semiparametrically efficient rank-based unit root
  tests.
\newblock {\em Journal of Econometrics\/}~{\em 163}, 200--214.

\bibitem[\protect\citeauthoryear{Hallin, van~den Akker, and Werker}{Hallin
  et~al.}{2016}]{hallin2}
Hallin, M., R.~van~den Akker, and B.~J. Werker (2016).
\newblock Semiparametric error-correction models for cointegration with trends:
  {P}seudo-{G}aussian and optimal rank-based tests of the cointegration rank.
\newblock {\em Journal of Econometrics\/}~{\em 190\/}(1), 46--61.

\bibitem[\protect\citeauthoryear{Hansen}{Hansen}{2005}]{prhansen}
Hansen, P.~R. (2005).
\newblock Granger's representation theorem: A closed-form expression for i (1)
  processes.
\newblock {\em The Econometrics Journal\/}~{\em 8\/}(1), 23--38.

\bibitem[\protect\citeauthoryear{Hewitt and Stromberg}{Hewitt and
  Stromberg}{2013}]{hewitt}
Hewitt, E. and K.~Stromberg (2013).
\newblock {\em Real and abstract analysis: a modern treatment of the theory of
  functions of a real variable}.
\newblock Springer-Verlag.

\bibitem[\protect\citeauthoryear{Horn and Johnson}{Horn and
  Johnson}{2012}]{hornjohnson}
Horn, R.~A. and C.~R. Johnson (2012).
\newblock {\em Matrix analysis}.
\newblock Cambridge University Press.

\bibitem[\protect\citeauthoryear{Horv\'{a}th and Trapani}{Horv\'{a}th and
  Trapani}{2019}]{HT16}
Horv\'{a}th, L. and L.~Trapani (2019).
\newblock Testing for randomness in a random coefficient autoregression model.
\newblock {\em Journal of Econometrics\/}~{\em 209}, 338--352.

\bibitem[\protect\citeauthoryear{Hsu and Robbins}{Hsu and
  Robbins}{1947}]{hsurobbins}
Hsu, P.-L. and H.~Robbins (1947).
\newblock Complete convergence and the law of large numbers.
\newblock {\em Proceedings of the National Academy of Sciences of the United
  States of America\/}~{\em 33\/}(2), 25.

\bibitem[\protect\citeauthoryear{Ibragimov and Ibragimov}{Ibragimov and
  Ibragimov}{2018}]{ibragimov}
Ibragimov, M. and R.~Ibragimov (2018).
\newblock Heavy tails and upper-tail inequality: The case of {R}ussia.
\newblock {\em Empirical Economics\/}~{\em 54\/}(2), 823--837.

\bibitem[\protect\citeauthoryear{Jach and Kokoszka}{Jach and
  Kokoszka}{2004}]{jach2004subsampling}
Jach, A. and P.~Kokoszka (2004).
\newblock Subsampling unit root tests for heavy-tailed observations.
\newblock {\em Methodology and Computing in Applied Probability\/}~{\em
  6\/}(1), 73--97.

\bibitem[\protect\citeauthoryear{Jain}{Jain}{1982}]{jain}
Jain, N.~C. (1982).
\newblock A {D}onsker-{V}aradhan type of invariance principle.
\newblock {\em Zeitschrift f{\"u}r Wahrscheinlichkeitstheorie und verwandte
  Gebiete\/}~{\em 59\/}(1), 117--138.

\bibitem[\protect\citeauthoryear{Johansen}{Johansen}{1991}]{johansen1991}
Johansen, S. (1991).
\newblock Estimation and hypothesis testing of cointegration vectors in
  gaussian vector autoregressive models.
\newblock {\em Econometrica\/}, 1551--1580.

\bibitem[\protect\citeauthoryear{Kasa}{Kasa}{1992}]{kasa}
Kasa, K. (1992).
\newblock Common stochastic trends in international stock markets.
\newblock {\em Journal of Monetary Economics\/}~{\em 29\/}(1), 95--124.

\bibitem[\protect\citeauthoryear{Kasahara and Maejima}{Kasahara and
  Maejima}{1986}]{maejima}
Kasahara, Y. and M.~Maejima (1986).
\newblock Functional limit theorems for weighted sums of iid random variables.
\newblock {\em Probability theory and related fields\/}~{\em 72\/}(2),
  161--183.

\bibitem[\protect\citeauthoryear{Knight et~al.}{Knight et~al.}{1989}]{knight}
Knight, K. et~al. (1989).
\newblock Consistency of akaike's information criterion for infinite variance
  autoregressive processes.
\newblock {\em The Annals of Statistics\/}~{\em 17\/}(2), 824--840.

\bibitem[\protect\citeauthoryear{Luxemburg}{Luxemburg}{1955}]{luxemburg}
Luxemburg, W. A.~J. (1955).
\newblock Banach function spaces [thesis].
\newblock {\em Technische Hogeschool to Delft, The Netherlands\/}.

\bibitem[\protect\citeauthoryear{Makarov and Schoar}{Makarov and
  Schoar}{2020}]{makarov}
Makarov, I. and A.~Schoar (2020).
\newblock Trading and arbitrage in cryptocurrency markets.
\newblock {\em Journal of Financial Economics\/}~{\em 135\/}(2), 293--319.

\bibitem[\protect\citeauthoryear{Merikoski and Kumar}{Merikoski and
  Kumar}{2004}]{merikoski2004}
Merikoski, J.~K. and R.~Kumar (2004).
\newblock Inequalities for spreads of matrix sums and products.
\newblock {\em Applied Mathematics E-Notes\/}~{\em 4}, 150--159.

\bibitem[\protect\citeauthoryear{Paulauskas and Rachev}{Paulauskas and
  Rachev}{1998}]{paulauskas}
Paulauskas, V. and S.~T. Rachev (1998).
\newblock Cointegrated processes with infinite variance innovations.
\newblock {\em Annals of Applied Probability\/}, 775--792.

\bibitem[\protect\citeauthoryear{Pele, Wesselh{\"o}fft, H{\"a}rdle,
  Kolossiatis, and Yatracos}{Pele et~al.}{2020}]{pele}
Pele, D.~T., N.~Wesselh{\"o}fft, W.~K. H{\"a}rdle, M.~Kolossiatis, and
  Y.~Yatracos (2020).
\newblock A statistical classification of cryptocurrencies.
\newblock {\em Available at SSRN 3548462\/}.

\bibitem[\protect\citeauthoryear{Pe{\~n}a and Poncela}{Pe{\~n}a and
  Poncela}{2006}]{penaponcela}
Pe{\~n}a, D. and P.~Poncela (2006).
\newblock Nonstationary dynamic factor analysis.
\newblock {\em Journal of Statistical Planning and Inference\/}~{\em 136\/}(4),
  1237--1257.

\bibitem[\protect\citeauthoryear{Petrov}{Petrov}{1974}]{petrov1974}
Petrov, V. (1974).
\newblock Moments of distributions attracted to stable laws.
\newblock {\em Theory of Probability \& Its Applications\/}~{\em 18\/}(3),
  569--571.

\bibitem[\protect\citeauthoryear{Pham and Tran}{Pham and Tran}{1985}]{phamtran}
Pham, T.~D. and L.~T. Tran (1985).
\newblock Some mixing properties of time series models.
\newblock {\em Stochastic processes and their applications\/}~{\em 19\/}(2),
  297--303.

\bibitem[\protect\citeauthoryear{Qu and Perron}{Qu and
  Perron}{2007}]{qu2007modified}
Qu, Z. and P.~Perron (2007).
\newblock A modified information criterion for cointegration tests based on a
  {VAR} approximation.
\newblock {\em Econometric Theory\/}, 638--685.

\bibitem[\protect\citeauthoryear{Rachev}{Rachev}{2003}]{rachev2003handbook}
Rachev, S.~T. (2003).
\newblock {\em Handbook of Heavy Tailed Distributions in Finance: Handbooks in
  Finance, Book 1}.
\newblock Elsevier.

\bibitem[\protect\citeauthoryear{Rio}{Rio}{1995}]{rio1995}
Rio, E. (1995).
\newblock A maximal inequality and dependent {Marcinkiewicz-Zygmund} strong
  laws.
\newblock {\em The Annals of Probability\/}~{\em 23\/}(2), 918--937.

\bibitem[\protect\citeauthoryear{Salzer, Zucker, and Capuano}{Salzer
  et~al.}{1952}]{salzer}
Salzer, H.~E., R.~Zucker, and R.~Capuano (1952).
\newblock Table of the zeros and weight factors of the first twenty {H}ermite
  polynomials.
\newblock {\em Journal of Research of the National Bureau of Standards\/}~{\em
  48\/}(2), 111.

\bibitem[\protect\citeauthoryear{Samorodnitsky and Taqqu}{Samorodnitsky and
  Taqqu}{1994}]{taqqu}
Samorodnitsky, G. and M.~S. Taqqu (1994).
\newblock {\em Stable non-Gaussian random processes}.
\newblock Chapman $\&$ Hall.

\bibitem[\protect\citeauthoryear{Seneta}{Seneta}{2006}]{seneta}
Seneta, E. (2006).
\newblock {\em Regularly varying functions}, Volume 508.
\newblock Springer.

\bibitem[\protect\citeauthoryear{Serfling}{Serfling}{1970}]{serfling1970}
Serfling, R.~J. (1970).
\newblock Moment inequalities for the maximum cumulative sum.
\newblock {\em The Annals of Mathematical Statistics\/}, 1227--1234.

\bibitem[\protect\citeauthoryear{Shao}{Shao}{2015}]{shao}
Shao, X. (2015).
\newblock Self-normalization for time series: a review of recent developments.
\newblock {\em Journal of the American Statistical Association\/}~{\em
  110\/}(512), 1797--1817.

\bibitem[\protect\citeauthoryear{She and Ling}{She and Ling}{2020}]{ling}
She, R. and S.~Ling (2020).
\newblock Inference in heavy-tailed vector error correction models.
\newblock {\em Journal of Econometrics\/}~{\em 214\/}(2), 433--450.

\bibitem[\protect\citeauthoryear{Shi}{Shi}{1999}]{shilower}
Shi, Z. (1999).
\newblock Lower tails of quadratic functionals of symmetric stable processes.
\newblock Technical report.

\bibitem[\protect\citeauthoryear{Stock and Watson}{Stock and
  Watson}{1988}]{SW1988}
Stock, J.~H. and M.~W. Watson (1988).
\newblock Testing for common trends.
\newblock {\em Journal of the American Statistical Association\/}~{\em
  83\/}(404), 1097--1107.

\bibitem[\protect\citeauthoryear{Trapani}{Trapani}{2014}]{trapaniLIL}
Trapani, L. (2014).
\newblock Chover-type laws of the k-iterated logarithm for weighted sums of
  strongly mixing sequences.
\newblock {\em Journal of Mathematical Analysis and Applications\/}~{\em
  420\/}(2), 908--916.

\bibitem[\protect\citeauthoryear{Trapani}{Trapani}{2016}]{trapani16}
Trapani, L. (2016).
\newblock Testing for (in)finite moments.
\newblock {\em Journal of Econometrics\/}~{\em 191}, 57--68.

\bibitem[\protect\citeauthoryear{Trapani}{Trapani}{2018}]{trapani17}
Trapani, L. (2018).
\newblock A randomized sequential procedure to determine the number of factors.
\newblock {\em Journal of the American Statistical Association\/}~{\em
  113\/}(523), 1341--1349.

\bibitem[\protect\citeauthoryear{Tu, Yao, and Zhang}{Tu
  et~al.}{2020}]{tu2020error}
Tu, Y., Q.~Yao, and R.~Zhang (2020).
\newblock Error-correction factor models for high-dimensional cointegrated time
  series.
\newblock {\em Statistica Sinica\/}~{\em 30\/}(3), 1463--1484.

\bibitem[\protect\citeauthoryear{von Bahr and Esseen}{von Bahr and
  Esseen}{1965}]{bahr}
von Bahr, B. and C.-G. Esseen (1965).
\newblock Inequalities for the $ r $ th absolute moment of a sum of random
  variables, $1 \leq r \leq 2$.
\newblock {\em The Annals of Mathematical Statistics\/}~{\em 36\/}(1),
  299--303.

\bibitem[\protect\citeauthoryear{Watson}{Watson}{1994}]{watson1994vector}
Watson, M.~W. (1994).
\newblock Vector autoregressions and cointegration.
\newblock {\em Handbook of Econometrics\/}~{\em 4}, 2843--2915.

\bibitem[\protect\citeauthoryear{Withers}{Withers}{1981}]{withers}
Withers, C. (1981).
\newblock Conditions for linear processes to be strong-mixing.
\newblock {\em Zeitschrift f{\"u}r Wahrscheinlichkeitstheorie und verwandte
  Gebiete\/}~{\em 57\/}(4), 477--480.

\bibitem[\protect\citeauthoryear{Yap and Reinsel}{Yap and
  Reinsel}{1995}]{yapreisnel}
Yap, S.~F. and G.~C. Reinsel (1995).
\newblock Estimation and testing for unit roots in a partially nonstationary
  vector autoregressive moving average model.
\newblock {\em Journal of the American Statistical Association\/}~{\em
  90\/}(429), 253--267.

\bibitem[\protect\citeauthoryear{Zhang, Robinson, and Yao}{Zhang
  et~al.}{2019}]{zyr18}
Zhang, R., P.~Robinson, and Q.~Yao (2019).
\newblock Identifying cointegration by eigenanalysis.
\newblock {\em Journal of the American Statistical Association\/}~{\em
  114\/}(526), 916--927.

\end{thebibliography}

\end{adjustwidth}

{\small 
\begin{figure}[h!]
\caption{Commodity prices - Estimated common trends $\widehat x_t$}
\label{fig:FigComm}{\small \centering
\hspace{-2.25cm} 
\begin{minipage}{0.45\textwidth}
\centering
    \includegraphics[scale=0.4]{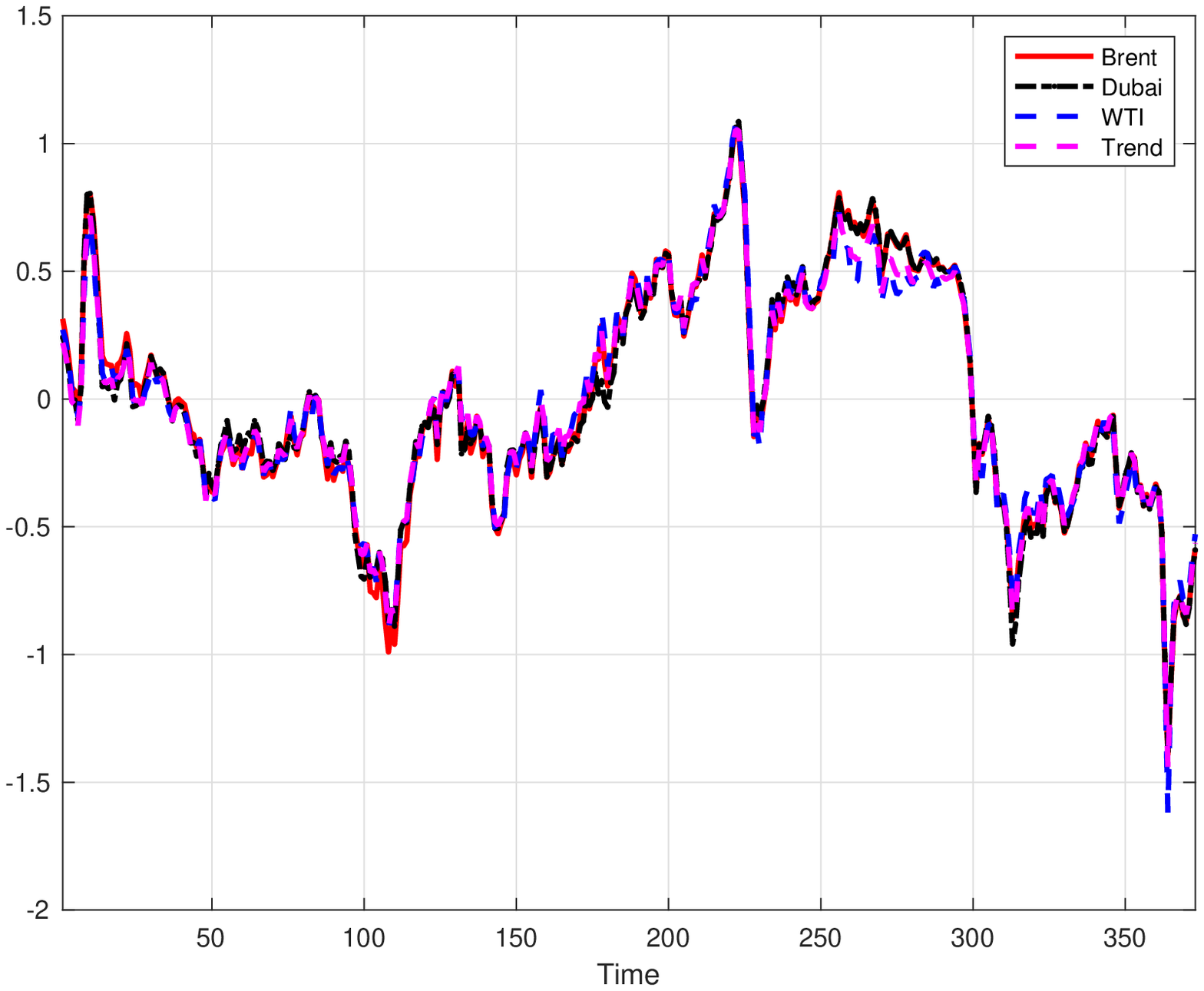}
    \captionof{subfigure}{WTI} 
    \label{fig:t11}
\end{minipage}%
\begin{minipage}{0.4\textwidth}
\centering
   \includegraphics[scale=0.4]{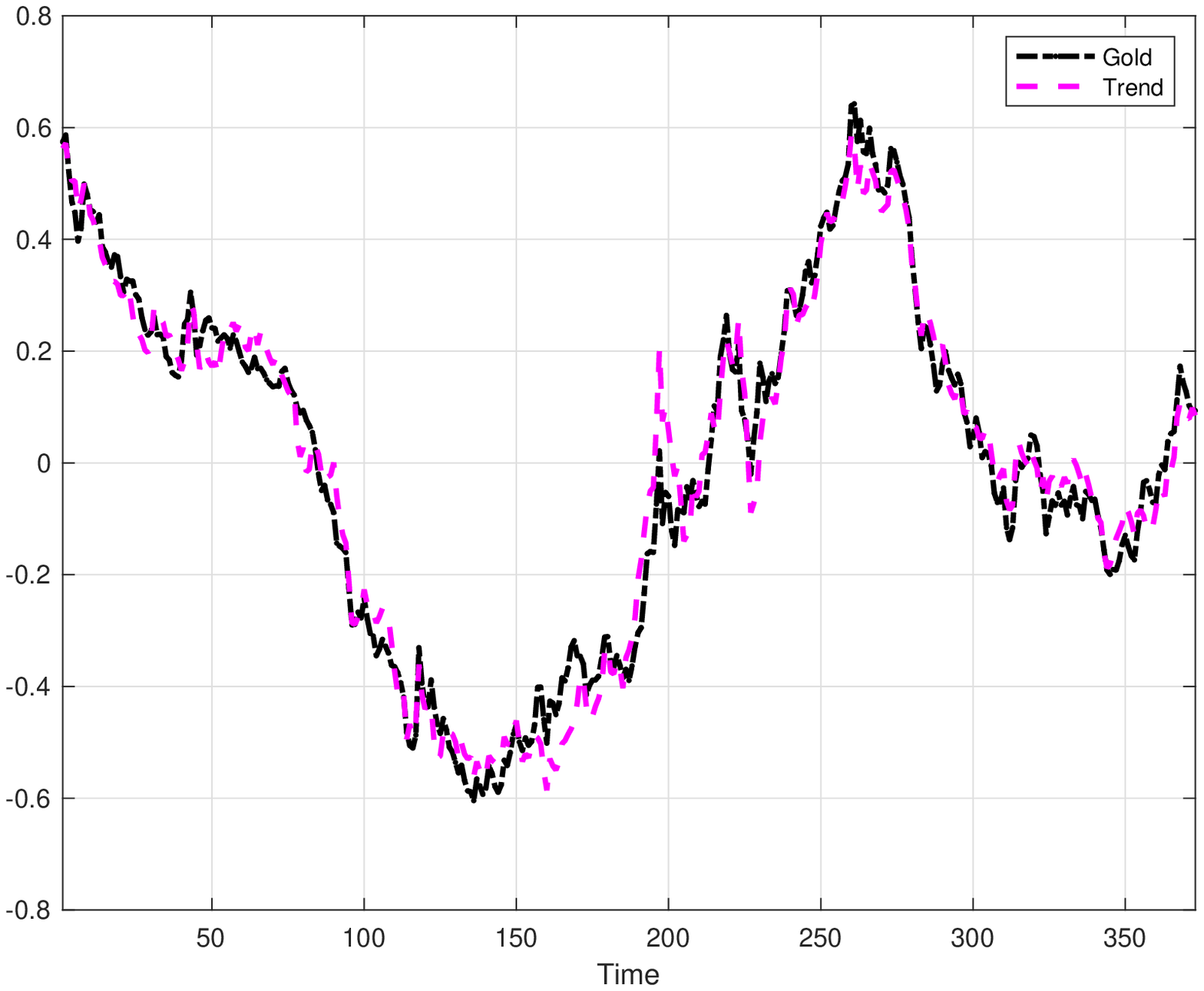}
    \captionof{subfigure}{Gold}
    \label{fig:t12}
\end{minipage} \\[0.1cm]
}
\par
{\small \hspace{-2.25cm} 
\begin{minipage}{0.45\textwidth}
\centering
    \includegraphics[scale=0.4]{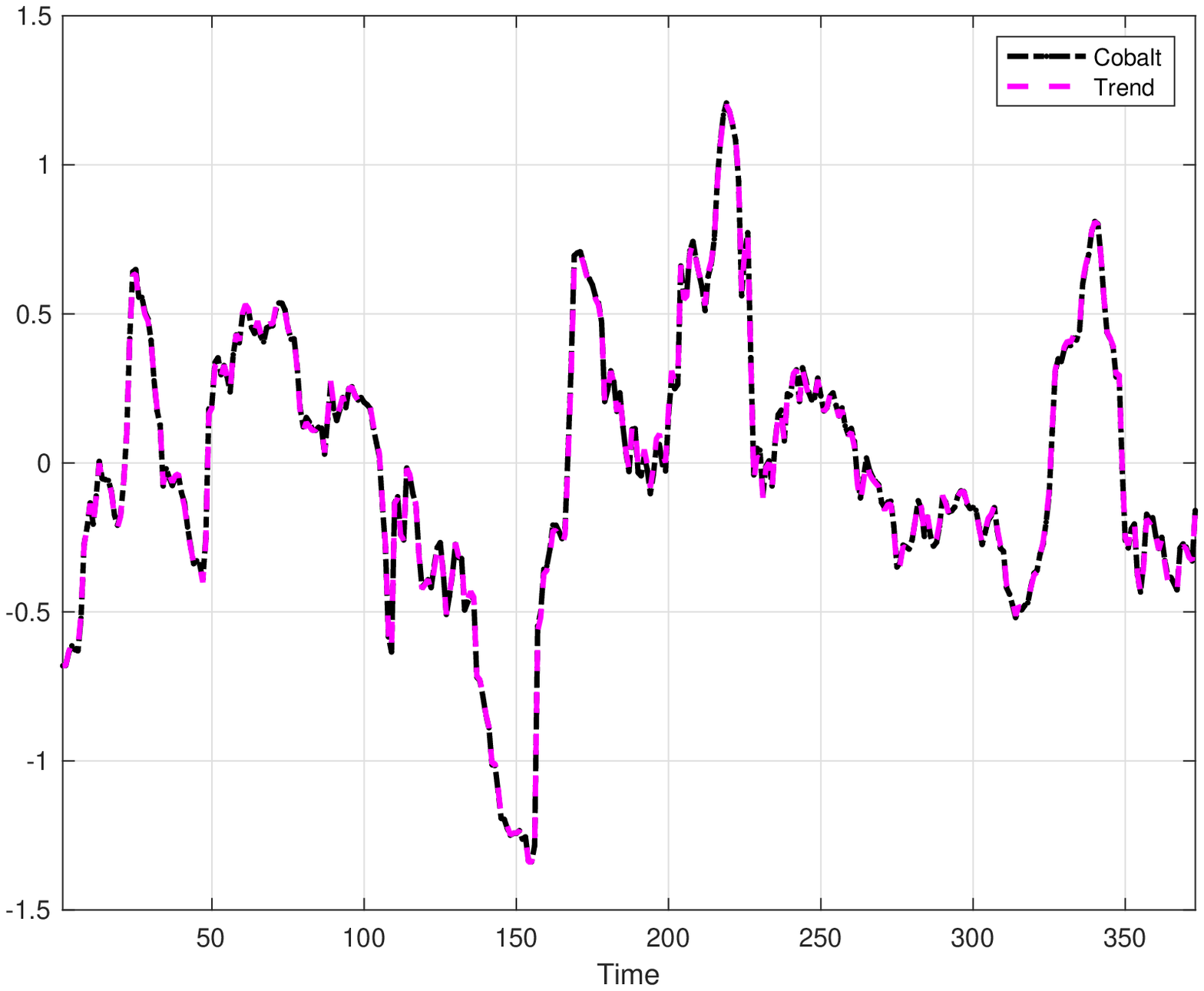}
    \captionof{subfigure}{Cobalt}
    \label{fig:t13}
\end{minipage}%
\begin{minipage}{0.4\textwidth}
\centering
   \includegraphics[scale=0.4]{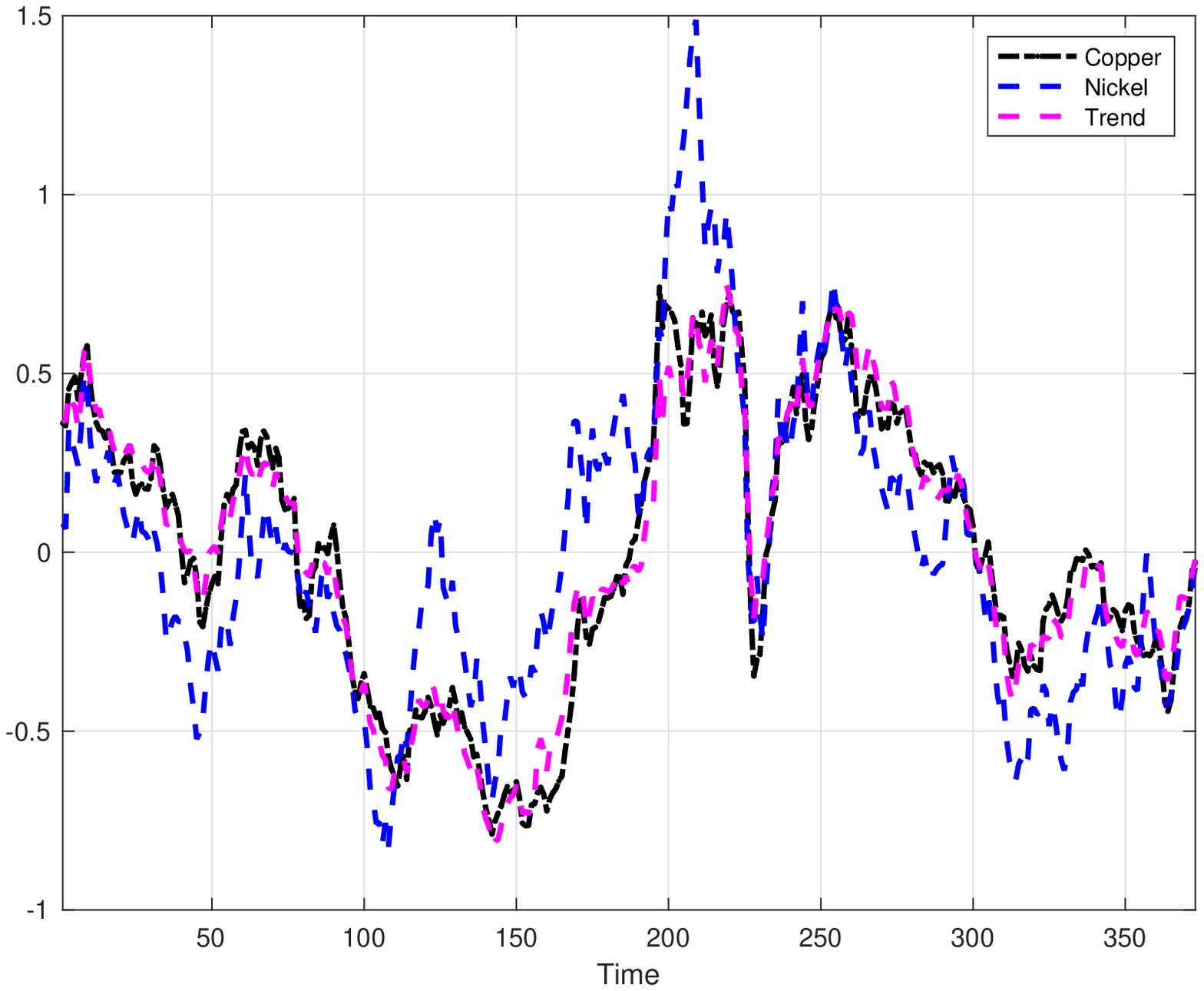}
    \captionof{subfigure}{Copper}
    \label{fig:t14}
\end{minipage} \\[0.1cm]
}
\par
{\small 
}
\end{figure}
}

\newpage

\newpage

\renewcommand{\thesection}{A} 
\renewcommand{\thelemma}{A.\arabic{lemma}} \renewcommand{\thetheorem}{A%
.\arabic{theorem}} \renewcommand{\theequation}{A.\arabic{equation}} %
\renewcommand{\thetable}{S.\arabic{table}}

\setcounter{equation}{0} \setcounter{lemma}{0} \setcounter{theorem}{0} \renewcommand{\thelemma}{A.%
\arabic{lemma}} \renewcommand{\theequation}{A.\arabic{equation}}

\section{Further Monte Carlo evidence\label{fmontecarlo}}

\subsection{Top-down estimation of $m$\label{topdown}}

As mentioned in Section \ref{estimation_R} in the main paper, in order to
determine the number of common trends $m$, it is also possible to consider a
top-down algorithm as an alternative to the \textquotedblleft
bottom-up\textquotedblright\ Algorithm 1, i.e. Algorithm 2. Its asymptotic
validity is demonstrated next.

\begin{theorem}
\label{family2} Let Assumptions \ref{as-B}-\ref{weight} hold and define the
critical value of each individual test as $c_{\alpha }=c_{\alpha }\left(
M\right) $. As $\min \left( T,M\right) \rightarrow \infty $ under (\ref%
{rate-constraint}), if $c_{\alpha }\left( M\right) \rightarrow \infty $ with 
$c_{\alpha }=o\left( M\right) $, then - with $\widetilde{m}$ defined in
Algorithm 2 in the main paper - it holds that $P^{\ast }(\widetilde{m}=m)=1$
for\ almost all realisations of $\left\{ \varepsilon _{t},-\infty <t<\infty
\right\} $.
\end{theorem}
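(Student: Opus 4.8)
The plan is to re-run, essentially line by line, the argument behind Theorem \ref{family}, changing only the deterministic accept/reject bookkeeping so that it matches Algorithm 2 rather than Algorithm 1. Throughout I would fix a realisation of $\left\{ \varepsilon _{t},-\infty <t<\infty \right\} $ in the probability-one set on which the conclusions of Theorem \ref{theta} hold simultaneously for all $j\in \left\{ 1,\dots ,N\right\} $ (a finite intersection of probability-one events); for such realisations it is enough to show that $\lim_{\min \left( T,M\right) \rightarrow \infty }P^{\ast }(\widetilde{m}=m)=1$.

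First I would record the asymptotics of each individual test. By Theorem \ref{theta}, whenever $j\leq m$ the hypothesis $H_{0}:m\geq j$ holds and $\Theta _{T,M}^{\left( j\right) }\overset{D^{\ast }}{\rightarrow }\chi _{1}^{2}$, so that, since $c_{\alpha }\left( M\right) \rightarrow \infty $, $P^{\ast }(\Theta _{T,M}^{\left( j\right) }>c_{\alpha }\left( M\right) )\rightarrow 0$; whenever $j>m$ the alternative $H_{A}:m<j$ holds and $M^{-1}\Theta _{T,M}^{\left( j\right) }\overset{P^{\ast }}{\rightarrow }1/4$, so that, since $c_{\alpha }\left( M\right) =o\left( M\right) $, $P^{\ast }(\Theta _{T,M}^{\left( j\right) }>c_{\alpha }\left( M\right) )\rightarrow 1$. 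Let $A_{j}$ be the event that the $j$-th test makes the \textquotedblleft correct\textquotedblright\ decision, i.e. does not reject $H_{0}$ when $j\leq m$ and rejects $H_{0}$ when $j>m$; the two statements just displayed are exactly $P^{\ast }\left( A_{j}\right) \rightarrow 1$ for every $j$.

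Next I would carry out the purely deterministic check that the correct pattern of decisions forces $\widetilde{m}=m$ in Algorithm 2. If $m=N$, Step 1 does not reject $H_{0}:m=N$, so $\widetilde{m}=N$. If $1\leq m\leq N-1$, Step 1 rejects $H_{0}:m=N$ and the iterations of Step 2 reject $H_{0}:m\geq j$ for $j=N-1,\dots ,m+1$ and then fail to reject $H_{0}:m\geq j$ at $j=m$, so $\widetilde{m}=m$. If $m=0$, every $H_{0}:m\geq j$, $j=N,\dots ,1$, is rejected, so $\widetilde{m}=0$. In all three cases $\left\{ \widetilde{m}=m\right\} \supseteq \bigcap_{j=1}^{N}A_{j}$; this inclusion is insensitive to the fact that the number of tests actually performed is random, because on $\bigcap_{j}A_{j}$ the procedure never reaches an index $j<m$.

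A union bound over the (finitely many) tests then gives $P^{\ast }(\widetilde{m}\neq m)\leq \sum_{j=1}^{N}P^{\ast }\left( A_{j}^{c}\right) \rightarrow 0$ as $\min \left( T,M\right) \rightarrow \infty $ subject to (\ref{rate-constraint}), i.e. $P^{\ast }(\widetilde{m}=m)\rightarrow 1$, which is the claim. I do not anticipate a genuine difficulty here: the only step requiring care is the bookkeeping of the previous paragraph, namely checking that the right accept/reject sequence drives Algorithm 2 to the value $m$ in the boundary cases $m=0$ and $m=N$. It is also worth noting --- in contrast with the exact identity $\left( 1-\alpha \right) ^{N-m}$ obtained for a fixed nominal level in the proof of Theorem \ref{family} --- that here independence of the artificial samples $\{\xi _{i}^{\left( j\right) }\}$ across $j$ plays no role, since once $c_{\alpha }\left( M\right) \rightarrow \infty $ each individual false-rejection probability vanishes and the crude union bound is all that is needed.
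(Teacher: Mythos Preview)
Your proof is correct. The approach differs from the paper's in one respect worth noting. The paper exploits the conditional independence of the statistics $\Theta_{T,M}^{(j)}$ across $j$ (coming from the independent artificial samples) to compute the distribution of $\widetilde{m}$ exactly: for $0<m<N$ it obtains $P^{\ast}(\widetilde{m}=p)=\pi^{N-p}(1-\pi)$ for $p>m$, $P^{\ast}(\widetilde{m}=p)=\pi^{N-m}\alpha^{m-p}(1-\alpha)$ for $0<p<m$, and $P^{\ast}(\widetilde{m}=0)=\pi^{N-m}\alpha^{m}$, whence $P^{\ast}(\widetilde{m}=m)=\pi^{N-m}(1-\alpha)\rightarrow 1$; the boundary cases $m=N$ and $m=0$ are handled separately. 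Your argument bypasses these exact formulas entirely: you simply observe that each individual test makes the correct decision with $P^{\ast}$-probability tending to one, verify deterministically that the correct pattern of decisions forces Algorithm~2 to output $m$, and apply a union bound over the $N$ tests. As you correctly point out, this makes the independence of the artificial samples irrelevant once $c_{\alpha}(M)\rightarrow\infty$. The paper's route gives more information (the exact finite-sample success probability under independence), but for the bare consistency statement your union-bound argument is cleaner and slightly more general.
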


In Tables \ref{tab:TableF21}-\ref{tab:TableF23}, we report the frequencies
of the estimates of $m$, obtained from the same design as in Section \ref%
{montecarlo} in the main paper. As can be seen, results are as good as with $%
\widehat{m}$.

\begin{table}[h!]
\caption{Rank estimation frequencies - $N=3$, Algorithm 2}
\label{tab:TableF21}{\tiny 
\begin{tabular}{cccccccccccc}
\multicolumn{12}{c}{$N=3$} \\ 
&  &  & \multicolumn{4}{c}{$T=100$} &  & \multicolumn{4}{c}{$T=200$} \\ 
\hline
&  & $m$ & $3$ & $2$ & $1$ & $0$ &  & $3$ & $2$ & $1$ & $0$ \\ 
& $m^{\ast }$ &  &  &  &  &  &  &  &  &  &  \\ 
& $3$ &  & $0.950$ & $0.005$ & $0.000$ & $0.000$ &  & $0.989$ & $0.003$ & $%
0.000$ & $0.000$ \\ 
$\eta =0.5$ & $2$ &  & $0.049$ & $0.988$ & $0.013$ & $0.000$ &  & $0.011$ & $%
0.989$ & $0.003$ & $0.000$ \\ 
& $1$ &  & $0.001$ & $0.007$ & $0.987$ & $0.022$ &  & $0.000$ & $0.008$ & $%
0.997$ & $0.008$ \\ 
& $0$ &  & $0.000$ & $0.000$ & $0.000$ & $0.978$ &  & $0.000$ & $0.000$ & $%
0.000$ & $0.992$ \\ 
&  &  &  &  &  &  &  &  &  &  &  \\ 
& $3$ &  & $0.986$ & $0.000$ & $0.000$ & $0.000$ &  & $0.999$ & $0.000$ & $%
0.000$ & $0.000$ \\ 
$\eta =1.0$ & $2$ &  & $0.014$ & $0.995$ & $0.003$ & $0.000$ &  & $0.001$ & $%
1.000$ & $0.002$ & $0.000$ \\ 
& $1$ &  & $0.000$ & $0.005$ & $0.997$ & $0.006$ &  & $0.000$ & $0.000$ & $%
0.998$ & $0.001$ \\ 
& $0$ &  & $0.000$ & $0.000$ & $0.000$ & $0.994$ &  & $0.000$ & $0.000$ & $%
0.000$ & $0.999$ \\ 
&  &  &  &  &  &  &  &  &  &  &  \\ 
& $3$ &  & $0.995$ & $0.000$ & $0.000$ & $0.000$ &  & $1.000$ & $0.000$ & $%
0.000$ & $0.000$ \\ 
$\eta =1.5$ & $2$ &  & $0.005$ & $0.998$ & $0.001$ & $0.000$ &  & $0.000$ & $%
1.000$ & $0.001$ & $0.000$ \\ 
& $1$ &  & $0.000$ & $0.002$ & $0.999$ & $0.002$ &  & $0.000$ & $0.000$ & $%
0.999$ & $0.001$ \\ 
& $0$ &  & $0.000$ & $0.000$ & $0.000$ & $0.998$ &  & $0.000$ & $0.000$ & $%
0.000$ & $0.999$ \\ 
&  &  &  &  &  &  &  &  &  &  &  \\ 
& $3$ &  & $0.995$ & $0.000$ & $0.000$ & $0.000$ &  & $1.000$ & $0.000$ & $%
0.000$ & $0.000$ \\ 
$\eta =2$ & $2$ &  & $0.005$ & $0.998$ & $0.000$ & $0.000$ &  & $0.000$ & $%
1.000$ & $0.001$ & $0.000$ \\ 
& $1$ &  & $0.000$ & $0.002$ & $1.000$ & $0.000$ &  & $0.000$ & $0.000$ & $%
0.999$ & $0.000$ \\ 
& $0$ &  & $0.000$ & $0.000$ & $0.000$ & $1.000$ &  & $0.000$ & $0.000$ & $%
0.000$ & $1.000$ \\ 
&  &  &  &  &  &  &  &  &  &  &  \\ \hline\hline
\end{tabular}
}
\end{table}

\begin{table}[h!]
\caption{Rank estimation frequencies - $N=4$, Algorithm 2}
\label{tab:TableF22}{\tiny 
\begin{tabular}{cccccccccccccc}
\multicolumn{14}{c}{$N=4$} \\ 
&  &  & \multicolumn{5}{c}{$T=100$} &  & \multicolumn{5}{c}{$T=200$} \\ 
\hline
&  &  &  &  &  &  &  &  &  &  &  &  &  \\ 
&  & $m$ & $4$ & $3$ & $2$ & $1$ & $0$ &  & $4$ & $3$ & $2$ & $1$ & $0$ \\ 
& $m^{\ast }$ &  &  &  &  &  &  &  &  &  &  &  &  \\ 
& $4$ &  & $0.884$ & $0.007$ & $0.000$ & $0.000$ & $0.000$ &  & $0.970$ & $%
0.000$ & $0.000$ & $0.000$ & $0.000$ \\ 
& $3$ &  & $0.114$ & $0.962$ & $0.008$ & $0.000$ & $0.000$ &  & $0.029$ & $%
0.990$ & $0.002$ & $0.000$ & $0.000$ \\ 
$\eta =0.5$ & $2$ &  & $0.002$ & $0.032$ & $0.989$ & $0.018$ & $0.000$ &  & $%
0.001$ & $0.010$ & $0.996$ & $0.007$ & $0.000$ \\ 
& $1$ &  & $0.000$ & $0.000$ & $0.003$ & $0.981$ & $0.032$ &  & $0.000$ & $%
0.000$ & $0.002$ & $0.993$ & $0.015$ \\ 
& $0$ &  & $0.000$ & $0.000$ & $0.000$ & $0.001$ & $0.968$ &  & $0.000$ & $%
0.000$ & $0.000$ & $0.000$ & $0.985$ \\ 
&  &  &  &  &  &  &  &  &  &  &  &  &  \\ 
& $4$ &  & $0.951$ & $0.000$ & $0.000$ & $0.000$ & $0.000$ &  & $0.995$ & $%
0.000$ & $0.000$ & $0.000$ & $0.000$ \\ 
& $3$ &  & $0.047$ & $0.987$ & $0.000$ & $0.000$ & $0.000$ &  & $0.005$ & $%
1.000$ & $0.000$ & $0.000$ & $0.000$ \\ 
$\eta =1.0$ & $2$ &  & $0.002$ & $0.013$ & $0.997$ & $0.004$ & $0.000$ &  & $%
0.000$ & $0.000$ & $0.998$ & $0.002$ & $0.000$ \\ 
& $1$ &  & $0.000$ & $0.000$ & $0.003$ & $0.994$ & $0.006$ &  & $0.000$ & $%
0.000$ & $0.002$ & $0.998$ & $0.005$ \\ 
& $0$ &  & $0.000$ & $0.000$ & $0.000$ & $0.002$ & $0.994$ &  & $0.000$ & $%
0.000$ & $0.000$ & $0.000$ & $0.995$ \\ 
&  &  &  &  &  &  &  &  &  &  &  &  &  \\ 
& $4$ &  & $0.969$ & $0.000$ & $0.000$ & $0.000$ & $0.000$ &  & $0.997$ & $%
0.000$ & $0.000$ & $0.000$ & $0.000$ \\ 
& $3$ &  & $0.030$ & $0.991$ & $0.000$ & $0.000$ & $0.000$ &  & $0.003$ & $%
1.000$ & $0.000$ & $0.000$ & $0.000$ \\ 
$\eta =1.5$ & $2$ &  & $0.001$ & $0.009$ & $0.997$ & $0.002$ & $0.000$ &  & $%
0.000$ & $0.000$ & $1.000$ & $0.000$ & $0.000$ \\ 
& $1$ &  & $0.000$ & $0.000$ & $0.003$ & $0.997$ & $0.002$ &  & $0.000$ & $%
0.000$ & $0.000$ & $1.000$ & $0.001$ \\ 
& $0$ &  & $0.000$ & $0.000$ & $0.000$ & $0.001$ & $0.998$ &  & $0.000$ & $%
0.000$ & $0.000$ & $0.000$ & $0.999$ \\ 
&  &  &  &  &  &  &  &  &  &  &  &  &  \\ 
& $4$ &  & $0.982$ & $0.000$ & $0.000$ & $0.000$ & $0.000$ &  & $0.998$ & $%
0.000$ & $0.000$ & $0.000$ & $0.000$ \\ 
& $3$ &  & $0.018$ & $0.995$ & $0.000$ & $0.000$ & $0.000$ &  & $0.002$ & $%
1.000$ & $0.000$ & $0.000$ & $0.000$ \\ 
$\eta =2$ & $2$ &  & $0.000$ & $0.005$ & $0.998$ & $0.000$ & $0.000$ &  & $%
0.000$ & $0.000$ & $1.000$ & $0.000$ & $0.000$ \\ 
& $1$ &  & $0.000$ & $0.000$ & $0.002$ & $1.000$ & $0.001$ &  & $0.000$ & $%
0.000$ & $0.000$ & $1.000$ & $0.000$ \\ 
& $0$ &  & $0.000$ & $0.000$ & $0.000$ & $0.000$ & $0.999$ &  & $0.000$ & $%
0.000$ & $0.000$ & $0.000$ & $1.000$ \\ 
&  &  &  &  &  &  &  &  &  &  &  &  &  \\ \hline\hline
\end{tabular}
}
\end{table}

\begin{table}[h!]
\caption{Rank estimation frequencies - $N=5$, Algorithm 2}
\label{tab:TableF23}{\tiny 
\begin{tabular}{cccccccccccccccc}
\multicolumn{16}{c}{$N=5$} \\ 
&  &  & \multicolumn{6}{c}{$T=100$} &  & \multicolumn{6}{c}{$T=200$} \\ 
\hline
&  & $m$ & $5$ & $4$ & $3$ & $2$ & $1$ & $0$ &  & $5$ & $4$ & $3$ & $2$ & $1$
& $0$ \\ 
& $m^{\ast }$ &  &  &  &  &  &  &  &  &  &  &  &  &  &  \\ 
& $5$ &  & $0.782$ & $0.001$ & $0.000$ & $0.000$ & $0.000$ & $0.000$ &  & $%
0.937$ & $0.000$ & $0.000$ & $0.000$ & $0.000$ & $0.000$ \\ 
& $4$ &  & $0.211$ & $0.910$ & $0.005$ & $0.000$ & $0.000$ & $0.000$ &  & $%
0.061$ & $0.979$ & $0.004$ & $0.000$ & $0.000$ & $0.000$ \\ 
$\eta =0.5$ & $3$ &  & $0.007$ & $0.089$ & $0.975$ & $0.011$ & $0.000$ & $%
0.000$ &  & $0.002$ & $0.021$ & $0.995$ & $0.005$ & $0.000$ & $0.000$ \\ 
& $2$ &  & $0.000$ & $0.000$ & $0.020$ & $0.984$ & $0.014$ & $0.000$ &  & $%
0.000$ & $0.000$ & $0.001$ & $0.995$ & $0.016$ & $0.000$ \\ 
& $1$ &  & $0.000$ & $0.000$ & $0.000$ & $0.005$ & $0.984$ & $0.039$ &  & $%
0.000$ & $0.000$ & $0.000$ & $0.000$ & $0.984$ & $0.018$ \\ 
& $0$ &  & $0.000$ & $0.000$ & $0.000$ & $0.000$ & $0.002$ & $0.961$ &  & $%
0.000$ & $0.000$ & $0.000$ & $0.000$ & $0.000$ & $0.982$ \\ 
&  &  &  &  &  &  &  &  &  &  &  &  &  &  &  \\ 
& $5$ &  & $0.876$ & $0.001$ & $0.000$ & $0.000$ & $0.000$ & $0.000$ &  & $%
0.984$ & $0.000$ & $0.000$ & $0.000$ & $0.000$ & $0.000$ \\ 
& $4$ &  & $0.122$ & $0.954$ & $0.001$ & $0.000$ & $0.000$ & $0.000$ &  & $%
0.016$ & $0.997$ & $0.000$ & $0.000$ & $0.000$ & $0.000$ \\ 
$\eta =1.0$ & $3$ &  & $0.002$ & $0.045$ & $0.992$ & $0.003$ & $0.000$ & $%
0.000$ &  & $0.000$ & $0.003$ & $1.000$ & $0.000$ & $0.000$ & $0.000$ \\ 
& $2$ &  & $0.000$ & $0.000$ & $0.007$ & $0.996$ & $0.003$ & $0.000$ &  & $%
0.000$ & $0.000$ & $0.000$ & $1.000$ & $0.002$ & $0.000$ \\ 
& $1$ &  & $0.000$ & $0.000$ & $0.000$ & $0.001$ & $0.997$ & $0.008$ &  & $%
0.000$ & $0.000$ & $0.000$ & $0.000$ & $0.996$ & $0.003$ \\ 
& $0$ &  & $0.000$ & $0.000$ & $0.000$ & $0.000$ & $0.000$ & $0.992$ &  & $%
0.000$ & $0.000$ & $0.000$ & $0.000$ & $0.002$ & $0.997$ \\ 
&  &  &  &  &  &  &  &  &  &  &  &  &  &  &  \\ 
& $5$ &  & $0.892$ & $0.000$ & $0.000$ & $0.000$ & $0.000$ & $0.000$ &  & $%
0.997$ & $0.000$ & $0.000$ & $0.000$ & $0.000$ & $0.000$ \\ 
& $4$ &  & $0.108$ & $0.976$ & $0.000$ & $0.000$ & $0.000$ & $0.000$ &  & $%
0.003$ & $1.000$ & $0.000$ & $0.000$ & $0.000$ & $0.000$ \\ 
$\eta =1.5$ & $3$ &  & $0.000$ & $0.024$ & $0.998$ & $0.002$ & $0.000$ & $%
0.000$ &  & $0.000$ & $0.000$ & $1.000$ & $0.000$ & $0.000$ & $0.000$ \\ 
& $2$ &  & $0.000$ & $0.000$ & $0.002$ & $0.998$ & $0.001$ & $0.000$ &  & $%
0.000$ & $0.000$ & $0.000$ & $1.000$ & $0.000$ & $0.000$ \\ 
& $1$ &  & $0.000$ & $0.000$ & $0.000$ & $0.000$ & $0.999$ & $0.003$ &  & $%
0.000$ & $0.000$ & $0.000$ & $0.000$ & $1.000$ & $0.000$ \\ 
& $0$ &  & $0.000$ & $0.000$ & $0.000$ & $0.000$ & $0.000$ & $0.997$ &  & $%
0.000$ & $0.000$ & $0.000$ & $0.000$ & $0.000$ & $1.000$ \\ 
&  &  &  &  &  &  &  &  &  &  &  &  &  &  &  \\ 
& $5$ &  & $0.912$ & $0.000$ & $0.000$ & $0.000$ & $0.000$ & $0.000$ &  & $%
0.997$ & $0.000$ & $0.000$ & $0.000$ & $0.000$ & $0.000$ \\ 
& $4$ &  & $0.088$ & $0.984$ & $0.000$ & $0.000$ & $0.000$ & $0.000$ &  & $%
0.003$ & $1.000$ & $0.000$ & $0.000$ & $0.000$ & $0.000$ \\ 
$\eta =2$ & $3$ &  & $0.000$ & $0.016$ & $0.998$ & $0.000$ & $0.000$ & $%
0.000 $ &  & $0.000$ & $0.000$ & $1.000$ & $0.000$ & $0.000$ & $0.000$ \\ 
& $2$ &  & $0.000$ & $0.000$ & $0.002$ & $1.000$ & $0.000$ & $0.000$ &  & $%
0.000$ & $0.000$ & $0.000$ & $1.000$ & $0.000$ & $0.000$ \\ 
& $1$ &  & $0.000$ & $0.000$ & $0.000$ & $0.000$ & $0.999$ & $0.000$ &  & $%
0.000$ & $0.000$ & $0.000$ & $0.000$ & $1.000$ & $0.000$ \\ 
& $0$ &  & $0.000$ & $0.000$ & $0.000$ & $0.000$ & $0.001$ & $1.000$ &  & $%
0.000$ & $0.000$ & $0.000$ & $0.000$ & $0.000$ & $1.000$ \\ 
&  &  &  &  &  &  &  &  &  &  &  &  &  &  &  \\ \hline\hline
\end{tabular}
}
\end{table}

\bigskip

\subsection{Further evidence on $\widehat{m}$\label{furthermhat}}

Here, we use the same DGP as in Section \ref{montecarlo}, viz. 
\begin{equation}
y_{t}=Ay_{t-1}+\varepsilon _{t},  \label{var-1}
\end{equation}%
with $A=I_{N}-PP^{\prime }$ and $P$ an orthonormal $N\times \left(
N-m\right) $ matrix; however, as well as considering \textit{i.i.d.} errors,
we also allow for serial correlation in $\varepsilon _{t}$. Specifically,
the innovations $\varepsilon _{t}$ in (\ref{var-1}) have again been
generated as coordinate-wise independent, and we consider serial dependence
through an $AR\left( 1\right) $ specification%
\begin{equation}
\varepsilon _{i,t}=\theta \varepsilon _{i,t-1}+e_{i,t}  \label{error-serial}
\end{equation}%
for $1\leq i\leq N$; we set $\theta \in \left\{ 0,0.5,-0.5\right\} $. The
innovation $e_{i,t}$ is generated as \textit{i.i.d.} across $t$, with a
power law distribution of tail index $\eta $, using $\eta \in \left\{
0.5,1,1.5,2\right\} $ and again, as a benchmark, we consider the case $%
\varepsilon _{t}\sim i.i.d.N\left( 0,I_{N}\right) $. In order to simulate
the power law distribution, we generate%
\begin{equation}
e_{i,t}=\left( 1-v_{i,t}\right) ^{-1/\eta },  \label{powerlaw}
\end{equation}%
for $1\leq i\leq N$, where $v_{i,t}$ is $i.i.d.U\left[ 0,1\right] $; $%
e_{i,t} $ is subsequently centered. The first $1,000$ datapoints are
discarded in order to avoid dependence on initial conditions.

All other specifications are the same as in the main paper.

\bigskip

We use information criteria as a comparison. In particular, we consider the
three most popular ones (i.e., AIC, BIC and HQ), but in the tables we only
report the results corresponding to BIC, which is consistently the best out
of the three for all experiments\footnote{%
We note that AIC and HQ tend to grossly overstate $R$ in all experiments,
with this tendency worsening for lower values of $\eta $. Results are
available upon request.}. We point out that the use of information criteria
in the context of cointegration analysis is well-studied in the finite
variance case (see e.g. \citealp*{aznar} and \citealp*{%
cavaliere2018determining}); conversely, the case of infinite variance has
not been studied in the case of cointegrated systems (see however the
contribution by \citealp*{knight}). In our experiments, we use an
\textquotedblleft infeasible\textquotedblright\ version of each information
criterion, assuming that the lag structure in (\ref{var-1}) is known, so
that the only quantity to be determined is the number of common trends $m$.

For all rank estimates, we report three measures of performance. Denoting
the estimate of $m$ at iteration $1\leq mc\leq 1,000$ as $\widehat{m}_{mc}$,
we define%
\begin{eqnarray}
ME &=&\frac{1}{1000}\sum_{mc=1}^{1000}\widehat{m}_{mc},  \label{me} \\
STD &=&\frac{1}{1000}\sum_{mc=1}^{1000}\left( \widehat{m}_{mc}-ME\right)
^{2},  \label{std} \\
PCW &=&\frac{1}{1000}\sum_{mc=1}^{1000}I\left( \widehat{m}_{mc}\neq m\right)
,  \label{pcw}
\end{eqnarray}

Results are reported in Tables \ref{tab:Table1a} and \ref{tab:Table1b},
where we analyse the properties of our estimator of $m$ for a small-scale $%
VAR$ with $N\in \left\{ 3,4,5\right\} $, and for the simple case of \textit{%
i.i.d.} errors. Broadly speaking, as far as the $PCW$ and $ME$ indicators
are concerned, the tables show that our procedure is very good on average at
estimating $m$ - and better than the best information criterion, BIC - for
all values of $N$ and $T$, even for the (rather extreme) case $\eta =0.5$.
This is true across all values of $m$, including the case of stationarity ($%
m=0$) and lack of cointegration ($m=N$). Such evidence is further
corroborated by considering the $STD$ indicator, which shows that our
procedure is systematically better than BIC in this respect. Indeed, the BIC
performs (very) marginally better when $m=0$, but this is more than offset
when considering that BIC tends to overstate $m$ in general (as the values
of $ME$ indicate), and especially when $m=N$ and $m=N-1$; in particular, BIC
seems to perform puzzlingly badly when errors are Gaussian. This is even
more remarkable as information criteria are based on the (infeasible)
assumption that the lag structure is known. In line with the theory, results
are better for larger $T$; also, results do improve as $\eta $ increases,
although this becomes less evident as $T$ increases. The impact of $N$ is
less clear, although generally speaking larger values of $N$ worsen the
performance of both BIC and our procedure when $m=N$, whereas it has
virtually no impact for other values of $m$. \newline

In Tables \ref{tab:Table2a}-\ref{tab:Table3b} we consider serially
correlated innovations. When $\theta =0.5$ in (\ref{error-serial}), our
procedure fares better than in the \textit{i.i.d.} case when $m=N$,
worsening instead when $m=0$. In the latter case, as $T$ increases (and as $%
\eta $ becomes closer to $2$), the impact of serial correlation is mitigated
-\ see Tables \ref{tab:Table2a} and \ref{tab:Table2b}. Conversely, the case
of negative autocorrelation (Tables \ref{tab:Table3a} and \ref{tab:Table3b}%
), results worsen in a more pronounced way, especially for small values of $%
T $ and/or as $N$ increases. Even in this case, however, our procedure is
better than BIC for the case $m=N$ (and in general for large $m$), with an
almost indistinguishable performance when $m=0$.

Finally, and as also mentioned in the main paper, in Tables \ref{tab:Table4}-%
\ref{tab:Table6}, we report results using $N\in \left\{ 10,15,20\right\} $.
As overview of the results is also in Section \ref{montecarlo} in the main
paper.

\renewcommand{\thesection}{B} 
\renewcommand{\thelemma}{B.\arabic{lemma}} \renewcommand{\thetheorem}{B%
.\arabic{theorem}} \renewcommand{\theequation}{B.\arabic{equation}}

\newpage

\setcounter{equation}{0} \setcounter{lemma}{0} \renewcommand{\thelemma}{B.%
\arabic{lemma}} \renewcommand{\theequation}{B.\arabic{equation}}

\section{Technical lemmas\label{lemmas}}

Henceforth, we denote the $L_{p}$-norm of a scalar random variable $X$ as $%
\left\vert X\right\vert _{p}=\left( E\left\vert X\right\vert ^{p}\right)
^{1/p}$; also, the subscript \textquotedblleft $_{\perp }$%
\textquotedblright\ denotes the orthogonal complement of a matrix. Further,
we often use the notation $R=N-m$.

\begin{lemma}
\label{stout}Consider a sequence $U_{T}$ for which $E\left\vert
U_{T}\right\vert ^{\delta }\leq a_{T}$, where $a_{T}$ is a positive,
monotonically non-decreasing sequence, and let $d_{\delta }=1$ if $\delta
\leq 2$ and zero otherwise. Then there exists a $C_{0}<\infty $ such that%
\begin{equation*}
\lim \sup_{T\rightarrow \infty }\frac{\left\vert U_{T}\right\vert }{%
a_{T}^{1/\delta }\left( \ln T\right) ^{\left( 1+d_{\delta }+\epsilon \right)
/\delta }}\leq C_{0}\text{ a.s.}
\end{equation*}%
for every $\epsilon >0$.

\begin{proof}
It holds that%
\begin{equation*}
E\max_{1\leq t\leq T}\left\vert U_{t}\right\vert ^{\delta }\leq a_{T}\left(
\ln T\right) ^{d_{\delta }},
\end{equation*}%
having used Theorem B in \citet*{serfling1970} when $\delta >2$, and the
first theorem in the same paper when $\delta \leq 2$. Thus%
\begin{equation*}
\sum_{T=1}^{\infty }\frac{1}{T}P\left( \max_{1\leq t\leq T}\left\vert
U_{t}\right\vert >a_{T}^{1/\delta }\left( \ln T\right) ^{\left( 1+d_{\delta
}+\epsilon \right) /\delta }\right) \leq \sum_{T=1}^{\infty }\frac{%
E\max_{1\leq t\leq T}\left\vert U_{t}\right\vert ^{\delta }}{Ta_{T}\left(
\ln T\right) ^{1+d_{\delta }+\epsilon }}=\sum_{T=1}^{\infty }\frac{1}{%
T\left( \ln T\right) ^{1+\epsilon }}<\infty ,
\end{equation*}%
having used Markov inequality. By the proof of Corollary 2.4 in \citet*{%
cai2006} (see also Lemma A.5 in \citealp*{HT16}), it ultimately follows that 
\begin{equation*}
\sum_{T=1}^{\infty} P \left[ \left\vert U_{T}\right\vert >
a_{T}^{1/\delta}\left( \ln T\right) ^{1+d_{\delta }+\epsilon } \right] <
\infty ,
\end{equation*}
which entails that the sequence $U_{T}$ converges completely (\citealp*{%
hsurobbins}). The desired result now follows immediately from the
Borel-Cantelli lemma. 
\end{proof}
\end{lemma}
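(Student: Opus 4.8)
The plan is to lift the $L_\delta$ moment bound on $U_T$ to an almost-sure rate by proving that $U_T$, normalised by $a_T^{1/\delta}(\ln T)^{(1+d_\delta+\epsilon)/\delta}$, converges \emph{completely} in the sense of \citet{hsurobbins}, and then reading off the $\limsup$ from the Borel--Cantelli lemma. The first thing to notice is that a direct attack fails: Markov's inequality alone gives only $P(|U_T| > C_0 a_T^{1/\delta}(\ln T)^{(1+d_\delta+\epsilon)/\delta}) \le C_0^{-\delta}(\ln T)^{-(1+d_\delta+\epsilon)}$, and $\sum_T (\ln T)^{-c}$ diverges for \emph{every} $c>0$, so Borel--Cantelli cannot be applied to the individual terms. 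The remedy is to pass to the running maximum $\max_{1\le t\le T}|U_t|$, for which a maximal inequality costs only a logarithmic factor in the moment bound while furnishing an extra $T^{-1}$ weight in the series one must control.

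Concretely, I would first invoke the maximal moment inequalities of \citet{serfling1970} --- Theorem~B for $\delta>2$ and the first theorem there for $\delta\le 2$ --- which, using the monotonicity of $a_T$, give
\begin{equation*}
E\max_{1\le t\le T}|U_t|^\delta \le c_0\, a_T (\ln T)^{d_\delta},
\end{equation*}
the surplus $(\ln T)^{d_\delta}$ appearing only when $\delta\le 2$, as the indicator $d_\delta$ records. Next, a Markov bound for this running maximum against the threshold $a_T^{1/\delta}(\ln T)^{(1+d_\delta+\epsilon)/\delta}$ cancels the factor $a_T$ and leaves the logarithmic powers combined as $(\ln T)^{-(1+\epsilon)}$; weighting by $T^{-1}$ then yields
\begin{equation*}
\sum_{T\ge 2}\frac{1}{T}\,P\!\left(\max_{1\le t\le T}|U_t| > a_T^{1/\delta}(\ln T)^{(1+d_\delta+\epsilon)/\delta}\right) \le c_0\sum_{T\ge 2}\frac{1}{T(\ln T)^{1+\epsilon}} < \infty,
\end{equation*}
the last series converging by the integral test precisely because the surviving exponent $1+\epsilon$ exceeds $1$; this is the point at which the shape of the normaliser is calibrated.

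The third step is to de-maximise: using the dyadic-blocking argument behind Corollary~2.4 of \citet{cai2006} (equivalently Lemma~A.5 of \citet{HT16}), summing $T$ over blocks $[2^k,2^{k+1})$ --- where $\sum_{2^k\le T<2^{k+1}}T^{-1}$ is bounded, $\max_{1\le t\le T}|U_t|$ dominates $|U_T|$, and the monotonicity of $a_T$ together with the slow variation of $(\ln T)^{(1+d_\delta+\epsilon)/\delta}$ absorbs the change of normaliser across a block into a single constant --- one deduces $\sum_T P(|U_T| > c_1 a_T^{1/\delta}(\ln T)^{(1+d_\delta+\epsilon)/\delta}) < \infty$. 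This is complete convergence, so Borel--Cantelli gives $\limsup_{T\to\infty}|U_T|/(a_T^{1/\delta}(\ln T)^{(1+d_\delta+\epsilon)/\delta}) \le c_1$ almost surely; setting $C_0=c_1$ completes the proof, and since $\epsilon>0$ is arbitrary the bound holds for each $\epsilon$.

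The main obstacle is conceptual rather than computational: one must recognise that the naive Markov--Borel--Cantelli route is hopeless because $\sum(\ln T)^{-c}$ never converges, and then pinpoint the running maximum as the right surrogate --- it is the object that simultaneously admits a maximal inequality with only a logarithmic loss and supplies the extra $T^{-1}$ weight needed for summability. The de-maximisation is the most delicate piece, since it must work for an arbitrary monotone $a_T$ rather than a regularly varying one, which is why I would quote it from \citet{cai2006} rather than redo the blocking in full.
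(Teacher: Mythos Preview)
Your proposal is correct and follows essentially the same route as the paper: bound the maximal moment via \citet{serfling1970}, apply Markov's inequality to the running maximum to obtain a $T^{-1}$-weighted summable series, then de-maximise via the blocking argument of \citet{cai2006} (equivalently Lemma~A.5 of \citet{HT16}) and conclude by Borel--Cantelli. Your write-up in fact adds useful motivation absent from the paper, namely the observation that a direct Markov bound on $|U_T|$ itself fails because $\sum_T(\ln T)^{-c}$ diverges, which is precisely why the detour through the running maximum is needed.
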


\begin{lemma}
\label{donsker}We assume that Assumption \ref{as-moments}\textit{(ii)} is
satisfied. Then there exists a random variable $T_{0}$ and a positive
definite matrix $D$ such that, for all $T\geq T_{0}$%
\begin{equation*}
\sum_{t=1}^{T}x_{t}x_{t}^{\prime }\geq D\frac{T^{1+2/\eta }}{\left( \ln \ln
T\right) ^{2/\eta }}.
\end{equation*}

\begin{proof}
The lemma follows immediately upon noting that, by Assumption \ref%
{as-moments}\textit{(ii)} and Theorem 1.3 in \citet*{jain}, it holds a.s.
that, for all $b\in \mathbb{R}^{m}$, $b\neq 0$ 
\begin{equation*}
\lim \inf_{T\rightarrow \infty }\frac{\left( \ln \ln T\right) ^{2/\eta }}{%
T^{1+2/\eta }}\sum_{t=1}^{T}\left( b^{\prime }x_{t}\right) ^{2}=c_{0}>0\text{
a.s.}
\end{equation*}%
\end{proof}
\end{lemma}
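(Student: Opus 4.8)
The plan is to reduce the matrix inequality to a one-dimensional law-of-the-iterated-logarithm statement and then lift it back. Since $C=PQ$ with $Q$ of full row rank $m$ (Assumption \ref{as-B}), for every nonzero $b\in\mathbb{R}^m$ the vector $l:=Q'b$ is nonzero, so that $b'x_t=\sum_{s=1}^{t}l'\varepsilon_s=:S_t^{(b)}$ is a scalar random walk whose i.i.d.\ increments $l'\varepsilon_s$ lie, by Assumption \ref{as-moments}(ii), in the domain of attraction of a strictly stable law of index $\eta$. First I would establish, for each fixed $b$, the sharp Chung-type lower LIL
\begin{equation*}
\liminf_{T\to\infty}\ \frac{(\ln\ln T)^{2/\eta}}{T^{1+2/\eta}}\sum_{t=1}^{T}\bigl(S_t^{(b)}\bigr)^2\ =\ c_b>0\qquad\text{a.s.},
\end{equation*}
and then upgrade this pointwise-in-$b$ bound to the asserted lower bound on $\sum_{t=1}^{T}x_tx_t'$.

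For the scalar LIL there are two natural routes. The direct one is to invoke the lower-class (``other'') law of the iterated logarithm for additive functionals of partial sums in the domain of attraction of a stable law, which is exactly the content of Theorem 1.3 in \citet{jain} once the normalising rate $T^{1+2/\eta}(\ln\ln T)^{-2/\eta}$ is matched to the $\eta$-stable scaling. The more hands-on alternative is a strong approximation: by classical strong invariance principles (see \citet{bddp} for $0<\eta<2$ and \citet{philipp} for $\eta=2$) one couples $S_t^{(b)}$ to a stable L\'evy process $W_\eta$ with $|S_t^{(b)}-W_\eta(t)|=O_{a.s.}(t^{1/\eta-\lambda})$ for some $\lambda>0$; writing
\begin{equation*}
\sum_{t=1}^{T}\bigl(S_t^{(b)}\bigr)^2=\sum_{t=1}^{T}W_\eta(t)^2+\sum_{t=1}^{T}\bigl(S_t^{(b)}-W_\eta(t)\bigr)^2+2\sum_{t=1}^{T}\bigl(S_t^{(b)}-W_\eta(t)\bigr)W_\eta(t),
\end{equation*}
the middle term is $O_{a.s.}(T^{1+2/\eta-2\lambda})$, the last term is handled by Cauchy--Schwarz together with the upper functional LIL $\sum_{t\le T}W_\eta(t)^2=O_{a.s.}(T^{1+2/\eta}(\ln\ln T)^{2/\eta})$, and the problem collapses to the lower functional LIL for $\sum_{t\le T}W_\eta(t)^2$, i.e.\ the Donsker--Varadhan-type result in \citet{donsker1977}.

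For the passage to the matrix statement I would not argue from finitely many directions --- this is genuinely insufficient, since a nearly-degenerate direction of $M_T:=\frac{(\ln\ln T)^{2/\eta}}{T^{1+2/\eta}}\sum_{t\le T}x_tx_t'$ can avoid any prescribed finite set of vectors. Instead I would either appeal to the $\mathbb{R}^m$-valued version of the Donsker--Varadhan/Jain lower LIL applied to the occupation matrix $\int_0^{\,\cdot}W_\eta W_\eta'$, which yields $\liminf_T\lambda^{(m)}(M_T)>0$ directly and lets one take $D$ a small positive multiple of $I_m$; or run an $\varepsilon$-net argument over the unit sphere, using on one hand that $c_b$ is (up to the universal Donsker--Varadhan constant) a continuous, hence bounded-below, function of the stable scale of $Q'b$, so that the scalar LIL holds simultaneously for a countable dense set with a common positive lower bound, and on the other hand the companion a.s.\ upper bound $\lambda^{(1)}\!\bigl(\sum_{t\le T}x_tx_t'\bigr)\le T\max_{t\le T}\|x_t\|^2=O_{a.s.}\bigl(T^{1+2/\eta}(\ln T)^{c}\bigr)$, which follows from Lemma \ref{stout} applied to $U_T=\|x_T\|$ exactly as in the proof of Proposition \ref{s11-lambda}.

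The step I expect to be the main obstacle is precisely this last one: the honest difficulty is obtaining uniformity over directions in the lower LIL while the exceptional null set in the scalar statement depends on $b$, which is why one genuinely needs either the multivariate form of the result in \citet{jain}/\citet{donsker1977} or the $\varepsilon$-net argument carefully combined with the slowly-varying control on the largest eigenvalue. The scalar lower LIL itself is also non-routine, being a lower-class statement (the $\liminf$ is strictly positive, not merely non-negative) rather than the comparatively easy upper bound.
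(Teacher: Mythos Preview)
Your core reduction is exactly the paper's proof: cite Theorem~1.3 in \citet{jain} for the scalar lower LIL
\[
\liminf_{T\to\infty}\frac{(\ln\ln T)^{2/\eta}}{T^{1+2/\eta}}\sum_{t=1}^{T}(b'x_t)^2=c_b>0\quad\text{a.s.}
\]
for each fixed nonzero $b$, and your strong-approximation alternative (couple to a stable process, then invoke the Donsker--Varadhan lower functional LIL) is precisely the paper's commented-out earlier draft of the same proof.

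Where you go beyond the paper is the uniformity-over-directions step: the paper's published proof is a one-liner that simply asserts the matrix inequality ``follows immediately'' from the scalar LIL for each $b$, without explaining how to pass from the per-direction statement (with a $b$-dependent null set) to a lower bound on $\lambda^{(m)}\bigl(\sum_t x_t x_t'\bigr)$. You are right to flag this, and your first proposed resolution --- appeal to the $\mathbb{R}^m$-valued form of the Jain/Donsker--Varadhan result for the occupation matrix $\int_0^{\cdot}W_\eta W_\eta'$ --- is the clean fix. Your second route, the $\varepsilon$-net, does not close as written: the a.s.\ upper bound you can extract on $\lambda^{(1)}\bigl(\sum_t x_t x_t'\bigr)$ carries an extra polylog factor over $r_T=T^{1+2/\eta}/(\ln\ln T)^{2/\eta}$, so for any fixed mesh $\varepsilon$ the approximation error $\varepsilon^2\lambda^{(1)}$ eventually swamps the net-lower bound $c\,r_T$; shrinking $\varepsilon=\varepsilon_T\to0$ forces the net to grow without bound and you are back to needing uniformity over infinitely many directions. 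So keep the multivariate-LIL route and drop the $\varepsilon$-net.
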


\begin{lemma}
\label{b-xt}Let Assumptions \ref{as-B} and \ref{as-moments}\textit{(ii)} be
satisfied. Then there exists a random variable $T_{0}$ such that, for all $%
T\geq T_{0}$%
\begin{equation*}
\lambda ^{\left( j\right) }\left( P\sum_{t=1}^{T}x_{t}x_{t}^{\prime
}P^{\prime }\right) \geq c_{0}\frac{T^{1+2/\eta }}{\left( \ln \ln T\right)
^{2/\eta }},
\end{equation*}%
for all $1\leq j\leq m$.

\begin{proof}
The lemma is an immediate consequence of Lemma \ref{donsker}. Indeed, by the
multiplicative Weyl's inequality (see Theorem 7 in \citealp*{merikoski2004}%
), it holds that%
\begin{equation*}
\lambda ^{\left( j\right) }\left( P\sum_{t=1}^{T}x_{t}x_{t}^{\prime
}P^{\prime }\right) \geq \lambda ^{\left( \min \right) }\left( P^{\prime
}P\right) \lambda ^{\left( j\right) }\left( \sum_{t=1}^{T}x_{t}x_{t}^{\prime
}\right) ,
\end{equation*}%
and Lemma \ref{donsker} readily implies that, for every $j$ 
\begin{equation*}
\lim \inf_{T\rightarrow \infty }\frac{\left( \ln \ln T\right) ^{2/\eta }}{%
T^{1+2/\eta }}\lambda ^{\left( j\right) }\left(
\sum_{t=1}^{T}x_{t}x_{t}^{\prime }\right) >0\;\;\quad \mbox{a.s.}
\end{equation*}%
Assumption \ref{as-B} entails that $\lambda ^{\left( \min
\right) }\left( P^{\prime }P\right) >0$, whence the desired result. 
\end{proof}
\end{lemma}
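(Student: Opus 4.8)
The plan is to read the statement off Lemma \ref{donsker} together with a multiplicative eigenvalue inequality, exactly as the matrix $P\sum_{t=1}^{T}x_{t}x_{t}^{\prime }P^{\prime }$ invites. First I would note that this $N\times N$ matrix has rank at most $m$, and that its nonzero eigenvalues coincide with those of the $m\times m$ matrix $(\sum_{t}x_{t}x_{t}^{\prime })^{1/2}(P^{\prime }P)(\sum_{t}x_{t}x_{t}^{\prime })^{1/2}$ (same nonzero spectrum as $(P^{\prime }P)(\sum_{t}x_{t}x_{t}^{\prime })$). Since $P$ has full column rank $m$ by Assumption \ref{as-B}\textit{(i)}, $P^{\prime }P$ is symmetric positive definite, so $P^{\prime }P\succeq \lambda ^{\left( \min \right) }(P^{\prime }P)\,I_{m}$ and hence $(\sum_{t}x_{t}x_{t}^{\prime })^{1/2}(P^{\prime }P)(\sum_{t}x_{t}x_{t}^{\prime })^{1/2}\succeq \lambda ^{\left( \min \right) }(P^{\prime }P)\sum_{t}x_{t}x_{t}^{\prime }$. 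Weyl monotonicity then yields $\lambda ^{\left( j\right) }(P\sum_{t}x_{t}x_{t}^{\prime }P^{\prime })\geq \lambda ^{\left( \min \right) }(P^{\prime }P)\,\lambda ^{\left( j\right) }(\sum_{t}x_{t}x_{t}^{\prime })$ for every $1\leq j\leq m$; this is precisely the multiplicative Weyl inequality (Theorem 7 in \citet{merikoski2004}).

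Next I would invoke Lemma \ref{donsker}: there is a random variable $T_{0}$ and a positive definite matrix $D$ with $\sum_{t=1}^{T}x_{t}x_{t}^{\prime }\succeq D\,T^{1+2/\eta }/(\ln \ln T)^{2/\eta }$ for all $T\geq T_{0}$. Monotonicity of eigenvalues gives $\lambda ^{\left( j\right) }(\sum_{t=1}^{T}x_{t}x_{t}^{\prime })\geq \lambda ^{\left( \min \right) }(D)\,T^{1+2/\eta }/(\ln \ln T)^{2/\eta }$ for all $j\leq m$ and $T\geq T_{0}$, with $\lambda ^{\left( \min \right) }(D)>0$. Combining this with the previous display and setting $c_{0}=\lambda ^{\left( \min \right) }(P^{\prime }P)\,\lambda ^{\left( \min \right) }(D)>0$ (again finite and strictly positive by Assumption \ref{as-B}) delivers the claim, with the same $T_{0}$ as in Lemma \ref{donsker}.

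The only point requiring any care — and essentially the only place where the argument could be mishandled — is the restriction to $j\leq m$: because $P\sum_{t}x_{t}x_{t}^{\prime }P^{\prime }$ is rank-deficient, no nontrivial lower bound on $\lambda ^{\left( j\right) }$ can hold for $j>m$, and correspondingly the multiplicative Weyl inequality only transfers the bound to the top $m$ eigenvalues, which is exactly the range asserted in the lemma. Beyond that, the argument is a two-line consequence of Lemma \ref{donsker}, so I do not anticipate a substantive obstacle here; the real work has already been absorbed into Lemma \ref{donsker} (via the lower Chover-type law of the iterated logarithm for functionals of stable processes).
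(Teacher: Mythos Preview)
Your proof is correct and follows essentially the same route as the paper: the multiplicative Weyl inequality (Theorem~7 in \citet{merikoski2004}) to pull out $\lambda^{(\min)}(P'P)>0$, then Lemma~\ref{donsker} to lower-bound $\lambda^{(j)}(\sum_t x_t x_t')$. Your added justification of the Weyl step via the square-root/Loewner-ordering argument and your remark on the restriction $j\le m$ are sound elaborations, but the structure is identical to the paper's proof.
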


\begin{lemma}
\label{ut-2}Let Assumption \ref{as-moments} be satisfied. Then, for all $%
\epsilon >0$ it holds that 
\begin{equation*}
\lambda ^{\left( \max \right) }\left( \sum_{t=1}^{T}u_{t}u_{t}^{\prime
}\right) =o_{a.s.}\left( T^{2/p}\left( \ln T\right) ^{2\left( 2+\epsilon
\right) /p}\right) ,
\end{equation*}%
where $0< p<\eta $ when $\eta \leq 2$ with $E\left\Vert \varepsilon
_{t}\right\Vert ^{\eta }=\infty $, and $p=2$ when $\eta =2$ with $%
E\left\Vert \varepsilon _{t}\right\Vert ^{\eta }<\infty $.

\begin{proof}
The proof does not require any distinction between $p<2$ and $p=2$. We have%
\begin{equation*}
\lambda ^{\left( \max \right) }\left( \sum_{t=1}^{T}u_{t}u_{t}^{\prime
}\right) \leq \sum_{i=1}^{N}\sum_{t=1}^{T}u_{i,t}^{2}.
\end{equation*}%
Also, given that $u_{i,t}=\sum_{k=1}^{N}\sum_{m=0}^{\infty }C_{m,ik}^{\ast
}\varepsilon _{k,t-m}$%
\begin{equation*}
\sum_{t=1}^{T}u_{i,t}^{2}=\sum_{t=1}^{T}\left(
\sum_{m=0}^{t-1}C_{m,ik}^{\ast }\varepsilon _{k,t-m}\right)
^{2}=\sum_{k,k^{\prime
}=1}^{N}\sum_{t=1}^{T}\sum_{m=0}^{t-1}\sum_{n=0}^{t-1}C_{m,ik}^{\ast
}C_{n,ik^{\prime }}^{\ast }\varepsilon _{k,t-m}\varepsilon _{k^{\prime
},t-n}.
\end{equation*}%
It holds that%
\begin{align*}
\left\vert \sum_{t=1}^{T}u_{i,t}^{2}\right\vert _{p/2}^{p/2}& \leq
\sum_{k,k^{\prime
}=1}^{N}\sum_{t=1}^{T}\sum_{m=0}^{t-1}\sum_{n=0}^{t-1}\left\vert
C_{m,ik}^{\ast }\right\vert ^{p/2}\left\vert C_{n,ik^{\prime }}^{\ast
}\right\vert ^{p/2}\left\vert \varepsilon _{k,t-m}\varepsilon _{k^{\prime
},t-n}\right\vert _{p/2}^{p/2} \\
& \leq \sum_{k,k^{\prime
}=1}^{N}\sum_{t=1}^{T}\sum_{m=0}^{t-1}\sum_{n=0}^{t-1}\left\vert
C_{m,ik}^{\ast }\right\vert ^{p/2}\left\vert C_{n,ik^{\prime }}^{\ast
}\right\vert ^{p/2}\left\vert \varepsilon _{k,t-m}\right\vert
_{p}^{p/2}\left\vert \varepsilon _{k^{\prime },t-n}\right\vert _{p}^{p/2} \\
& \leq c_{0}\sum_{k,k^{\prime }=1}^{N}\sum_{t=1}^{T}\left(
\sum_{m=0}^{t-1}\left\vert C_{m,ik}^{\ast }\right\vert ^{p/2}\right) \left(
\sum_{n=0}^{t-1}\left\vert C_{n,ik^{\prime }}^{\ast }\right\vert
^{p/2}\right) \\
& \leq c_{0}T,
\end{align*}%
where we have used the Cauchy-Schwartz inequality and the fact that $%
C_{m,ik}^{\ast }=O\left( \exp \left( -c_{0}m\right) \right) $, which follows from Assumption \ref{as-B}. Using Lemma \ref{stout}, the desired
result follows immediately.

\end{proof}
\end{lemma}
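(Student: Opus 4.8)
The plan is to dominate $\lambda^{\left(\max\right)}$ by the trace and then bound each diagonal term in $L_{p/2}$-norm, after which the almost-sure rate will follow from Lemma~\ref{stout}. Since $\sum_{t=1}^{T}u_{t}u_{t}^{\prime}$ is positive semidefinite, all of its eigenvalues are nonnegative, so
\begin{equation*}
\lambda^{\left(\max\right)}\left(\sum_{t=1}^{T}u_{t}u_{t}^{\prime}\right)\leq\operatorname{tr}\left(\sum_{t=1}^{T}u_{t}u_{t}^{\prime}\right)=\sum_{i=1}^{N}\sum_{t=1}^{T}u_{i,t}^{2},
\end{equation*}
and since $N$ is fixed it will suffice to prove the stated rate for $\sum_{t=1}^{T}u_{i,t}^{2}$, one coordinate at a time.

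First I would expand $u_{i,t}$ through its Beveridge--Nelson coefficients: since $\varepsilon_{s}=0$ for $s\leq0$ we have $u_{i,t}=\sum_{k=1}^{N}\sum_{m=0}^{t-1}C_{m,ik}^{\ast}\varepsilon_{k,t-m}$, and Assumption~\ref{as-B}\textit{(ii)} forces $\left|C_{m,ik}^{\ast}\right|=O(\rho^{m})$ for some $0<\rho<1$ (the tail sums $C_{m}^{\ast}=\sum_{k>m}C_{k}$ inherit geometric decay). Squaring,
\begin{equation*}
u_{i,t}^{2}=\sum_{k,k^{\prime}=1}^{N}\sum_{m=0}^{t-1}\sum_{n=0}^{t-1}C_{m,ik}^{\ast}C_{n,ik^{\prime}}^{\ast}\varepsilon_{k,t-m}\varepsilon_{k^{\prime},t-n}.
\end{equation*}
The crucial step will be to control $\left|\sum_{t=1}^{T}u_{i,t}^{2}\right|_{p/2}$ \emph{without} appealing to orthogonality, since second moments need not exist when $p<2$. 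Because $p\leq\eta\leq2$ the exponent $p/2\leq1$, so $\left|\cdot\right|_{p/2}^{p/2}$ is subadditive; applying the triangle inequality termwise, then the Cauchy--Schwarz inequality $\left|\varepsilon_{k,t-m}\varepsilon_{k^{\prime},t-n}\right|_{p/2}\leq\left|\varepsilon_{k,t-m}\right|_{p}\left|\varepsilon_{k^{\prime},t-n}\right|_{p}$, and finally the fact that $\left|\varepsilon_{k,s}\right|_{p}\leq c_{0}<\infty$ uniformly (by Assumption~\ref{as-moments} and \citealp{petrov1974}), I would arrive at
\begin{equation*}
\left|\sum_{t=1}^{T}u_{i,t}^{2}\right|_{p/2}^{p/2}\leq c_{0}\sum_{k,k^{\prime}=1}^{N}\sum_{t=1}^{T}\left(\sum_{m=0}^{t-1}\left|C_{m,ik}^{\ast}\right|^{p/2}\right)\left(\sum_{n=0}^{t-1}\left|C_{n,ik^{\prime}}^{\ast}\right|^{p/2}\right)\leq c_{0}T,
\end{equation*}
the inner sums over $m$ and $n$ being bounded uniformly in $t$ precisely because the $C^{\ast}$-coefficients decay geometrically; hence $\left|\sum_{t=1}^{T}u_{i,t}^{2}\right|_{p/2}\leq c_{0}T^{2/p}$.

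Finally I would invoke Lemma~\ref{stout} with $U_{T}=\sum_{t=1}^{T}u_{i,t}^{2}$, $\delta=p/2$ and the monotone sequence $a_{T}=c_{0}T$; as $\delta\leq1$ we have $d_{\delta}=1$, so the lemma yields
\begin{equation*}
\limsup_{T\rightarrow\infty}\frac{\left|U_{T}\right|}{\left(c_{0}T\right)^{2/p}\left(\ln T\right)^{\left(2+\epsilon\right)/(p/2)}}\leq C_{0}\quad\text{a.s.},
\end{equation*}
that is $\sum_{t=1}^{T}u_{i,t}^{2}=o_{a.s.}(T^{2/p}(\ln T)^{2\left(2+\epsilon\right)/p})$ since $\epsilon>0$ is arbitrary; summing the $N$ coordinates gives the lemma. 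The boundary case $\eta=2$ with $E\Vert\varepsilon_{t}\Vert^{\eta}<\infty$ needs no separate treatment: there $p=2$, hence $p/2=1$, and the $L_{1}$ triangle inequality still applies.

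The one genuinely delicate point — the main obstacle — is the $L_{p/2}$ bound in the middle paragraph: with $p<2$ one cannot split the variance of the quadratic form, and the argument instead hinges on the elementary inequality $\left|X+Y\right|_{p/2}^{p/2}\leq\left|X\right|_{p/2}^{p/2}+\left|Y\right|_{p/2}^{p/2}$ (valid for $p/2\leq1$) to reduce the randomness to a deterministic, absolutely convergent double series of moving-average coefficients. The geometric decay furnished by Assumption~\ref{as-B}\textit{(ii)} is exactly what keeps that series $O(T)$ rather than $O(T^{2})$; once this is in place, the passage to the almost-sure rate (a maximal inequality plus Borel--Cantelli, packaged in Lemma~\ref{stout}) is routine.
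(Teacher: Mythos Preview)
Your proposal is correct and follows essentially the same route as the paper: the trace bound on $\lambda^{(\max)}$, the coordinate-wise expansion of $u_{i,t}^{2}$, the $L_{p/2}$ moment bound via subadditivity (since $p/2\leq 1$) together with Cauchy--Schwarz and the geometric decay of $C_{m}^{\ast}$, and the final appeal to Lemma~\ref{stout} with $\delta=p/2$ and $a_{T}=c_{0}T$. Your explicit identification of the parameters in Lemma~\ref{stout} and your remark that the case $p=2$ requires no separate argument are both accurate and mirror the paper's treatment.
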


\begin{lemma}
\label{cross-prod}Let Assumptions \ref{as-B}-\ref{as-moments} be satisfied.
Then, for all $\epsilon >0$, it holds that 
\begin{equation*}
\lambda ^{\left( \max \right) }\left( P\sum_{t=1}^{T}x_{t}u_{t}^{\prime
}+\sum_{t=1}^{T}u_{t}x_{t}^{\prime }P^{\prime }\right) =o_{a.s.}\left(
T^{2/p}\left( \ln T\right) ^{2\left( 2+\epsilon \right) /p}\right) ,
\end{equation*}%
where $0< p<\eta $ when $\eta \leq 2$ with $E\left\Vert \varepsilon
_{t}\right\Vert ^{\eta }=\infty $, and $p=2$ when $\eta =2$ with $%
E\left\Vert \varepsilon _{t}\right\Vert ^{2}<\infty $.

\begin{proof}
Recall that $C_{h,k}$ indicates the element of $C$ in position $h,k$. Using
the definition of Frobenius norm and the $L_{p}$-norm inequality, it follows
that%
\begin{equation*}
\lambda ^{\left( \max \right) }\left( B\sum_{t=1}^{T}x_{t}u_{t}^{\prime
}+\sum_{t=1}^{T}u_{t}x_{t}^{\prime }B^{\prime }\right) \leq
2\sum_{k,h=1}^{N}\left\vert \sum_{t=1}^{T}C_{h,k}\sum_{s=1}^{t}\varepsilon
_{k,t}\sum_{m=0}^{\infty }C_{m,jh}^{\ast }\varepsilon _{h,t-m}\right\vert .
\end{equation*}%
Thus, using the $C_{r}$-inequality, after some algebra it holds that%
\begin{equation*}
E\left\vert \sum_{k,h=1}^{N}\left\vert
\sum_{t=1}^{T}C_{h,k}\sum_{s=1}^{t}\varepsilon _{k,t}\sum_{m=0}^{\infty
}C_{m,jh}^{\ast }\varepsilon _{h,t-m}\right\vert \right\vert ^{p}\leq
c_{0}\sum_{k,h=1}^{N}E\left\vert \sum_{t=1}^{T}\sum_{s=1}^{t}\varepsilon
_{k,t}\sum_{m=0}^{\infty }C_{m,jh}^{\ast }\varepsilon _{h,t-m}\right\vert
^{p},
\end{equation*}%
where the constant $c_{0}<\infty $ depends only on $p$, $N$ and $m$. It is
convenient to consider separately the cases where $0\leq p<\eta $ when $\eta
\leq 2$ with $E\left\Vert \varepsilon _{t}\right\Vert ^{\eta }=\infty $, and 
$p=2$ when $\eta =2$ with $E\left\Vert \varepsilon _{t}\right\Vert
^{2}<\infty $. In the former case, let $p<2$. In order to lighten up the
notation, we omit the subscripts $j$, $h$ and $k$, and study%
\begin{equation*}
\sum_{m=0}^{\infty }C_{m}^{\ast }\sum_{t=1}^{T}\sum_{s=1}^{t}\varepsilon
_{s}\varepsilon _{t-m}.
\end{equation*}%
It holds that 
\begin{align*}
\sum_{m=0}^{\infty }C_{m}^{\ast }\sum_{t=1}^{T}\sum_{s=1}^{t}\varepsilon
_{s}\varepsilon _{t-m}& =\sum_{m=0}^{\infty }C_{m}^{\ast
}\sum_{t=1}^{T}\sum_{s=1}^{t}\varepsilon _{s}\varepsilon
_{t}+\sum_{m=0}^{\infty }C_{m}^{\ast
}\sum_{j=1}^{m}\sum_{t=1}^{T-m-j}\varepsilon _{t}\varepsilon _{t+j} \\
& -\sum_{m=0}^{\infty }C_{m}^{\ast }\sum_{j=1}^{m}\varepsilon
_{T+1-j}\sum_{t=1}^{T+1-j}\varepsilon _{t}+\sum_{m=0}^{\infty }C_{m}^{\ast
}\sum_{j=1}^{m}\varepsilon _{T+1-j}\sum_{i=1}^{j}\varepsilon _{T-m-i} \\
& =I+II+III+IV,
\end{align*}%
with the convention that $\sum_{j=1}^{0}=0$. Consider $I$; it holds that%
\begin{equation*}
I\leq c_{0}\left\vert \sum_{t=1}^{T}\sum_{s=1}^{t}\varepsilon
_{s}\varepsilon _{t}\right\vert ,
\end{equation*}%
by the summability of $C_{m}^{\ast }$. Also%
\begin{equation*}
\left\vert \sum_{t=1}^{T}\sum_{s=1}^{t}\varepsilon _{s}\varepsilon
_{t}\right\vert \leq \left\vert \sum_{t=1}^{T}\sum_{s=1}^{t-1}\varepsilon
_{s}\varepsilon _{t}\right\vert +\left\vert \sum_{t=1}^{T}\varepsilon
_{t}^{2}\right\vert =I_{a}+I_{b}.
\end{equation*}%
Similar steps as above yield%
\begin{equation*}
\left\vert \sum_{t=1}^{T}\varepsilon _{t}^{2}\right\vert _{p/2}^{p/2}\leq
\sum_{t=1}^{T}\left\vert \varepsilon _{t}\right\vert _{p}^{p}\leq c_{0}T,
\end{equation*}%
so that Lemma \ref{stout} entails that $I_{b}=o_{a.s.}\left( T^{2/p}\left(
\ln T\right) ^{2\left( 1+\epsilon \right) /p}\right) $ for all $\epsilon >0$%
. Turning to $I_{a}$, when $p\leq 1$%
\begin{equation*}
\left\vert \sum_{t=1}^{T}\sum_{s=1}^{t-1}\varepsilon _{s}\varepsilon
_{t}\right\vert _{p}^{p}\leq \sum_{t=1}^{T}\sum_{s=1}^{t-1}\left\vert
\varepsilon _{s}\varepsilon _{t}\right\vert
_{p}^{p}=\sum_{t=1}^{T}\sum_{s=1}^{t-1}\left\vert \varepsilon
_{s}\right\vert _{p}^{p}\left\vert \varepsilon _{t}\right\vert _{p}^{p}\leq
c_{0}T^{2}.
\end{equation*}%
When $1<p\leq 2$, the von Bahr-Esseen inequality (\citealp*{bahr}) can be
applied, giving%
\begin{equation*}
\left\vert \sum_{t=1}^{T}\sum_{s=1}^{t-1}\varepsilon _{s}\varepsilon
_{t}\right\vert _{p}^{p}\leq c_{0}\sum_{t=1}^{T}\left\vert
\sum_{s=1}^{t-1}\varepsilon _{s}\varepsilon _{t}\right\vert
_{p}^{p}=c_{0}\sum_{t=1}^{T}\left\vert \sum_{s=1}^{t-1}\varepsilon
_{s}\right\vert _{p}^{p}\left\vert \varepsilon _{t}\right\vert _{p}^{p}\leq
c_{1}\sum_{t=1}^{T}\left\vert \sum_{s=1}^{t-1}\varepsilon _{s}\right\vert
_{p}^{p}\leq c_{2}\sum_{t=1}^{T}\sum_{s=1}^{t-1}\left\vert \varepsilon
_{s}\right\vert _{p}^{p}\leq c_{3}T^{2}.
\end{equation*}%
By Lemma \ref{stout}, it follows that, for all $0<p\leq 2$, $%
I_{a}=o_{a.s.}\left( T^{2/p}\left( \ln T\right) ^{\left( 1+\epsilon \right)
/p}\right) $. Turning to $II$, note that%
\begin{equation*}
\left\vert \sum_{m=1}^{\infty }C_{m}^{\ast
}\sum_{j=1}^{m}\sum_{t=1}^{T-j}\varepsilon _{t}\varepsilon _{t+j}\right\vert
_{p/2}^{p/2}\leq \sum_{m=1}^{\infty }\left\vert C_{m}^{\ast }\right\vert
^{p/2}\sum_{j=1}^{m}\sum_{t=1}^{T-j}\left\vert \varepsilon _{t}\varepsilon
_{t+j}\right\vert _{p/2}^{p/2}\leq c_{0}T\sum_{m=1}^{\infty }m\left\vert
C_{m}^{\ast }\right\vert ^{p/2}\leq c_{0}T,
\end{equation*}%
so that $II=o_{a.s.}\left( T^{2/p}\left( \ln T\right) ^{2\left( 1+\epsilon
\right) /p}\right) $. Finally, for $III$ we apply the same logic as above to
get%
\begin{equation*}
\left\vert \sum_{m=1}^{\infty }C_{m}^{\ast }\sum_{j=0}^{m-1}\varepsilon
_{T-j}\sum_{t=1}^{T}\varepsilon _{t}\right\vert _{p/2}^{p/2}\leq
c_{0}\sum_{m=1}^{\infty }\left\vert C_{m}^{\ast }\right\vert
^{p/2}\sum_{j=0}^{m-1}\sum_{t=1}^{T}\left\vert \varepsilon _{T-j}\varepsilon
_{t}\right\vert _{p/2}^{p/2}\leq c_{0}T,
\end{equation*}%
which yields $III=o_{a.s.}\left( T^{2/p}\left( \ln T\right) ^{2\left(
1+\epsilon \right) /p}\right) $ as above. Finally, note that, as above%
\begin{align*}
\left\vert \sum_{m=0}^{\infty }C_{m}^{\ast }\sum_{j=1}^{m}\varepsilon
_{T+1-j}\sum_{i=1}^{j}\varepsilon _{T-m-i}\right\vert _{p/2}^{p/2}& \leq
\sum_{m=0}^{\infty }C_{m}^{\ast }\sum_{j=1}^{m}\left\vert \varepsilon
_{T+1-j}\sum_{i=1}^{j}\varepsilon _{T-m-i}\right\vert _{p/2}^{p/2} \\
& =\sum_{m=0}^{\infty }C_{m}^{\ast }\sum_{j=1}^{m}\left\vert \varepsilon
_{T+1-j}\right\vert _{p/2}^{p/2}\sum_{i=1}^{j}\left\vert \varepsilon
_{T-m-i}\right\vert _{p/2}^{p/2} \\
& \leq c_{0}\sum_{m=0}^{\infty }C_{m}^{\ast }\sum_{j=1}^{m}j\leq c_{1},
\end{align*}%
again by the exponential summability of $C_{m}^{\ast }$; this entails that $%
IV=o_{a.s.}\left( \left( \ln T\right) ^{2\left( 1+\epsilon \right)
/p}\right) $. The lemma follows by putting everything together; the case $%
\eta =2$ with finite variance follows from exactly the same calculations,
with $p=2$. 
\end{proof}

\setcounter{subsection}{-1} \setcounter{equation}{0} %
\renewcommand{\theequation}{D.\arabic{equation}}
\end{lemma}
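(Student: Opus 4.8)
The plan is to bound the largest eigenvalue of the symmetric matrix $P\sum_{t=1}^{T}x_{t}u_{t}^{\prime }+\sum_{t=1}^{T}u_{t}x_{t}^{\prime }P^{\prime }$ by $2\|A\|_{F}$ with $A=P\sum_{t}x_{t}u_{t}^{\prime }$, and then to bound $\|A\|_{F}$ crudely by the sum of the absolute values of the entries of $A$. Since $Px_{t}=C\sum_{s=1}^{t}\varepsilon_{s}$ by definition of $x_{t}$, and $u_{j,t}=\sum_{h}\sum_{m\geq 0}C_{m,jh}^{\ast }\varepsilon_{h,t-m}$, and since $N$ and $m$ are fixed so that summing over coordinate indices does not change any rates, the problem reduces to controlling, for each fixed triple of indices, a scalar double sum of the form
\[
\sum_{m=0}^{\infty }C_{m}^{\ast }\sum_{t=1}^{T}\sum_{s=1}^{t}\varepsilon_{s}\varepsilon_{t-m},
\]
where now the $\varepsilon$'s are scalar i.i.d.\ with $|\varepsilon_{t}|_{p}<\infty$ for the relevant $p<\eta$, and $|C_{m}^{\ast }|=O(e^{-c_{0}m})$ by Assumption \ref{as-B}.

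The decisive step is a re-indexing that turns this double sum into a sum of products of \emph{independent} innovations, plus boundary corrections. Concretely, I would write the quantity as $I+II+III+IV$: here $I$ collects the contemporaneous contribution $\sum_{m}C_{m}^{\ast }\sum_{t}\sum_{s\leq t}\varepsilon_{s}\varepsilon_{t}$, which itself splits once more into the i.i.d.\ sum $\sum_{t}\varepsilon_{t}^{2}$ and the triangular sum $\sum_{t}\varepsilon_{t}\sum_{s<t}\varepsilon_{s}$; $II$ collects the cross-products $\varepsilon_{t}\varepsilon_{t+j}$ with $1\leq j\leq m$; and $III,IV$ collect the end-effect terms involving the innovations $\varepsilon_{T+1-j}$ near time $T$, which arise precisely because the inner sum is truncated at $t$ rather than extending to infinity.

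For each of these pieces I would compute a moment bound: an $L_{p/2}^{p/2}$-bound for the pieces built from products of distinct-index innovations (using H\"older/independence in the form $|\varepsilon_{s}\varepsilon_{t}|_{p/2}\leq |\varepsilon_{s}|_{p}|\varepsilon_{t}|_{p}$, the $C_{r}$-inequality when $p\leq 1$, and $\sum_{m}m|C_{m}^{\ast }|^{p/2}<\infty$ to absorb the lag sum), and an $L_{p}^{p}$-bound for the triangular piece when $1<p\leq 2$ (via the von Bahr--Esseen inequality, \citealp{bahr}, exploiting its martingale-difference structure after centering, which is legitimate since $\eta>1$ in that regime). The distinct-index pieces come out of order $O(T)$ in $L_{p/2}^{p/2}$ and the triangular piece of order $O(T^{2})$ in $L_{p}^{p}$, so that Lemma \ref{stout} applied with $\delta$ equal to $p/2$, respectively $p$, upgrades all of them to the almost sure rate $o_{a.s.}(T^{2/p}(\ln T)^{2(2+\epsilon)/p})$ for every $\epsilon>0$, the larger logarithmic power being produced by the $\delta=p/2$ pieces. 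Summing the finitely many coordinate contributions preserves this rate, and the case $\eta=2$ with $E\|\varepsilon_{t}\|^{2}<\infty$ follows from the same computation with $p=2$.

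I expect the main obstacle to be the bookkeeping in the re-indexing step. The rearrangement of $\sum_{t}\sum_{s\leq t}\varepsilon_{s}\varepsilon_{t-m}$ into sums of products of independent terms must be carried out uniformly in the lag $m$, so that the geometric weights $C_{m}^{\ast }$ still yield a convergent series once the $L_{r}$-bounds are in place; keeping track of the boundary terms $III$ and $IV$, which mix innovations near time $T$ with earlier ones, is the delicate part. A secondary subtlety is that, since $p$ can be strictly below $1$, Minkowski-type inequalities cannot be applied to the double sum as a whole, which is why the triangular piece has to be isolated and treated separately by von Bahr--Esseen in the regime $1<p\leq 2$.
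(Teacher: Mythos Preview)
Your proposal is correct and follows essentially the same route as the paper's proof: the same Frobenius-norm reduction to scalar entries, the same decomposition of $\sum_{m}C_{m}^{\ast }\sum_{t}\sum_{s\leq t}\varepsilon_{s}\varepsilon_{t-m}$ into the four pieces $I$--$IV$ (with $I$ further split into the diagonal and triangular parts), the same $L_{p/2}$- and $L_{p}$-moment bounds using independence, the $C_{r}$-inequality for $p\leq 1$ and von Bahr--Esseen for $1<p\leq 2$, and the same final appeal to Lemma~\ref{stout}. Your identification of the $\delta=p/2$ pieces as the source of the $2(2+\epsilon)/p$ logarithmic exponent and of the re-indexing bookkeeping as the main technical nuisance is also accurate.
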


\begin{lemma}
\label{mixing-alpha}We assume that Assumptions \ref{as-B}-\ref{as-deltayt}
are satisfied. Then it holds that $z_{t}=l^{\prime }\Delta y_{t}$ is a
strong mixing sequence with mixing numbers $\alpha _{m}=O\left( \rho
^{m}\right) $, for all $l$ and some $0<\rho <1$.

\begin{proof}
In order to prove the lemma, recall that, by Assumption \ref{as-B}, we have
the following equivalent representations:%
\begin{equation}
\Delta y_{t}=C\left( L\right) \varepsilon _{t}=C\varepsilon _{t}+C^{\ast
}\left( L\right) \varepsilon _{t}-C^{\ast }\left( L\right) \varepsilon _{t-1}
\label{vma}
\end{equation}%
where $C^{\ast }\left( L\right) $ is an invertible filter, and there exists
an $N\times R$ matrix $b$ of full rank $R$ such that $b^{\prime }C=0$. 

We prove the lemma by considering separately the cases $m=0$, $m=N$ and $%
0<m<N$. When $m=N$, (\ref{vma}) boils down to%
\begin{equation*}
\Delta y_{t}=C^{\ast }\left( L\right) \varepsilon _{t}-C^{\ast }\left(
L\right) \varepsilon _{t-1}.
\end{equation*}%
We start by showing that $C^{\ast }\left( L\right) \varepsilon _{t}$ is
strong mixing with geometrically declining mixing numbers, by verifying the
conditions of Lemma 2.2 and Theorem 2.1 in \citet*{phamtran}. Firstly,
Assumption \ref{as-deltayt} ensures that the error term $\varepsilon _{t}$
satisfies condition (i) in Lemma 2.2 in \citet*{phamtran}. Also, $C^{\ast
}\left( L\right) $ is invertible,\footnote{%
We point out that in the paper by \citet*{phamtran}, condition (ii) in their
Lemma 2.2 contains a typo, as it requires (using our notation) that $%
\sum_{j=0}^{\infty }C_{j}^{\ast }z^{j}\neq 0$ for all $\left\vert
z\right\vert \leq 1$ - upon inspecting the proof, this is required in order
to invert the polynomial $C^{\ast }\left( L\right) $, and thus it should
instead read $\det \left( \sum_{j=0}^{\infty }C_{j}^{\ast }z^{j}\right) \neq
0$ for all $\left\vert z\right\vert \leq 1$, where $\det \left( A\right) $
is the determinant of matrix $A$.} and, using Assumption \ref{as-B}, it is
easy to see that $\sum_{j=0}^{\infty }\left\Vert C_{j}^{\ast }\right\Vert
<\infty $. Hence, Theorem 2.1 in \citet*{phamtran} can be applied, yielding
that $C^{\ast }\left( L\right) \varepsilon _{t}$ is a strongly mixing
sequence with geometrically declining mixing numbers. Using Theorem 14.1 in %
\citet*{davidson}, it follows that both $\Delta y_{t}$ and $l^{\prime
}\Delta y_{t}$\ are also strong mixing sequence with geometrically declining
mixing numbers. When $m=0$, the same arguments can be applied to (%
\ref{vma}), noting that in this case $C\left( L\right) $ is invertible, and
that the condition $\sum_{j=0}^{\infty }\left\Vert C_{j}\right\Vert <\infty $
follows from Assumption \ref{as-B}, which entails that $C_{j}$ declines
geometrically. Finally, consider the case $0<m<N$. Since the $N\times N$ matrix $\left(
b,b_{\perp }\right) $ has full rank, any $l\in 
\mathbb{R}
^{N}$ can be expressed as $l=bv_{1}+b_{\perp }v_{2}$, where $v_{1}$ is $%
R\times 1$ and $v_{2}$ is $m\times 1$. Therefore it holds that%
\begin{eqnarray*}
z_{t} &=&l^{\prime }\Delta y_{t}=v_{1}^{\prime }b^{\prime }\Delta
y_{t}+v_{2}^{\prime }b_{\perp }^{\prime }\Delta y_{t} \\
&=&l^{\prime }\left[ C^{\ast }\left( L\right) \varepsilon _{t}-C^{\ast
}\left( L\right) \varepsilon _{t-1}\right] +v_{2}^{\prime }b_{\perp
}^{\prime }C\varepsilon _{t}.
\end{eqnarray*}%
having used (\ref{vma}). We already know from above that $l^{\prime }\left[
C^{\ast }\left( L\right) \varepsilon _{t}-C^{\ast }\left( L\right)
\varepsilon _{t-1}\right] $ is strong mixing with geometrically declining
mixing numbers; the same (obviously) applies to the \textit{i.i.d.} sequence $v_{2}^{\prime
}b_{\perp }^{\prime }C\varepsilon _{t}$. The \ desired result
follows again from Theorem 14.1 in \citet*{davidson}.
\end{proof}
\end{lemma}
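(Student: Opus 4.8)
The plan is to deduce the strong mixing of the scalar $z_t = l'\Delta y_t$ from the strong mixing of the full vector $\Delta y_t$: once the latter is shown to be $\alpha$-mixing with $\alpha_m=O(\rho^m)$, the former follows at once because $z_t$ is a fixed (linear, hence measurable) function of $\Delta y_t$, by Theorem 14.1 in \citet{davidson}. So the substance is to prove that $\{\Delta y_t\}$ is geometrically $\alpha$-mixing.

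First I would use the Beveridge--Nelson representation \eqref{bn} to write $\Delta y_t = C\varepsilon_t + C^{*}(L)\varepsilon_t - C^{*}(L)\varepsilon_{t-1}$, observing that $\|C_j\|=O(\rho^j)$ from Assumption \ref{as-B}\textit{(ii)} forces $\|C^{*}_j\|\le \sum_{k>j}\|C_k\|=O(\rho^j)$ as well, so both filters decay geometrically. The term $C\varepsilon_t$ is \textit{i.i.d.} and hence trivially mixing; the core object is the $MA(\infty)$ component $w_t := C^{*}(L)\varepsilon_t$. To show $\{w_t\}$ is geometrically $\alpha$-mixing I would invoke the mixing results for linear processes of \citet{phamtran} (their Lemma 2.2 and Theorem 2.1), whose hypotheses are: (a) geometric decay of the coefficients, just verified; (b) a total-variation smoothness condition on the innovation density, $\int|p_\varepsilon(u+y)-p_\varepsilon(u)|\,du\le c_0\|y\|$, which is exactly Assumption \ref{as-deltayt}; and (c) invertibility of the $MA$ filter, i.e. $\det\bigl(\sum_j C^{*}_j z^j\bigr)\neq 0$ for all $|z|\le 1$. (Here one should flag the minor typo in condition (ii) of their Lemma 2.2, which should be read in terms of the determinant of the matrix polynomial, not the polynomial itself.) Granted (a)--(c), $\{w_t\}$, and therefore $\{w_t-w_{t-1}\}$ as a function of a two-element window of $\{w_t\}$, is geometrically $\alpha$-mixing, and adjoining the \textit{i.i.d.} term preserves the rate by another application of Davidson's Theorem 14.1.

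The hard part is establishing (c), because when $rank(C)=m<N$ the full filter $C(L)$ is \emph{not} invertible: $\det C(1)=\det C=0$. I would therefore split into the cases $m=N$, $m=0$, and $0<m<N$. For $m=N$ (respectively $m=0$) the stationary filter $C^{*}(L)$ (respectively the filter driving $\Delta y_t$ itself) is invertible under Assumption \ref{as-B}, and the argument above applies verbatim. For $0<m<N$, take a full-rank $N\times(N-m)$ matrix $b$ with $b'C=0$ and decompose an arbitrary $l$ as $l=bv_1+b_\perp v_2$; since $b'C=0$ one gets $l'C\varepsilon_t=v_2'b_\perp'C\varepsilon_t$, whence
\begin{equation*}
z_t = v_2'b_\perp'C\,\varepsilon_t + l'(w_t-w_{t-1}),
\end{equation*}
the first summand being \textit{i.i.d.} and the second a window-function of the geometrically mixing $\{w_t\}$, so Davidson's Theorem 14.1 again delivers the claim. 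The one point requiring care is that the two summands are driven by the same innovation sequence, so the final step should be phrased as mixing of the vector $(\varepsilon_t,w_t)$ — equivalently, of a finite window of $\{w_t\}$ together with the innovations recovered from it through the inverse filter, whose coefficients also decay geometrically by invertibility — rather than naively summing the mixing numbers of two separate sequences.
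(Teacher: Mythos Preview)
Your proposal is essentially the paper's proof: the same Beveridge--Nelson split $\Delta y_t=C\varepsilon_t+(w_t-w_{t-1})$ with $w_t=C^{*}(L)\varepsilon_t$, the same three-case analysis on $m$, the same appeal to Pham--Tran for the invertible filter and to Davidson's Theorem~14.1 for finite-window functions, and the identical decomposition $z_t=v_2'b_\perp'C\varepsilon_t+l'(w_t-w_{t-1})$ via $l=bv_1+b_\perp v_2$ in the intermediate case. Two small remarks: (i) your assignment of which filter is invertible in the boundary cases is swapped---when $m=N$ it is $C(L)$ itself that is invertible (so Pham--Tran applies directly to $\Delta y_t$), whereas when $m=0$ one has $C=0$ and must go through $C^{*}(L)$; the paper makes the same slip; (ii) your closing caveat that Davidson~14.1 does not cover the sum of two separately mixing sequences and that one should instead argue via the joint process $(\varepsilon_t,w_t)$ is well taken and is more careful than the paper, which simply cites Theorem~14.1 at that step.
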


\renewcommand{\thesection}{C} 
\setcounter{equation}{0} \setcounter{lemma}{0} \renewcommand{\thelemma}{C.%
\arabic{lemma}} \renewcommand{\theequation}{C.\arabic{equation}}

\section{Proofs\label{proofs}}

Henceforth, we let $E^{\ast }$ and $V^{\ast }$\ denote the expected value
and the variance with respect to $P^{\ast }$ respectively.

\begin{proof}[Proof of Proposition \protect\ref{s11-lambda}]
The result follows from Weyl's inequality (see e.g. \citealp*{hornjohnson})
and Lemmas \ref{b-xt}, \ref{ut-2} and \ref{cross-prod}. Indeed, it holds that%
\begin{align}
\lambda ^{\left( j\right) }\left( S_{11}\right) & \geq  & & \lambda ^{\left(
j\right) }\left( P\sum_{t=1}^{T}x_{t}x_{t}^{\prime }P^{\prime }\right)
+\lambda ^{\left( \min \right) }\left( \sum_{t=1}^{T}u_{t}u_{t}^{\prime
}+P\sum_{t=1}^{T}x_{t}u_{t}^{\prime }+\sum_{t=1}^{T}u_{t}x_{t}^{\prime
}P^{\prime }\right) ,  \label{weyl-1} \\
\lambda ^{\left( j\right) }\left( S_{11}\right) & \leq  & & \lambda ^{\left(
j\right) }\left( P\sum_{t=1}^{T}x_{t}x_{t}^{\prime }P^{\prime }\right)
+\lambda ^{\left( \max \right) }\left( \sum_{t=1}^{T}u_{t}u_{t}^{\prime
}+P\sum_{t=1}^{T}x_{t}u_{t}^{\prime }+\sum_{t=1}^{T}u_{t}x_{t}^{\prime
}P^{\prime }\right) .  \label{weyl-2}
\end{align}%
Further, it holds that%
\begin{align}
& \lambda ^{\left( \max \right) }\left( \sum_{t=1}^{T}u_{t}u_{t}^{\prime
}+P\sum_{t=1}^{T}x_{t}u_{t}^{\prime }+\sum_{t=1}^{T}u_{t}x_{t}^{\prime
}P^{\prime }\right)   \notag \\
& \leq \lambda ^{\left( \max \right) }\left(
\sum_{t=1}^{T}u_{t}u_{t}^{\prime }\right) +\lambda ^{\left( \max \right)
}\left( P\sum_{t=1}^{T}x_{t}u_{t}^{\prime }+\sum_{t=1}^{T}u_{t}x_{t}^{\prime
}P^{\prime }\right) .  \label{weyl-3}
\end{align}%
In light of (\ref{weyl-3}), Lemmas \ref{ut-2} and \ref{cross-prod} entail
that 
\begin{equation*}
\lambda ^{\left( \max \right) }\left( \sum_{t=1}^{T}u_{t}u_{t}^{\prime
}+P\sum_{t=1}^{T}x_{t}u_{t}^{\prime }+\sum_{t=1}^{T}u_{t}x_{t}^{\prime
}P^{\prime }\right) =o_{a.s.}\left( T^{2/p}\left( \ln T\right) ^{2\left(
2+\epsilon \right) /p}\right) ,
\end{equation*}%
for all $0\leq p<\eta $ when $\eta \leq 2$ with $E\left\Vert \varepsilon
_{t}\right\Vert ^{\eta }=\infty $, and $p=2$ when $\eta =2$ with $%
E\left\Vert \varepsilon _{t}\right\Vert ^{\eta }<\infty $. Consider (\ref%
{weyl-1}); when $0\leq j\leq m$, the term that dominates is $\lambda
^{\left( j\right) }\left( P\sum_{t=1}^{T}x_{t}x_{t}^{\prime }P^{\prime
}\right) $, and (\ref{s11-lambda-1}) follows immediately from Lemma \ref%
{b-xt}. When $j>m$, $\lambda ^{\left( j\right) }\left(
P\sum_{t=1}^{T}x_{t}x_{t}^{\prime }P^{\prime }\right) =0$ by construction,
and therefore (\ref{s11-lambda-2}) follows from (\ref{weyl-2}). 
\end{proof}

\begin{proof}[Proof of Proposition \protect\ref{s00-lambda}]
Lemma \ref{mixing-alpha} entails that, for all $l\neq 0$, $z_{t}=l^{\prime
}\Delta y_{t}$ is strong mixing with geometrically declining mixing numbers.
Thence, (\ref{s00-lambda-max}) follows immediately from Theorem 2.1 in \citet%
*{trapaniLIL}. Consider now (\ref{s00-lambda-min}), and define the sequence%
\begin{equation*}
\varphi _{T}=\frac{T^{1/\eta }}{\left( \ln T\right) ^{\left( 2+\epsilon
\right) /\eta }}.
\end{equation*}%
It holds that%
\begin{eqnarray*}
\sum_{t=1}^{T}z_{t}^{2} &=&\sum_{t=1}^{T}z_{t}^{2}I\left( \left\vert
z_{t}\right\vert >\varphi _{T}\right) +\sum_{t=1}^{T}z_{t}^{2}I\left(
\left\vert z_{t}\right\vert \leq \varphi _{T}\right)  \\
&\geq &\sum_{t=1}^{T}z_{t}^{2}I\left( \left\vert z_{t}\right\vert >\varphi
_{T}\right) \geq \varphi _{T}^{2}\sum_{t=1}^{T}I\left( \left\vert
z_{t}\right\vert >\varphi _{T}\right) .
\end{eqnarray*}%
Define the array $\omega _{T,t}=I\left( \left\vert z_{t}\right\vert >\varphi
_{T}\right) $. We define the sigma-field $^{\omega }\mathcal{F}%
_{T,0}^{t}=\sigma \left( \omega _{T,t},..,\omega _{T,0}\right)
=\bigcup\limits_{i=0}^{t}\sigma \left( \omega _{Ti}\right) $. Note that, by
construction, $\sigma \left( \omega _{T,i}\right) =\left\{ \varnothing
,\Omega ,\left[ -\varphi _{T},\varphi _{T}\right] ,\left\{ \left( -\infty
,-\varphi _{T}\right) \cup \left( \varphi _{T},\infty \right) \right\}
\right\} $, so that $\sup_{T}\sigma \left( \omega _{T,i}\right) \subset
\sigma \left( z_{i}\right) $. This entails that $\sup_{T}\left( ^{\omega }%
\mathcal{F}_{T,0}^{t}\right) \subset \bigcup\limits_{i=0}^{t}\sigma \left(
z_{i}\right) $, so that%
\begin{equation*}
\alpha _{k}^{\omega }=\sup_{T}\sup_{t}\alpha \left( ^{\omega }\mathcal{F}%
_{T,0}^{t},^{\omega }\mathcal{F}_{T,t+k}^{\infty }\right) \leq \alpha _{k}%
\text{,}
\end{equation*}%
where $\alpha \left( ^{\omega }\mathcal{F}_{T,0}^{t},^{\omega }\mathcal{F}%
_{T,t+k}^{\infty }\right) $ is the mixing number of order $k$, and $\alpha
_{k}$ is the mixing number of order $k$ associated to the sequence $z_{t}$.
Since $\alpha _{k}=O\left( \rho ^{k}\right) $, $0<\rho <1$, it also follows
that the sequence $\omega _{T,t}$ is strong mixing with geometrically
declining mixing numbers. Note also that $\omega _{T,t}$ has finite moments
up to any order; also, by Assumption \ref{as-moments}\textit{(i)}, it holds
that $E\omega _{T,t}=\left( c_{l,1}+c_{l,2}\right) \varphi _{T}^{-\eta
}L\left( \varphi _{T}\right) \left( 1+o\left( 1\right) \right) $. Letting $%
\overline{\omega }_{T,t}=\omega _{T,t}-E\omega _{T,t}$, equation (5.1) in %
\citet*{rio1995} ensures that%
\begin{equation}
P\left( \max_{1\leq s\leq T}\left\vert \sum_{t=1}^{s}\varphi _{T}^{\eta }%
\overline{\omega }_{T,t}\right\vert >2x_{T}\right) \leq \frac{c_{0}}{%
x_{T}^{2}}T\int_{0}^{1}\left( \alpha _{k}^{\omega }\right) ^{-1}\left( \frac{%
u}{2}\right) Q_{w}^{2}\left( u\right) du,  \label{rio5.1}
\end{equation}%
where%
\begin{equation*}
x_{T}=\sum_{t=1}^{T}E\left( \varphi _{T}^{\eta }\omega _{T,t}\right) ,
\end{equation*}%
and $Q_{w}\left( u\right) $ is the quantile function of $\varphi _{T}^{\eta }%
\overline{\omega }_{T,t}$. Consider now the function $f\left( x\right) =x\ln
\left( 1+x\right) $, and let 
\begin{equation*}
f^{\ast }\left( y\right) =\sup_{x>0}\left( xy-f\left( x\right) \right) 
\end{equation*}%
be its Young dual function (see Appendix A in \citealp*{rio1995}); as $%
x\rightarrow \infty $, $f^{\ast }\left( y\right) =\exp \left( y\right) $.
Then it holds that%
\begin{equation*}
\int_{0}^{1}\left( \alpha _{k}^{\omega }\right) ^{-1}\left( \frac{u}{2}%
\right) Q_{w}^{2}\left( u\right) du\leq \left\Vert \left( \alpha
_{k}^{\omega }\right) ^{-1}\left( U\right) \right\Vert _{f^{\ast
}}\left\Vert \left( \varphi _{T}^{\eta }\overline{\omega }_{T,t}\right)
^{2}\right\Vert _{f},
\end{equation*}%
where $U$ is a random variable with a uniform distribution on $\left[ 0,1%
\right] $ and $\left\Vert X\right\Vert _{f}$ is the Luxemburg norm (\citealp*%
{luxemburg}) of a random variable $X$ with respect to the function $f$, i.e.%
\begin{equation*}
\left\Vert X\right\Vert _{f}=\inf \left\{ c>0:Ef\left( c^{-1}\left\vert
X\right\vert \right) \leq 1\right\} .
\end{equation*}%
Equation (1.29) in \citet*{rio1995} ensures that $\left\Vert \alpha
^{-1}\left( U\right) \right\Vert _{f^{\ast }}<\infty $. Also%
\begin{equation*}
\left\Vert \left( \varphi _{T}^{\eta }\overline{\omega }_{T,t}\right)
^{2}\right\Vert _{f}=\inf \left\{ c>0:Ef\left( c^{-1}\left\vert \varphi
_{T}^{\eta }\overline{\omega }_{T,t}\right\vert ^{2}\right) \leq 1\right\} .
\end{equation*}%
Note that we have 
\begin{equation*}
Ef\left( c^{-1}\left\vert \varphi _{T}^{\eta }\overline{\omega }%
_{T,t}\right\vert ^{2}\right) =c^{-1}\left( \varphi _{T}^{\eta }-1\right)
^{2}\varphi _{T}^{-\eta }\ln \left( 1+c^{-1}\left( \varphi _{T}^{\eta
}-1\right) ^{2}\right) +c^{-1}\left( 1-\varphi _{T}^{-\eta }\right) \ln
\left( 1+\frac{1}{c}\right) ;
\end{equation*}%
the choice $c=\varphi _{T}^{\eta }\ln \left( \varphi _{T}^{2\eta }\right) $
can be shown to correspond to $Ef\left( c^{-1}\left\vert \varphi _{T}^{\eta }%
\overline{\omega }_{T,t}\right\vert ^{2}\right) <1$ after some algebra.
Therefore%
\begin{equation*}
\int_{0}^{1}\left( \alpha _{k}^{\omega }\right) ^{-1}\left( \frac{u}{2}%
\right) Q_{w}^{2}\left( u\right) du=O\left( \varphi _{T}^{\eta }\ln \left(
\varphi _{T}\right) \right) .
\end{equation*}%
Noting that, by definition, $x_{T}=T$, equation (\ref{rio5.1}) yields%
\begin{equation*}
P\left( \max_{1\leq s\leq T}\left\vert \sum_{t=1}^{s}\varphi _{T}^{\eta }%
\overline{\omega }_{T,t}\right\vert >2x_{T}\right) \leq c_{0}\frac{1}{\left(
\ln T\right) ^{1+\epsilon }},
\end{equation*}%
which entails that%
\begin{equation*}
\sum_{T=1}^{\infty }\frac{1}{T}P\left( \max_{1\leq s\leq T}\left\vert
\sum_{t=1}^{s}\varphi _{T}^{\eta }\overline{\omega }_{T,t}\right\vert
>2x_{T}\right) <\infty .
\end{equation*}%
By Lemma \ref{stout}, this entails that%
\begin{equation*}
\lim \sup_{T\rightarrow \infty }\frac{\sum_{t=1}^{T}\varphi _{T}^{\eta }%
\overline{\omega }_{T,t}}{\sum_{t=1}^{T}E\left( \varphi _{T}^{\eta }\omega
_{T,t}\right) }=0\text{ a.s.,}
\end{equation*}%
so that we can write%
\begin{equation*}
\frac{\sum_{t=1}^{T}\varphi _{T}^{\eta }\omega _{T,t}}{\sum_{t=1}^{T}E\left(
\varphi _{T}^{\eta }\omega _{T,t}\right) }=1+o_{a.s.}\left( 1\right) .
\end{equation*}%
Putting all together, (\ref{s00-lambda-min}) obtains.
\end{proof}

\begin{proof}[Proof of Theorem \protect\ref{s11s00}]
Consider (\ref{lambdaj-1}). By Theorem 7 in \citet*{merikoski2004}%
\begin{equation*}
\lambda ^{\left( j\right) }\left( S_{00}^{-1}S_{11}\right) \geq \lambda
^{\left( j\right) }\left( S_{11}\right) \lambda ^{\left( \min \right)
}\left( S_{00}^{-1}\right) =\frac{\lambda ^{\left( j\right) }\left(
S_{11}\right) }{\lambda ^{\left( \max \right) }\left( S_{00}\right) };
\end{equation*}%
combining (\ref{s11-lambda-1}) and (\ref{s00-lambda-max}), we obtain (\ref%
{lambdaj-1}). Turning to (\ref{lambdaj-2}), applying again Theorem 7 in %
\citet*{merikoski2004}, it holds that%
\begin{equation*}
\lambda ^{\left( j\right) }\left( S_{00}^{-1}S_{11}\right) \leq \lambda
^{\left( j\right) }\left( S_{11}\right) \lambda ^{\left( \max \right)
}\left( S_{00}^{-1}\right) =\frac{\lambda ^{\left( j\right) }\left(
S_{11}\right) }{\lambda ^{\left( \min \right) }\left( S_{00}\right) };
\end{equation*}%
thence, (\ref{lambdaj-2}) follows from (\ref{s11-lambda-2}) and (\ref%
{s00-lambda-min}). 
\end{proof}

\begin{proof}[Proof of Theorem \protect\ref{theta}]
The proof follows exactly the same lines as in \citet*{HT16} and we
therefore omit most passages. Under $H_{0}$, it follows from Theorem \ref%
{s11s00} that%
\begin{equation}
P\left( \omega :\;\lim_{T\rightarrow \infty }\exp \left\{ -T^{1-\kappa
-\epsilon }\right\} \phi _{T}^{\left( j\right) }=\infty \right) =1,
\label{as-lim-phi-null-2}
\end{equation}%
for every $j$ and any $\epsilon >0$. Let $\Phi \left( \cdot \right) $\
denote the standard normal distribution, and note that 
\begin{equation*}
E^{\ast }\zeta _{i}^{\left( j\right) }=\Phi \left( u/\phi _{T}^{\left(
j\right) }\right) \;\;\;\mbox{and}\;\;\;E^{\ast }\left( \zeta _{i}^{\left(
j\right) }-E^{\ast }\zeta _{i}^{\left( j\right) }\right) ^{2}=\Phi \left(
u/\phi _{T}^{\left( j\right) }\right) \left[ 1-\Phi \left( u/\phi
_{T}^{\left( j\right) }\right) \right] .
\end{equation*}%
We can write%
\begin{align*}
M^{-1/2}\sum_{i=1}^{M}\left( \zeta _{i}^{\left( j\right) }-\frac{1}{2}%
\right) & =M^{-1/2}\sum_{i=1}^{M}\left[ I\left( \xi _{i}^{\left( j\right)
}\leq 0\right) -\frac{1}{2}\right] +M^{-1/2}\sum_{i=1}^{M}\left( \Phi \left(
u/\phi _{T}^{\left( j\right) }\right) -\frac{1}{2}\right) \\
& +M^{-1/2}\sum_{i=1}^{M}\left[ I\{\left( \xi _{i}^{\left( j\right) }\leq
u/\phi _{T}^{\left( j\right) }\right) -I\left( \xi _{i}^{\left( j\right)
}\leq 0\right) -\left( \Phi \left( u/\phi _{T}^{\left( j\right) }\right) -%
\frac{1}{2}\right) \right] .
\end{align*}%
It holds that%
\begin{align*}
& E^{\ast }\left( M^{-1/2}\sum_{i=1}^{M}\left[ I\left( \xi _{i}^{\left(
j\right) }\leq u/\phi _{T}^{\left( j\right) }\right) -I\left( \xi
_{i}^{\left( j\right) }\leq 0\right) -\left( \Phi \left( u/\phi _{T}^{\left(
j\right) }\right) -\frac{1}{2}\right) \right] \right) ^{2} \\
& =E^{\ast }\left[ I\left( \xi _{1}^{\left( j\right) }\leq u/\phi
_{T}^{\left( j\right) }\right) -I\left( \xi _{1}^{\left( j\right) }\leq
0\right) -\left( \Phi \left( u/\phi _{T}^{\left( j\right) }\right) -\frac{1}{%
2}\right) \right] ^{2},
\end{align*}%
on account of the independence of the $\xi _{i}^{\left( j\right) }$s across $%
i$. Given that 
\begin{align}
E^{\ast }\left( I\left( \xi _{1}^{\left( j\right) }\leq u/\phi _{T}^{\left(
j\right) }\right) -I\left( \xi _{1}^{\left( j\right) }\leq 0\right) \right)
=& \Phi \left( u/\phi _{T}^{\left( j\right) }\right) -\frac{1}{2},
\label{exp} \\
V^{\ast }\left( I\left( \xi _{1}^{\left( j\right) }\leq u/\phi _{T}^{\left(
j\right) }\right) -I\left( \xi _{1}^{\left( j\right) }\leq 0\right) \right)
\leq & \Phi \left( u/\phi _{T}^{\left( j\right) }\right) -\frac{1}{2},
\label{var}
\end{align}%
we have%
\begin{equation*}
E^{\ast }\left[ I\left( \xi _{1}^{\left( j\right) }\leq u/\phi _{T}^{\left(
j\right) }\right) -I\left( \xi _{1}^{\left( j\right) }\leq 0\right) -\left(
\Phi \left( u/\phi _{T}^{\left( j\right) }\right) -\frac{1}{2}\right) \right]
^{2}\leq \Phi \left( u/\phi _{T}^{\left( j\right) }\right) -\frac{1}{2}\leq 
\frac{|u|}{\sqrt{2\pi }\phi _{T}^{\left( j\right) }}<\infty .
\end{equation*}%
Thus%
\begin{equation*}
\int_{U}E^{\ast }\left[ I\left( \xi _{1}^{\left( j\right) }\leq u/\phi
_{T}^{\left( j\right) }\right) -I\left( \xi _{1}^{\left( j\right) }\leq
0\right) -\left( \Phi \left( u/\phi _{T}^{\left( j\right) }\right) -\frac{1}{%
2}\right) \right] ^{2}dF\left( u\right) \leq \int_{U}\frac{|u|}{\sqrt{2\pi }%
\phi _{T}^{\left( j\right) }}dF\left( u\right) .
\end{equation*}%
Also, it follows immediately that%
\begin{equation*}
M^{-1/2}\sum_{i=1}^{M}\left( \Phi \left( u/\phi _{T}^{\left( j\right)
}\right) -\frac{1}{2}\right) \leq \left( \frac{M^{1/2}|u|}{\sqrt{2\pi }\phi
_{T}^{\left( j\right) }}\right) ^{2},
\end{equation*}%
so that%
\begin{equation*}
\int_{U}\left[ M^{-1/2}\sum_{i=1}^{M}\left( \Phi \left( u/\phi _{T}^{\left(
j\right) }\right) -\frac{1}{2}\right) \right] ^{2}dF\left( u\right) \leq
\int_{U}\left( \frac{M^{1/2}|u|}{\sqrt{2\pi }\phi _{T}^{\left( j\right) }}%
\right) ^{2}dF\left( u\right) .
\end{equation*}%
Using Assumption \ref{weight}\textit{(ii)}, (\ref{as-lim-phi-null-2}), (\ref%
{rate-constraint}) and Markov's inequality, it is easy to see that 
\begin{equation*}
\Theta _{T,M}^{\left( j\right) }=\left\{ \frac{2}{\sqrt{M}}\sum_{i=1}^{M}%
\left[ I\left( \xi _{i}^{\left( j\right) }\leq 0\right) -\frac{1}{2}\right]
\right\} ^{2}+o_{P^{\ast }}(1),
\end{equation*}%
and therefore the desired result follows from the Central Limit Theorem.
Under $H_{A}$, Theorem \ref{s11s00} entails that%
\begin{equation}
P\left( \omega :\;\lim_{T\rightarrow \infty }\phi _{T}^{\left( j\right)
}=0\right) =1.  \label{lim-asy-alt}
\end{equation}%
Noting that 
\begin{equation*}
\zeta _{i}^{(j)}-\frac{1}{2}=I\left( \xi _{1}^{\left( j\right) }\leq u/\phi
_{T}^{\left( j\right) }\right) \pm \Phi \left( u/\phi _{T}^{\left( j\right)
}\right) -\frac{1}{2},
\end{equation*}%
and we have 
\begin{equation}
E^{\ast }\left[ M^{-1/2}\sum_{i=1}^{M}\left( I\left( \xi _{1}^{\left(
j\right) }\leq u/\phi _{T}^{\left( j\right) }\right) -\frac{1}{2}\right) %
\right] ^{2}=E^{\ast }\left[ \left( I\left( \xi _{1}^{\left( j\right) }\leq
u/\phi _{T}^{\left( j\right) }\right) -\Phi \left( u/\phi _{T}^{\left(
j\right) }\right) \right) \right] ^{2}+M\left[ \Phi \left( u/\phi
_{T}^{\left( j\right) }\right) -\frac{1}{2}\right] ^{2}.  \notag
\end{equation}%
Since $E^{\ast }\left[ \left( I\left( \xi _{1}^{\left( j\right) }\leq u/\phi
_{T}^{\left( j\right) }\right) -\Phi \left( u/\phi _{T}^{\left( j\right)
}\right) \right) \right] ^{2}<\infty $, by the Markov inequality it follows
that 
\begin{equation*}
M^{-1/2}\sum_{i=1}^{M}\left( I\left( \xi _{1}^{\left( j\right) }\leq u/\phi
_{T}^{\left( j\right) }\right) -\Phi \left( u/\phi _{T}^{\left( j\right)
}\right) \right) =O_{P^{\ast }}(1).
\end{equation*}%
The desired result now follows from (\ref{lim-asy-alt}).

\end{proof}

\begin{proof}[Proof of Theorem \protect\ref{family}]
The proof is the same as the proof of Theorem 3 in \citet*{trapani17}, and
therefore we only report the main passages. Consider the events $\left\{ 
\widehat{m}=N-j\right\} $, $0\leq j\leq N$, and recall that the test
statistics $\Theta _{T,M}^{\left( j\right) }$ are independent across $j$
conditional on the sample. Thus, for all $j\leq N-m-1$, we have $P^{\ast
}\left( \widehat{m}=j\right) =\alpha \left( 1-\alpha \right) ^{j}$, where we
omit dependence on $N$ and $T$ from the nominal level $\alpha $ for the sake
of the notation. Letting $\pi $ be the power of the test, it easy to see
that $P^{\ast }\left( \widehat{m}=j\right) =\left( 1-\alpha \right)
^{N-m}\pi \left( 1-\pi \right) ^{j-N+m}$, whenever $j\geq N-m+1$. Thus%
\begin{equation*}
P^{\ast }\left( \widehat{m}=m\right) =1-\sum_{j\neq N-m}P^{\ast }\left( 
\widehat{m}=N-j\right) =\left( 1-\alpha \right) ^{N-m}\left[ \pi +\left(
1-\pi \right) ^{m+1}\right] .
\end{equation*}%
Note now that%
\begin{eqnarray*}
P^{\ast }\left( \Theta _{T,M}^{\left( j\right) }\leq c_{\alpha }\right) 
&=&P^{\ast }\left( \left\vert Z+\frac{M^{1/2}}{2}\right\vert ^{2}\leq
c_{\alpha }\right) +o_{P^{\ast }}\left( 1\right)  \\
&\leq &P^{\ast }\left( \left\vert Z\right\vert \leq c_{\alpha }^{1/2}-\frac{%
M^{1/2}}{2}\right) +o_{P^{\ast }}\left( 1\right) ,
\end{eqnarray*}%
which drifts to zero on account of the fact that $c_{\alpha }=o\left(
M\right) $. Thus, as $\min \left( T,M\right) \rightarrow \infty $ we have $%
\pi \overset{P^{\ast }}{\rightarrow }1$; hence, it follows that 
\begin{equation}
\lim_{\min \left( T,M\right) \rightarrow \infty }P^{\ast }\left( \widehat{m}%
=m\right) =\left( 1-\alpha \right) ^{N-m},  \label{procedure-size}
\end{equation}%
for almost all realisations of $\left\{ \varepsilon _{t},0<t<\infty \right\} 
$. The desired result follows by specialising (\ref{procedure-size}) to the
case where $\lim_{\min \left( T,M\right) \rightarrow \infty }\alpha =0$. 
\end{proof}

\begin{proof}[Proof of Corollary \protect\ref{constant}]
The result follows from the theory developed in \citet*{prhansen}. Indeed,
using the results in Section 3 in his paper, it follows that, under (\ref%
{var-const}), equation (\ref{bn}) still holds with%
\begin{equation*}
y_{t}=C\sum_{s=1}^{t}\varepsilon _{s}+C^{\ast }\left( L\right) \varepsilon
_{t},
\end{equation*}%
and therefore the conclusions of Theorem \ref{s11-lambda} remain unaltered.
Similarly, (\ref{ma}) also holds, and therefore Theorem \ref{s00-lambda}
also holds. The final results now follows immediately. 
\end{proof}

\begin{proof}[Proof of Corollary \protect\ref{heterosk}]
The proof follows from exactly the same passages (with one minor
modification, which we describe at the end of this proof) as in the proof of
Theorem \ref{theta}. In turn, this follows immediately if (\ref{s11-lambda-1}%
)-(\ref{s00-lambda-min}) hold. Equation (\ref{s00-lambda-max}) has already
been shown in \citet*{trapaniLIL}. Similarly, (\ref{s11-lambda-2}) and (\ref%
{s00-lambda-min}) can be shown by exactly the same passages as in the proofs
of Lemma \ref{ut-2} and Proposition \ref{s00-lambda}, with minor if tedious
complications arising from the presence of $h\left( \frac{t}{T}\right) $.

Proving (\ref{s11-lambda-1})\ requires more work; an a.s. result is
predicated on extending the paper by \citet*{jain}\ to the case of a
heterogeneous \textit{i.i.d.} sequence. This is entirely feasible (the key
requirement is independence, not homogeneity), but in order for the proof to
be transparent, the whole original paper would have to be replicated
allowing for the presence of $h\left( \frac{t}{T}\right) $. In order to
present a more transparent and self-contained argument, we instead begin by
showing a \textquotedblleft weak\textquotedblright , in probability, result,
which replaces Lemma \ref{donsker}. We show that, for all $b\in 
\mathbb{R}
^{N}$, it holds that%
\begin{equation}
\lim_{\delta \rightarrow 0^{+}}\lim \inf_{T\rightarrow \infty }P\left( \frac{%
1}{T^{1+2/\eta }}\sum_{t=1}^{T}\left( \sum_{i=1}^{t}b^{\prime }\varepsilon
_{i}\right) ^{2}>\delta \right) =1.  \label{infprob}
\end{equation}%
Note first that Assumption \ref{as-moments} entails that%
\begin{equation}
\frac{1}{T^{1/\eta }}\sum_{t=1}^{\left\lfloor Tr\right\rfloor }b^{\prime
}u_{t}\overset{w}{\underset{J_{1}}{\rightarrow }}Z_{\eta }\left( r\right) ,
\label{z-alpha}
\end{equation}%
for $r\in \left[ 0,1\right] $, where as usual \textquotedblleft $\overset{w}{%
\underset{J_{1}}{\rightarrow }}$\textquotedblright\ denotes weak convergence
in the space $D\left[ 0,1\right] $ equipped with the $J_{1}$-topology, and $%
Z_{\eta }\left( r\right) $ is a symmetric stable process of index $\eta $.
Consider 
\begin{equation}
\frac{1}{T^{1/\eta }}\sum_{t=1}^{\left\lfloor Tr\right\rfloor }b^{\prime
}\varepsilon _{t}  \label{maejima}
\end{equation}%
and note that the norming sequence in (\ref{z-alpha}) and (\ref{maejima}) is
exactly the same. This, and Assumption \ref{function}, entail that
Assumptions B and C on p. 165 in \citet*{maejima} are satisfied. Thus, it
holds that%
\begin{equation}
\frac{1}{T^{1/\eta }}\sum_{t=1}^{\left\lfloor Tr\right\rfloor }b^{\prime
}\varepsilon _{t}\overset{w}{\underset{J_{1}}{\rightarrow }}%
\int_{0}^{r}h\left( u\right) dZ_{\eta }\left( u\right) \equiv Y_{\eta
,b}\left( r\right) ,  \label{maejima2}
\end{equation}%
where we note that Assumption A in \citet*{maejima} holds immediately
because the norming sequence in (\ref{z-alpha}) and (\ref{maejima2}) is
exactly the same. By construction, $Y_{\eta ,b}\left( r\right) $ is a
(well-defined) symmetric stable process of index $\eta $ - the details are
in Chapter 3 in \citet*{taqqu}, see in particular Property 3.2.1. By the
Continuous Mapping Theorem%
\begin{equation}
\frac{1}{T^{1+2/\eta }}\sum_{t=1}^{T}\left( \sum_{i=1}^{t}b^{\prime
}\varepsilon _{i}\right) ^{2}\overset{w}{\underset{J_{1}}{\rightarrow }}%
\int_{0}^{1}Y_{\eta ,b}^{2}\left( r\right) dr.  \label{fclt2}
\end{equation}%
As mentioned above, the process $Y_{\alpha ,b}\left( r\right) $ is symmetric 
$\alpha -$stable; thus, using Lemma 2.2 in \citet*{shilower}, it is
immediate to see that 
\begin{equation}
\lim_{x\rightarrow 0^{+}}P\left( \int_{0}^{1}Y_{\eta ,b}^{2}\left( r\right)
dr>x\right) =1.  \label{shi}
\end{equation}%
Using (\ref{fclt2}) and (\ref{shi})%
\begin{equation*}
\lim_{\delta \rightarrow 0^{+}}\lim \inf_{T\rightarrow \infty }P\left( \frac{%
1}{T^{1+2/\eta }}\sum_{t=1}^{T}\left( \sum_{i=1}^{t}b^{\prime }\varepsilon
_{i}\right) ^{2}>\delta \right) \geq \lim_{\delta \rightarrow 0^{+}}P\left(
\int_{0}^{1}Y_{\eta ,b}^{2}\left( r\right) dr>\delta \right) =1,
\end{equation*}%
proving (\ref{infprob}). Now, by the same passages as in Lemma \ref{b-xt},
it holds that 
\begin{equation*}
\lim_{\delta \rightarrow 0^{+}}\lim \inf_{T\rightarrow \infty }P\left( \frac{%
1}{T^{1+2/\eta }}\lambda ^{\left( j\right) }\left(
P\sum_{t=1}^{T}x_{t}x_{t}^{\prime }P^{\prime }\right) >\delta \right) =1,
\end{equation*}%
for $j\leq m$. Combining this with (\ref{s00-lambda-max}), and following the
same logic as in the proof of (\ref{lambdaj-1}), it ultimately follows that%
\begin{equation}
\lim_{\delta \rightarrow 0^{+}}\lim \inf_{T\rightarrow \infty }P\left( \frac{%
\left( \ln T\right) ^{2\left( 2+\epsilon \right) /p}}{T^{1-\epsilon ^{\prime
}}}\lambda ^{\left( j\right) }\left( S_{00}^{-1}S_{11}\right) >\delta
\right) =1,  \label{eig-lb-p}
\end{equation}%
for $j\leq m$, where $\epsilon $, $\epsilon ^{\prime }$ and $p$ are defined
in (\ref{lambdaj-1}).

Now we are ready to prove the theorem. Equation (\ref{alt-convergence}) for
this case follows immediately from (\ref{lambdaj-2}), which is implied by (%
\ref{s11-lambda-2})-(\ref{s00-lambda-min}), using exactly the same passages
as in the original proof. As far as (\ref{size-het}) is concerned, note
that, by continuity, (\ref{eig-lb-p}) entails%
\begin{equation*}
\lim \inf_{T\rightarrow \infty }P\left( \exp \left\{ -T^{1-\kappa -\epsilon
}\right\} \phi _{T}^{\left( j\right) }=\infty \right) =1,
\end{equation*}%
for every $\epsilon >0$. The desired result now follows by applying the
proof Theorem \ref{theta} - see also the proof of Theorem C1 in \citet*{HT16}%
. 
\end{proof}

\begin{proof}[Proof of Theorem \protect\ref{bai04}]
We present the proof for the homoskedastic case. In the heteroskedastic
case, the proof follows from exactly the same arguments, and therefore we
omit them to save space. \newline
In light of the consistency of $\widehat{m}$, we prove the result assuming
that $m$ is known. Consider the $N\times T$ matrix $Y=\left[ y_{1}|...|y_{T}%
\right] $; we can write (\ref{trends}) in matrix form as%
\begin{equation*}
Y=PX+U,
\end{equation*}%
where $X=\left[ x_{1}|...|x_{T}\right] $ and $U=\left[ u_{1}|...|u_{T}\right]
$. Then $S_{11}=YY^{\prime }$ and we have the well-known identity%
\begin{equation*}
YY^{\prime }\widehat{P}=\widehat{P}V,
\end{equation*}%
where $V$ is an $m\times m$ diagonal matrix whose entries are $m$ largest
the eigenvalues of $YY^{\prime }$. Then it holds that%
\begin{eqnarray*}
\widehat{P} &=&\left( PX+U\right) \left( PX+U\right) ^{\prime }\widehat{P}%
V^{-1} \\
&=&PXX^{\prime }P^{\prime }\widehat{P}V^{-1}+PXU^{\prime }\widehat{P}%
V^{-1}+UX^{\prime }P^{\prime }\widehat{P}V^{-1}+UU^{\prime }\widehat{P}%
V^{-1}.
\end{eqnarray*}%
Let $XX^{\prime }P^{\prime }\widehat{P}V^{-1}=H$; it is easy to see that $%
\left\Vert H\right\Vert =O_{P}\left( 1\right) $. We have 
\begin{equation*}
\widehat{P}-PH=PXU^{\prime }\widehat{P}V^{-1}+UX^{\prime }P^{\prime }%
\widehat{P}V^{-1}+UU^{\prime }\widehat{P}V^{-1}=I+II+III.
\end{equation*}%
Consider $I$%
\begin{equation*}
\left\Vert PXU^{\prime }\widehat{P}V^{-1}\right\Vert \leq \left\Vert
P\right\Vert \left\Vert XU^{\prime }\right\Vert \left\Vert \widehat{P}%
\right\Vert \left\Vert V^{-1}\right\Vert ;
\end{equation*}%
clearly, $\left\Vert P\right\Vert $ and $\left\Vert \widehat{P}\right\Vert $
are both bounded. Also, by the FCLT (\citealp*{paulauskas}), $\left\Vert
V^{-1}\right\Vert =O_{P}\left( T^{-1-2/\eta }\right) $. Finally, using the
calculations in the proof of Lemma \ref{cross-prod}, it is easy to see that $%
\left\Vert XU^{\prime }\right\Vert =O_{P}\left( T^{2/\eta }\right) $. Thus $%
I=O_{P}\left( T^{-1}\right) $, and the same holds for $II$. Considering $III$%
, essentially the same arguments and the passages in the proof of Lemma \ref%
{ut-2} yield $III=O_{P}\left( T^{-1}\right) $. Putting all together, (\ref%
{loadings}) follows. Also%
\begin{equation*}
\left\Vert \widehat{x}_{t}-H^{-1}x_{t}\right\Vert \leq \left\Vert \widehat{P}%
^{\prime }\left( P-\widehat{P}H^{-1}\right) x_{t}\right\Vert +\left\Vert 
\widehat{P}^{\prime }u_{t}\right\Vert =I+II.
\end{equation*}%
Note that $E\left\Vert x_{t}\right\Vert ^{\eta }\leq
\sum_{j=1}^{t}E\left\Vert \varepsilon _{j}\right\Vert ^{\eta }=c_{0}t$ for
all $0<\eta \leq 2$, using the von Bahr-Esseen inequality (\citealp*{bahr})
when $1<\eta \leq 2$. Hence, standard arguments entail $\left\Vert
x_{t}\right\Vert =O_{P}\left( t^{1/\eta }\right) $. Thus, recalling (\ref%
{loadings}), it follows that $I=O_{P}\left( T^{-1+1/\eta }\right) $. Also,
recalling that $\left\Vert \widehat{P}^{\prime }\widehat{P}\right\Vert =1$,
it is easy to see that $II\leq \left\Vert \widehat{P}^{\prime }\right\Vert
\left\Vert u_{t}\right\Vert =O_{P}\left( 1\right) $. Putting all together,(%
\ref{factors}) obtains. Finally, note that $\max_{1\leq t\leq T}\left\Vert
u_{t}\right\Vert =O_{P}\left( T^{1/\eta }\right) $ by standard arguments.
Note also that, when $\eta \leq 1$%
\begin{equation*}
\max_{1\leq t\leq T}\left\Vert x_{t}\right\Vert \leq
\sum_{t=1}^{T}\left\Vert x_{t}\right\Vert \leq \left(
\sum_{t=1}^{T}\left\Vert x_{t}\right\Vert ^{\eta }\right) ^{1/\eta },
\end{equation*}%
by the $L_{p}$-norm inequality. Thus%
\begin{equation*}
E\left( \sum_{t=1}^{T}\left\Vert x_{t}\right\Vert ^{\eta }\right) ^{1/\eta
}\leq c_{0}T^{-1+1/\eta }E\left( \sum_{t=1}^{T}\left\Vert x_{t}\right\Vert
^{\eta }\right) \leq c_{1}T^{1+1/\eta },
\end{equation*}%
whence $\max_{1\leq t\leq T}\left\Vert x_{t}\right\Vert =O_{P}\left(
T^{1+1/\eta }\right) $. When $\eta >1$, we still use%
\begin{equation*}
\max_{1\leq t\leq T}\left\Vert x_{t}\right\Vert \leq
\sum_{t=1}^{T}\left\Vert x_{t}\right\Vert ,
\end{equation*}%
and note that%
\begin{equation*}
E\left( \sum_{t=1}^{T}\left\Vert x_{t}\right\Vert \right) ^{\eta }\leq
c_{0}T^{\eta -1}E\sum_{t=1}^{T}\left\Vert x_{t}\right\Vert ^{\eta }\leq
c_{1}T^{1+1/\eta }.
\end{equation*}%
Combining the above with (\ref{loadings}), (\ref{unif-x}) follows.%
\end{proof}

\begin{proof}[Proof of Theorem \protect\ref{family2}]

The proof is fairly similar to that of Theorem \ref{family}, and we therefore report
only the main passages. Consider the case $0<m<N$; it is easy to see that%
\begin{equation*}
P^{\ast }\left( \widetilde{m}=p\right) =\pi ^{N-p}\left( 1-\pi \right) ,
\end{equation*}%
for all $m+1\leq p\leq N$. Also, it can be shown, again by the conditional
independence of the test statistics across $p$, that%
\begin{equation*}
P^{\ast }\left( \widetilde{m}=p\right) =\pi ^{N-m}\alpha ^{m-p}\left(
1-\alpha \right) ,
\end{equation*}%
for $1\leq p\leq m-1$, and 
\begin{equation*}
P^{\ast }\left( \widetilde{m}=0\right) =\pi ^{N-m}\alpha ^{m}.
\end{equation*}%
Hence we have%
\begin{equation*}
P^{\ast }\left( \widetilde{m}=m\right) =1-\sum_{p\neq m}P^{\ast }\left( 
\widetilde{m}=p\right) =\pi ^{N-m}\left( 1-\alpha \right) ,
\end{equation*}%
which entails that, as $\alpha \rightarrow 0$, $P^{\ast }\left( \widetilde{m}%
=m\right) $ tends to one for almost all realisations of $\left\{ \varepsilon
_{t},1\leq t\leq T\right\} $. When $m=N$, by definition%
\begin{equation*}
P^{\ast }\left( \widetilde{m}=N\right) =1-\alpha ,
\end{equation*}%
whence again the desired result. Finally, when $m=0$, we have%
\begin{equation*}
P^{\ast }\left( \widetilde{m}=0\right) =\pi ^{N},
\end{equation*}%
and the desired result again obtains immediately. 
\end{proof}

\newpage

\renewcommand{\thesection}{S} 
\renewcommand{\thelemma}{S.\arabic{lemma}} \renewcommand{\thetheorem}{A%
\arabic{theorem}} \renewcommand{\theequation}{S.\arabic{equation}}

\textwidth 20.5cm \oddsidemargin 2cm \evensidemargin 2cm \textheight 20.3cm %
\topmargin 1cm

\begin{landscape}
\begin{table*}[h!]
\caption{Comparison between BCT and BIC - \textit{i.i.d.} case, Part 1}
\label{tab:Table1a}
\resizebox{\textwidth}{!}{%



}

\begin{tablenotes}
  
\tiny
      \item Each entry in the table contains the average estimated rank across replications; the standard deviation (in round brackets); and the percentage of times the rank is estimated incorrectly (in square brackets).
      \item We report our estimator of $M$, denoted as \textit{BCT} and the number of common trends (implicitly) estimated by BIC, which is always the best information criterion in all cases. 
   
\end{tablenotes} 

\end{table*}

\newpage

\begin{table*}[h] 
\caption{Comparison between BCT and BIC - \textit{i.i.d.} case, Part 2} %
\label{tab:Table1b} {\resizebox{\textwidth}{!}{

%

}

\begin{tablenotes}
  
\tiny
      \item Each entry in the table contains the average estimated rank across replications; the standard deviation (in round brackets); and the percentage of times the rank is estimated incorrectly (in square brackets). 
   
\end{tablenotes} 
}
\end{table*} 
\end{landscape}

\end{document}